\documentclass[11pt]{article}
\usepackage{float}
\usepackage{authblk}
\usepackage{parskip}
\usepackage[normalem]{ulem} 
\usepackage{fullpage}
\usepackage{url}
\usepackage{verbatim}
\usepackage{amstext}
\usepackage{amsmath,amssymb}
\usepackage{amsfonts}
\usepackage{mathtools}
\usepackage{float}
\usepackage{tikz}
\usetikzlibrary {arrows.meta} 
\usepackage{subcaption} 
\usepackage{color}
\usepackage{amsthm}
\usepackage{dsfont}
\usepackage[continuous]{causets}
\usepackage{amssymb}
\usepackage{amsmath}
\usepackage{causets}
\usetikzlibrary {graphs,quotes}
\usetikzlibrary{shapes.geometric}
\newtheorem{theorem}{Theorem}[section]
\newtheorem*{definition}{Definition}
\newtheorem{lemma}[theorem]{Lemma} 
\newtheorem{corollary}[theorem]{Corollary}

\newcommand{\bC}{\mathbb C}
\newcommand{\cH}{\mathcal H}

\newcommand{\hA}{\hat { A}}
\newcommand{\ket}[1]{| #1 \rangle}
\newcommand{\bra}[1]{\langle #1|}
\newcommand{\braket}[2]{\langle #1| #2\rangle}
\newcommand{\cyl}[1]{{\mathrm{cyl}({#1})}}
\newcommand{\tc}{{\tilde c}}
\newcommand{\tcn}{{\tc}_n}
\newcommand{\cP}{\mathcal P} 
\newcommand{\tcP}{\tilde \cP}
\newcommand{\mA}{\mathfrak A}
\newcommand{\mS}{\mathfrak S} 
\newcommand{\cZ}{\mathcal Z}
 
\newcommand{\cT}{\mathcal T}

\newcommand{\hG}{\hat G}
 
\newcommand{\hQ}{\hat Q}

\newcommand{\cR}{\mathcal R}
\newcommand{\hB}{\hat B}
\newcommand{\hS}{\hat S}

\newcommand{\rt}{\rightarrow}

\renewcommand{\a}{\alpha}

\renewcommand{\a }{\alpha}

\def\bit{\begin{itemize}}
\def\eit{\end{itemize}}
\usepackage{enumerate}

\usetikzlibrary{fit,shapes.geometric}
  \colorlet{future colour}{green!50!black}

  \tikzset{prob arrow/.style={line width=0.5pt}}
  \def\defCevents#1#2#3{\xdef\Ea{#1}\xdef\Eb{#2}\xdef\Ec{#3}}
  \def\defDevents#1#2#3#4{\expandafter\defCevents#1#2#3\xdef\Ed{#4}}
  
  \def\defEvents#1#2#3#4#5{\expandafter\defDevents#1#2#3#4\xdef\Ee{#5}}
  \newcommand*{\semiopaque}[1]{
 \begin{scope}[transparency group, opacity=0.5]
  #1
  \end{scope}
  }
\newcommand*{\drawprobarrow}[4][]{
  \draw[prob arrow] (#2)-- node[sloped,midway, below, #1] {$#3$} (#4);
}
\newcommand*{\drawprobleft}[4][]{
  \draw[prob arrow] (#2)-- node[above left, below, midway, #1] {$#3$} (#4);
}
\newcommand*{\drawprobright}[4][]{
  \draw[prob arrow] (#2)-- node[above right,below, midway, #1] {$#3$} (#4);
}
\newcommand*{\drawprobrighttop}[4][]{
  \draw[prob arrow] (#2)-- node[right=0.2,midway, #1] {$#3$} (#4);
}
\newcommand*{\drawproblefttop}[4][]{
  \draw[prob arrow] (#2)-- node[left=0.2,midway, #1] {$#3$} (#4);
  }
  \newcommand{\drawlegendsymbol}{
  \draw[prob arrow] (0,0)-- +(10pt, 6pt);
  }
  \newcommand{\legendsymbol}[1]{
  \begin{tikzpicture}
  \ifnum#1=0
  \drawlegendsymbol
  \else
  \semiopaque{\drawlegendsymbol}
  \fi
  \end{tikzpicture}
 }
 
\newcommand{\fA}{\mathfrak A}  
\newcommand{\fS}{\mathfrak S}
\newcommand{\cA}{\mathcal A}  
\newcommand{\CSG}{\bC S G}
\newcommand{\bN}{\mathbb N}
\newcommand{\chain}{\mathbf c}
\newcommand{\achain}{a}
\newcommand{\bchain}{b}

\newcommand{\cI}{\mathcal I}
\newcommand{\cJ}{\mathcal J}
\newcommand{\cK}{\mathcal K} 
\newcommand{\Past}{\mathrm{Past}}
\newcommand{\Fut}{\mathrm{Fut}}
\newcommand{\IPast}{\mathrm{IPast}}
\newcommand{\IFut}{\mathrm{IFut}} 
\newcommand{\gmu}{\mu_{\mathbf{g}}}
\newcommand{\pre}{\mathrm{Prec}}
\newcommand{\spec}{\mathrm{Sp}}
\newcommand{\cspec}{\mathrm{CSpec}}
\newcommand{\setzeta}{\Gamma}
\newcommand{\re}{\mathbb R}
\newcommand{\hhT}{\hat A}
\newcommand{\hT}{\widehat T}
\newcommand{\hTi}{\widehat{\mathbf{T}}}
\newcommand{\cF}{\mathcal F}
\newcommand{\tcF}{\widetilde{\mathcal F}}
\newcommand{\bcJ}{\bar\cJ}

\newcommand{\mm}{\mathbb{M}}
\newcommand{\bmm}{{\mathbf m}}
 
\newcommand{\nin}{\not \in}
\newcommand{\cG}{\mathcal G}
\newcommand{\cB}{\mathcal B}
\newcommand{\cQ}{\mathcal Q}
\newcommand{\id}{\mathbb I}

\newcommand{\h}{\hat} 
\newcommand{\atom}{\mathbb G}
\newcommand{\tS}{\tilde S} 

\date{}
\title{Implementing Bell causality  in  Quantum Sequential Growth} 
\author{Ritesh Srivastava, Sumati Surya \\ Raman Research Institute,
  CV Raman Avenue, \\ Sadashivanagar, Bengaluru, 560080} 

\begin{document}
\maketitle
\begin{abstract}
We explore different  implementations of the quantum Bell causality (QBC)  condition 
in  the quantum sequential growth (QSG) dynamics of causal set quantum
gravity, for 
non-commuting transition operators. Assuming a non-singular dynamics we show that for the
two most natural choices of operator orderings for the  QBC, the
transition operator algebra $\cA$ reduces to a 
commutative one. As a third choice, we take the operator ordering to
depend on the size of the precursor set. We find several new
commutation relations which further constrain $\cA$ but 
do not imply commutativity. On the other hand, if any of the
generators of the  ``antichain 
subalgebra''  $\cQ$  belongs to its center, then this implies
commutativity of $\cA$.  The complexity of the algebra prevents us
from obtaining a general form for the transition
operators, which hinders computability. In an attempt to construct 
the simplest non-trivial  $d=2$ representation of $\cA$, we find  that
a Pauli matrix representation of the generators of $\cQ$ leads to
inconsistencies, implying that if a non-trivial representation exists,
it must be  higher dimensional. 
Our work can be viewed as a first step
towards finding a 
non-commutative realisation of QSG. 
\end{abstract}   

\thispagestyle{empty}
 \setcounter{page}{1}
 \section{Introduction}

In the causal set approach to quantum gravity, the  spacetime continuum is replaced by a locally finite partially
  ordered set or causal set, with the continuum emerging as an
  approximation in a 
  very specific, well defined sense \cite{blms,lr,book}.  At the
  coarse-grained level, the dynamics is described by the discrete path
  integral or path-sum, whose measure is determined by the discrete
  Einstein-Hilbert or Benincasa-Dowker-Glaser(BDG)  action
  \cite{bd,dg,glaser}, where the sum is taken over the space of all
  causal sets.  The BDG action has been shown to
  suppress the super-exponential entropic dominance from
 the Kleitmann-Rothschild (KR) and
  other layered orders \cite{kr,dharone,pst} which are
  non-continuumlike \cite{lc,ams,ccsone,ccstwo}. While this does not
  imply the emergence of the continuum, other studies on statistical
  dynamics suggest the existence of a ubiquitous continuum phase
  \cite{2dqg,hh,fss,2dcyl}. 

At its foundation,  however, causal set theory (CST)  replaces   the continuum spacetime geometry
with ``order'' and ``number''. Hence  a natural question is whether the continuum
 inspired dynamics of the path-sum may have more order
 theoretic origins. The Rideout-Sorkin (RS)  sequential growth paradigm 
 offers the possibility of such an alternative, and its
 classical dynamics has been well studied 
 \cite{csgone,csgtwo,observables}.  In this paradigm, a causal
 set is grown element by element, starting with a single element, such
 that the new element at every {\sl stage} $n \in \bN$  lies either to the future of an
 existing element or is unrelated to it\footnote{See
   \cite{bdz} for generalisations to include past and future
   growth dynamics.}.  This ``internal temporality'' ensures that
the dynamics is intrinsically  causal, while the growth to the
 future means that all the causal sets thus generated are {\sl past
   finite}, namely, every element has a finite past.  Classically, the dynamics
 is defined by transition probabilities, giving rise to a stochastic
 triple or measure space $(\Omega, \fA, \mu)$ where $\Omega$ is the
 set of all countable past finite causal sets, $\fA$ is the event algebra
 generated by finite causal sets, and $\mu$ is the classical
 probability measure on $\fA$.  Apart from internal temporality, the
 Rideout-Sorkin   classical sequential growth
 models (CSG) satisfy (i) the Markov sum rule (MSR) for probabilities (ii)
 path independence, or general covariance (GC) and, of most interest to us
 in this work,   (iii) spectator independence, or Bell Causality
 (BC). As shown in \cite{csgone} the resulting dynamics is startlingly
 simple, the most general case being determined by a single parameter
 at every stage, even when some of the probabilities are allowed to
 vanish \cite{rv,dsobs,dzbreaktwo}.

 As generated,  the causal sets in
 $\Omega$ carry labels, with each element being labelled by the stage
 at which it is created. Any covariant or label invariant statement 
 however, must be made on the algebra of unlabelled causal sets. Since
 the classical measure extends from  $\fA$  to its sigma algebra
 $\fS$, one can construct appropriate quotients of $\Omega$ to define
 a covariant subalgebra \cite{observables,dsobs}.  The CSG
 models typically overcome the super-exponential entropy of the 
 KR-type layered  orders and produce more cosmological-like causal sets \cite{csgrg}. 

 As pleasing as the CSG models are, they are  nevertheless classical,
 and must be replaced by more general quantum sequential growth (QSG)  models. Sorkin's quantum measure
 formulation \cite{qmeasureone,qmeasuretwo,dowgha} replaces classical
 probability in a  stochastic 
 triple with a quantum measure, or decoherence functional $D: \fA
 \times \fA \rightarrow \bC$, which is (i) Biadditive (ii) Hermitian
 and (iii) Strongly positive (see Section \ref{qsg.sec} for details). 
 The third condition is not strictly required for a quantum theory, but by including it one
 can replace $D(.,.)$ by a {\sl vector measure} $\ket{.}$ defined on
 a {\sl histories hilbert space} $\cH$ itself constructed from $\fA$ and
 $D(.,.)$ \cite{histories}. We will henceforth work with this latter structure.
In \cite{djs,sz}, complex sequential growth ($\CSG$) models with 
$\cH \simeq \bC$ were constructed,  and it was shown that  
$\ket{.}$ extends to $\fS$  when certain convergence conditions are
met, which in turn gives rise to a class of fully covariant quantum
dynamics. The computability of these models rests on the fact that
these amplitudes   form a commutative algebra. Thus the conditions
(i)-(iii) of CSG, namely MSR, GC
and BC, can used ``as is'', with the dynamics being determined
by a single {\it complex} parameter at every stage. 

Restricting to $\cH \simeq \bC$ however means that $D(.,.)$ 
is limited to the simple product form $ D(\alpha,\beta)=A^*(\alpha)
A(\beta)$, where $A:\fA \rightarrow \bC$.  This is clearly not the
most general form for $D(.,.)$ even for simple quantum systems;
unitary dynamics for a quantum particle for example requires that $ D(\gamma_1,\gamma_2)=A^*(\gamma_1)
A(\gamma_2) \delta (\gamma_1(T)-\gamma_2(T))$, where $T$ is a
truncation time. One imagines that quantum gravity should possess an
even  richer set of possibilities, which requires  $\dim(\cH) >1$.  The $\CSG$ models are a
subset of a broader class of quantum sequential growth (QSG)  models  
in which the transition amplitudes (valued in $\bC$) are replaced by transition
operators. While the MSR and the GC generalise quite easily, there are
operator ordering ambiguities  in the 
implementation of the quantum BC (QBC) condition. 

In this work we explore three possible operator ordering
choices for the QBC while assuming that each of the transition operators 
are non-singular. For our first two, and arguably most natural  choices,
we find that the transition operator algebra $\cA$ reduces to a
commutative one thus leading us back to the $\CSG$ models. Our
construction uses  properties of the ``antichain subalgebra'' $\cQ$, generated from
the transition operators $\hQ_n$ from the $n$-element
antichain $\achain_n$ to the  $(n+1)$-element
antichain $\achain_{n+1}$.  For the
third choice, we use the size of the precursor set of the
new element to determine the operator ordering.  The QBC condition
is therefore not symmetric and an additional relation is required  when the sizes of this set are
equal.  We find several new commutation
relations in $\cQ$, but these are insufficient to prove commutativity.
However, if  any one of the generators of $\cQ$ is assumed to assumed
to belong to the center of $\cQ$, then this implies that  $\cA$ is commutative. Without this
assumption however, the complexity of the algebraic
relations prevent us from finding general form of  the transition
operator.  In an attempt to realise $\cA$ we  construct
a $d=2$ representation of the $\cQ$   in terms of  Pauli
matrices but find that it cannot be realised as a subalgebra of $cA$.
This suggests that any non-trivial representation of $\cA$, if it
exists, must be of higher dimensions.

In Section \ref{prelims.ssec} we introduce the Sequential Growth
Paradigm, along with various definitions that are used later in the
paper. The growth generates a labelled poset $\cP$ of finite element causal
sets, which is a tree and the associated unlabelled poset $\cP'$ which
contains loops.  We then give a kinematic  definition of transitions
that are {\sl Bell partners}, those that are {\sl Bell pairs}, and
{\sl Bell families}  of Bell pairs. The growth gives rise to an  event algebra $\fA$ generated
by  {\sl cylinder
sets} $\cyl{c_n}$ associated with every finite element  labelled causal set $c_n$
in $\cP$.  The set of $\cyl{c_n}$ for a given $n$ provide a disjoint partition of
$\fA$. 

In Section
\ref{sgd.ssec} we introduce the general sequential growth dynamics,
for a {\sl generalised measure} which is finitely additive. We also introduce the 
generalised requirements  of MSR, GC and as well as  BC. Using a
non-disjoint partition of cylinder sets, we find a useful identity for
the generalised measure.  In Section \ref{csg.ssec} we briefly review
the results of the RS CSG models,  highlighting  the key elements which we
will use as templates for constructing QSG models.

We finally turn to the QSG models in Section \ref{qsg.sec}. We begin
with a recap of the quantum vector measure, and in Section
\ref{qsgd.ssec}  we define the QSG models most generally. Section
\ref{qbc.ssec} contains our main results. We begin by considering a very general implementation
of QBC from which we can prove two important Lemmas. We then focus on three implementations: (i) Time
Ordered Bell Causality (TOBC) in Section \ref{tobc.ssec},  (ii) Non-Time
Ordered Bell Causality (NTOBC) in Section \ref{ntobc.ssec} and finally
(iii) Causal Past Ordered Bell Causality (CPOBC) in Section
\ref{cpobc.ssec}. We end with a discussion of our results in Section
\ref{discussion.sec}. Appendix \ref{appendix.app} contains details of the 
calculations using the Pauli matrix algebra.

\section{The Sequential Growth Paradigm}
\label{secone.sec}

\subsection{Preliminaries}
\label{prelims.ssec} 

{A {\sl causal set}   $c$  is a locally 
  finite partially  ordered set,  i.e., a set with the partial order
  $\prec$ which $\forall \, \, e,e',e'' \in c$ satisfies 
\begin{enumerate}
\item Acyclicity : If $e\prec e'\Rightarrow e' \nprec e$  
\item Transitivity:  $e\prec e'$ and $e'\prec e''$ $\Rightarrow e\prec
  e''$,  
\item Local finiteness:  $|\mathbf{I}(e,e')|<\infty$ where
  $\mathbf{I}(e,e')\equiv\{ e'' \in c:  e\prec e'' \prec e'\}$
\end{enumerate}
}

In this work we will concern ourselves with finite $n$-element causal
sets, which we denote by $c_n$. For every $e \in c_n$, we define  
\begin{eqnarray}
\Past(e)&\equiv&\{e' \in c_n: e' \prec e\}\nonumber \\ 
\Fut(e)&\equiv &\{e'\in c_n : e\prec e'\}. \nonumber \\ 
\IPast(e)&\equiv&\{e' \in c_n: e' \preceq  e\}\nonumber \\ 
\IFut(e)&\equiv &\{e'\in c_n : e\preceq e'\}. \nonumber
\end{eqnarray}
An element $e \in c_n$ is said to be {\sl maximal} if
$\Fut(e)=\emptyset$  and it is said to be {\sl minimal}
if $\Past(e)=\emptyset$.

The Rideout Sorkin (RS)  sequential growth process  generates labelled causal sets, starting from a
single element, with the labelling on each element given by the
{\sl stage}  $n \in \bN$ at which it is added, as shown in Figure
\ref{lposcau.fig}  \cite{csgone}.  It obeys the principle of {\sl
  internal temporality} which means that the new element $e_n$ which
has been added at stage $n$, is  either to the future of an existing
element, or is  unrelated to it. The causal set thus generated is (i) {\sl
  naturally labelled} or carries a {\sl linear extension}, namely,
for $e_i, e_j \in c_n$, $e_i \prec e_j \rightarrow i <j$ (ii)  past
finite, i.e., $\forall e_i \in c_n,  \, \, |\Past(e_i)| < \infty$ (see
\cite{bdz} for generalisations).   More generally, an $n$-element
causal set is said to be {\sl labelled}  if each of its
elements is assigned a unique label from $1$ to $n$. The condition of internal temporality
ensures that only natural labelings occur in the growth 
process.  In this work, an  {\sl unlabelled causal set}  corresponds to the equivalence class
of all natural  labelings of the  causal set.

Figure \ref{lposcau.fig} shows  the growth tree, dubbed the {\sl labelled
poscau} $\cP$,      with all the
possible causal sets that can be generated upto stage 
 $n=3$, starting from the single element causal set with element
 $e_1$.  At stage $n=2$, the new element $e_2$ has two ways it can be
 added: either to the future of $e_1$ or unrelated to it. Thus, the
 set $\Omega_2$ of $n=2$-element labelled causal sets in  $\cP$
 contains the totally ordered $2$-element chain $\chain_2$ and the
 totally unordered $2$-element antichain $\achain_2$. 
 Each appears
 exactly once, since $\chain_2$ has a single natural ordering, and
 any relabelling of $\achain_2$ gives back the same labelled causal
 set. At stage $n=3$, we see that the causal set   \pcauset{2,3,1}
 appears with all of its three possible natural labelings. 

Each node of $\cP$ is a finite element naturally labelled causal 
set,  and emerges  from a single {\sl parent} node. We will
adopt the familial terminology of \cite{csgone}, and call the nodes
generated from a (single) parent  $c_n$, its
{\sl children}\footnote{This is also referred to as the index, upward degree or 
  targets of a  node in a directed tree.}. The  number of 
$n+1$-element children $c_{n+1}^j$ of $c_n$ varies from node to node
and hence we  label them over a  $c_n$-dependent  index set
$\cJ_1(c_n)$.  More generally, the index set of the stage $n+m$  {\sl descendents}  of
$c_n$  are denoted by  $\cJ_m(c_n)$.  As we go up stage by
stage in  $\cP$,  we generate the sample spaces  $\Omega_n$  of $n$-element
naturally labelled causal sets and eventually,  the sample space $\Omega$ of countable,
past finite causal sets.

\begin{figure}[H]
\centering
\includegraphics[scale=0.3]{Labelledposcau.pdf} 
  \caption{Labelled Poscau  $\cP$. } 
\label{lposcau.fig} 
\end{figure}

Already at stage $n=3$ of  the growth process, different
natural labelings of \pcauset{2,3,1} manifest themselves.  Any
physical observables we wish to construct however, must be label
independent. Hence,  for example  the causal set
\pcauset{2/label=2,3/label=3,1/label=1}, which appears in $\cP$ 
belongs to the  equivalence class $\pcauset{2,3,1} = [\pcauset{2/label=2,3/label=3,1/label=1}]$ of all natural
relabelings, where the equivalence relation is
$\pcauset{3/label=3,1/label=1,2/label=2} \sim_L
\pcauset{2/label={2},3/label=3,1/label=1}
\sim_L\pcauset{2/label={1},3/label=3,1/label=2}$.

This equivalence  gives rise to the poset of causal sets, {\sl poscau}
$\tcP$ of  unlabelled causal
sets, whose nodes are unlabelled causal sets $\tcn$\footnote{ We will
  soon drop the tildes $\tilde{X}$ for ease of notation. See the
  note at the end of this Section and just before the start of Section \ref{sgd.ssec}.}
Fig \ref{poscau.fig}  shows $\tcP$ upto stage $n=3$, which can be
contrasted with Fig \ref{lposcau.fig}.   Though itself a poset, $\tcP$
is {\it not}  a tree since the
same unlabelled causal set can be reached in multiple unlabelled
ways. 
\begin{figure}[H]
\centering
\includegraphics[scale=0.3]{poscau.pdf} 
  \caption{Poscau  $\tcP$. } 
\label{poscau.fig} 
\end{figure}

We define a {\sl transition} in $\cP$ as a map
$\mathcal{T}_n^{j}:c_n\rt c_{n+1}^{j}$ where $j \in \cJ(c_n)$, which
indexes the children of $c_n$.  This transition adds in the new
element $e_{n+1}$ to $c_n$.  Hence transitions in $\cP$ are 
also partially ordered as shown in Fig \ref{OT.fig}.
\begin{figure}[H]
\centering
  \begin{tikzpicture}[-stealth]
  \def\ystep{1.5cm}
  \def\xstep{0.5cm}
  \begin{scope}[nodes={draw, thin, circle, minimum size=0.5cm}]
  \node (C0) at (-4.5,0) {\pcauset{3,2,1}};
 \node (C1) at ( -1.5, 0) {\pcauset{3,2,4,1}};
\node (C2) at (1.5,0){\pcauset{3,2,4,5,1}};
\node (C3) at (4.5,0){\pcauset{3,2,4,6,5,1}};
\end{scope}
\begin{scope}[prob arrow, font={\small}]
  \drawprobarrow{C1}{\mathcal{T}_4}{C2}
  \drawprobarrow{C0}{\mathcal{T}_3}{C1}
  \drawprobarrow{C2}{\mathcal{T}_5}{C3}
  \end{scope}
 \end{tikzpicture}
\caption{Ordering of transitions} \label{OT.fig} 
 \end{figure}
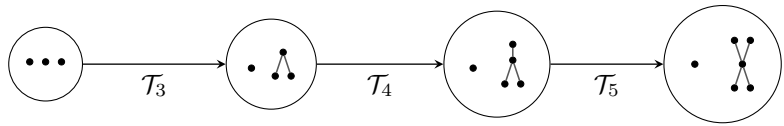

As in \cite{csgone}  it is useful to kinematically deconstruct a 
transition $\cT_n: c_n \rt  c_{n+1}$ using the ordering relation. 
For the new  element 
$e_{n+1} \in c_{n+1}$ We define the {\sl precursor} set $\pre(c_{n+1})$  as  $\pre(c_{n+1})\equiv
\Past(e_{n+1})$ while   its  {\sl spectator} set $\spec(c_{n+1})
\equiv c_n\setminus \Past(e_{n+1})$. The latter are the elements in $c_n$
that do not participate in the transition kinematically.   Thus, in
the transition
\begin{equation}
  \cT_3: \pcauset[labeled]{3,2,1} \rightarrow
  \pcauset{3/label=3,2/label=2,4/label=4,1/label=1}
\end{equation}
the elements $e_2,e_3$ are in the precursor set and $e_1$ is in the
spectator set.

A transition $\cG_n$ 
is moreover said to be
{\sl gregarious}   if  $\Past(e_{n+1})=\emptyset$, while in a
{\sl timid} transition $\cT_n^T$,  $\Past(e_{n+1})= c_n$. The child associated
with the timid transition will be denoted $c_{n+1}^T$ and that with a
gregarious transition $c_{n+1}^g$.

For the pair
of transitions,  $\cT_n: c_n \rt  c_{n+1}$ and  $\cT_n': c_n \rt
c_{n+1}'$, we define
the {\sl common spectator set} $\cspec(c_{n+1},c_{n+1}') \equiv \spec(c_{n+1}) \cap
\spec(c_{n+1}')$.    
We define a {\sl Bell partner}  of a 
transition $\cT_n:c_n \rightarrow c_{n+1}$ as $\cT_m:c_m \rightarrow
c_{m+1}$, where $ c_m$ is obtained from $c_n$ by  adding or removing
common spectators to and from $\cspec(c_{n+1})$. Since the precursor
set of the new elements in both transitions is identical, they are
kinematically  essentially the same transition. 
  
In order to formalise this we define a  {\sl Bell pair}  of
transitions at stage $n$
  $\cT_n:c_n \rightarrow c_{n+1}$,  $\cT_n':c_n \rightarrow c_{n+1}'$,
 such that $\cspec(c_{n+1},c_{n+1}') \neq \emptyset$, with the related 
 Bell pair 
$\cT_m:c_m \rightarrow c_{m+1}$,  $\cT_m':c_m
\rightarrow c_{m+1}'$ where $c_m$ is obtained from either 
adding or removing common spectators  in $\cspec(c_{n+1},c_{n+1}')$ to
or from $c_n$.   
  
We will define a {\sl Bell family} $\{(\cT_n,\cT_n') \}$ to be the set
of all related Bell pairs  with $n=[n_0, \infty)$  where $n_0$ is
defined by $\cspec(c_{n_0+1},c_{n_0+1}') =\emptyset$.  If one started off  with 
  $n_0$, then one can only add common spectators, to get another
  related $n$-element Bell pair i.e., $n > n_0$.  
Note that every non-gregarious transition can be Bell paired with a
gregarious transition. 
  
Figure \ref{bellpair.fig}   shows the two related Bell pairs
$(\cT_3,\cT_3')$ and $ (\cT_5,\cT_5')$, where the transitions
$\cT_3$ and $\cT_5$  have the same precursor
  set  as do the transitions 
  $\cT'_3$ and $\cT'_5$. Moreover, the pair $(\cT_5,\cT_5')$ has  a 
 non-empty  common spectator set, while that for  $(\cT_3,\cT_3')$ is
 empty, and hence $n_0=3$.  
\begin{figure} 
\begin{center}
\begin{tikzpicture}[-stealth]
  \def\ystep{2cm}
  \def\xstep{0.5cm}
  \begin{scope}[nodes={draw, thin, circle, minimum size=0.5cm}]
 \node (C1) at ( -5, 0) {\pcauset{3,2,1}};
 \node (C12) at (-6.5, 1*\ystep) {\pcauset{3,2,1,4}};
 \node (C21) at ( -3.5, 1*\ystep) {\pcauset{3,4,2,1}};
\node (C2) at (4,0) {\pcauset{5,4,2,1,3}};
 \node (C642153) at (2 , 1*\ystep) {\pcauset{6,4,2,1,5,3}};
 \node (C4321) at (6 , 1*\ystep) {\pcauset{6,4,5,2,1,3}};
\end{scope}
\begin{scope}[prob arrow, font={\small}]
  \drawprobarrow{C1}{\mathcal{T}_3}{C12}
\drawprobarrow{C1}{\mathcal{T}'_3}{C21}
\drawprobarrow{C2}{\mathcal{T}_5}{C642153}
\drawprobarrow{C2}{\mathcal{T'}_5}{C4321}
  \end{scope}
\end{tikzpicture}
\caption{Bell related pairs of transitions}
\label{bellpair.fig}
\end{center}
\end{figure}
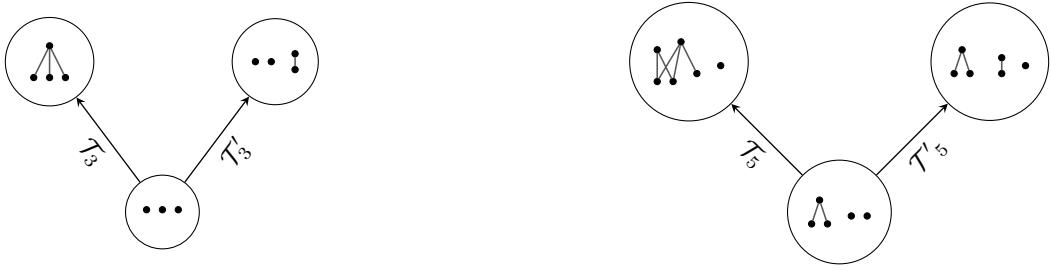

While each node in $\cP$ is a finite element causal set, it also
represents a {\sl cylinder subset} $\cyl{c_n}$ in $\Omega$ containing
all countable labelled causal sets  whose  first $n$-elements are
$c_n$. This is similar to a finite time $T$ path $\gamma_T$in the discrete random
walk representing a subset of infinite time walks whose first $T$
steps are $\gamma_T$.  Because of the tree structure of $\cP$, the
$\cyl{c_n}$ possess a  nesting structure: if  $\cyl{c_n} \cap
\cyl{c_m}  \neq \emptyset$ and wlog $n\geq m$, then $\cyl{c_n}\subset
\cyl{c_m}$.   Consider the set $\{c_n^i\}$ of all the
$n$-element nodes in $\cP$, where $i \in \cI(n) \equiv \{1, 2, \ldots, |\Omega_n| \}
$.   The set $\cZ_n$ of $n$-element cylinder sets $\{\cyl{c_n^i} \}$ at
stage $n$ thus provides a disjoint  partition of $\Omega$. For each $i
\in \cI(n)$ we have the set theoretic condition,
\begin{equation}
\cyl{c_n^i} = \bigsqcup_{j \in \cJ(c_n^i)} \cyl{c_{n+1}^j}, \label{cylpart.eq} 
  \end{equation} 
i.e., each cylinder set at stage $n$ can be expressed as a disjoint
union of the cylinder sets formed from its children $\cyl{c_{n+1}^j}
\in \cZ_{n+1}$. 
Each in turn can be partitioned into cylinder sets in $\cZ_{n+2}$, and
so on, and hence in general, for any $m$, 
\begin{equation}
\cyl{c_n^i} = \bigsqcup_{k\in \cK(c_n^i,m)}
\cyl{c_{n+m}^k}, \label{gencylpart.eq} 
  \end{equation} 
where $\cK(c_n^i,m)$ is the index set associated with the $n+m$
descendents of $c_n^i$. Let $\cZ$ denote the set of all finitely
generated cylinder sets, i.e., $\cZ \equiv \bigcup_{n \in \bN} \cZ_n$. 
We note that not only can $\cyl{c_n}$ be
partitioned into a disjoint union of the $\cyl{c_{n+1}^i}$ for $i \in
\cJ(c_n)$  but also that at every stage $n$,    $\Omega$ 
disjoint cylinder-set partition, 
\begin{equation}
\Omega=\bigsqcup_{z \in \cZ_n} z. 
\end{equation}

\begin{definition}
  The event
algebra  $\fA$ over $\Omega$ is generated from $\cZ$ by finite set operations.  
\end{definition}

We now introduce a non-disjoint partitions of $\cyl{c_n}$
which will be useful in our constructions.
Let $\mm(c_n)=\{e_{m_1}, e_{m_2}, \ldots, e_{m_\bmm} \}$ denote the set of maximal elements in $c_n$. Hence
$\bmm \leq n $, with the equality holding iff
$c_n=\achain_n$. Let $e_{n+1}^i$ denote the new element added to $c_n$
to get $c_{n+1}^i$, $i \in \cJ(c_n)$. Then for any $e_k \in c_n, e_k
\nin \pre(c_{n+1}^i)$, $\IFut(e_k) \subseteq \spec(c_{n+1}^i) \subset
c_n$, and hence $\spec(c_{n+1}^i)=\cup_k \IFut(e_k) $, over all such
$e_k$ is a future set.  If $\spec(c_{n+1}^i) \neq \emptyset$ then there exists an $e_{m_k}
\in \mm(c_n)$ such that $e_{m_k} \in \spec(c_{n+1}^i)$.

Let $\setzeta_k$ be the set of
children of $c_n$ for which the 
maximal element $e_{m_k}$ is in the spectator set  (but possibly not the only
one) 
\begin{equation} 
\setzeta_k \equiv \{\cyl{c_{n+1}^{j(k)}} | j(k) \in \cJ(c_n), e_{m_k}
\nprec e^{j(k)}_{n+1} \},  e_{m_k} \in \mm(c_n). 
\end{equation}
Define the sets 
\begin{equation}
\zeta_k \equiv \bigsqcup_{\setzeta_k} \cyl{c_{n+1}^{j(k)}} \in \fA,
\quad k \in {1, \ldots,  \bmm} 
\end{equation}  
The  $\zeta_k$ are themselves not disjoint and none contain
$\cyl{c_{n+1}^T}$, where $c_{n+1}^T$ is the timid child of $c_n$.
 Moreover, for every  non-timid
transition from $c_n$ there exists at least one $k$ such that $e_{m_k}
\in \mm(c_n)$ is a spectator element. Thus  
\begin{equation}
  \cyl{c_n^j}\setminus\cyl{c_{n+1}^T} = \bigcup_{k=1}^{\bmm}
    \zeta_k \quad \Rightarrow \quad 
  \cyl{c_n^j} = \bigcup_{k=1}^{\bmm} \zeta_k \, \, \bigsqcup
  \cyl{c_{n+1}^T}. \label{zetal.eq} 
\end{equation}

We will find this alternative partitioning of $\cyl{c_n}$ to be
useful later on in the paper when we  find the general form of
transition amplitudes.

\noindent {\bf Note:} In order to simplify notation, we will
henceforth 
not distinguish between labelled and unlabelled causal sets and hence
drop the distinguishing tilde notation $\tilde{X}$ that we have placed over the
unlabelled causal sets and poscau.   The choices 
should be clear from the context. Indeed, for most
of what follows, we will use unlabelled poscau (without using the tilde
notation  $\tilde{X}$), keeping in mind that
one must take into account the 
different possible natural labelings giving rise to distinct labelled
children from a given parent causal set. 

\subsection{Sequential growth dynamics(SGD)}
\label{sgd.ssec}

The sample space $\Omega$ and the event algebra $\fA$ form the first
two elements of the classical and quantum measure triples we will
construct, for a ``generalised measure'' which we define below. 

\begin{definition} \label{gmu.def} 
Let $\Omega$ be a sample space with event algebra $\fA$.   We define a
(finitely additive) 
{\sl  generalised measure}  as a map $\gmu: \fA \rightarrow
\cR$, where $\cR$ is an abelian group  with
identity $0$ such that $\gmu(\sqcup_i\alpha_i) = \sum_i \gmu(\alpha_i) $ for any finite
disjoint set $\{ \alpha_i \in \fA\}$ and (ii)
$\gmu(\emptyset)=\id$. 
\end{definition}

 A sequential growth dynamics is an assignment of a generalised
 measure $\gmu$ on $\fA$,  such that every transition $\cT_n: c_n
 \rightarrow c_{n+1}$ is associated with a
 {\sl transition  map}  $T_n: \cR \rightarrow \cR$ such that 
 \begin{equation}
\gmu(\cyl{c_{n+1}}) = T_n(c_{n+1}) \circ \gmu(\cyl{c_n}) \Rightarrow
\gmu(\cyl{c_{n+1}}) = T(\gamma(c_{n+1})) \circ
\gmu(\Omega) \label{gammacn.eq} 
  \end{equation}
  where $\gamma_{n+1}$ denotes the {\sl path}  from $c_1$ to
  $c_{n+1}^i$ in $\cP$ and
  \begin{equation} T(\gamma(c_{n+1})) \equiv T_n(c_{n+1})
    \circ T_{n-1}(c_{n}) \ldots \circ T_1(c_{2}) \circ T_0(c_1) \end{equation}
  denotes the composition 
  transitions from $c_1$ along a path $\gamma(c_{n+1})$ in $\cP$. 
Eqn. \eqref{gammacn.eq} implies that 
  $\gmu(\cyl{c_n})$ can be obtained purely from the
  transition maps associated with  $\gamma(c_n)$, given
  $\gmu(\Omega)$. 
For CSG,  $\cR = [0,1]$ and the $T_n^i \in [0,1]$  are probabilities, 
as are the $\gmu$. 
  
Before going into specifics, we notice that the  
{\it kinematic structure} of encoded in  $\cP$, already gives us
strong constraints on a covariant and causal sequential growth
dynamics whether it is classical or quantum, as long as the
generalised measure satisfies some basic constraints.

For $\gmu$ a  generalised finitely additive measure  $\gmu: c_n \rt
\cR$ (Def. \ref{gmu.def}) we note that the following general
features follow from the kinematical structure of $\cP$ with minimal
physical assumptions as stated:  

\begin{enumerate}  
\item  Since $\cyl{c_n}= \bigsqcup_j
  \cyl{c_{n+1}^j}$ where $j \in \cJ(c_n)$,
  \begin{equation} \gmu(\cyl{c_n})=\sum_{j
      \in \cJ(c_n)} \gmu(\cyl{c_{n+1}^j}). \label{gmumsr.eq}
  \end{equation}   We refer to this as the
  \textbf{Markov Sum Rule(MSR)}. Note that this requires no further
  assumptions on $\gmu$.  

\item  Assuming further that   $\gmu$ is label
  independent or {\sl  covariant},  $\gmu(\cyl{c_n})$ and $\gmu(\cyl{c_n'})$
  should lead to the same physical observable whenever $c_n \sim_L c_n'$.   This is the condition of
\textbf{General Covariance (GC)}  and its specific implementation will
depend on $\gmu$. For example in CSG, we require  the equality of 
probabilities $P(\cyl{c_n})=P(\cyl{c_n'})$.  

\item Assuming that we also require that $\gmu$ is intrinsically
  causal and hence does not essentially
  depend on the spectator sets but only on the precursor sets (in the
  sense of the transition maps), i.e., it satisfies a form of
  \textbf{Bell Causality (BC)},  then $\gmu(c_n)$ and $\gmu(c_m)$ cannot be independently
  determined. 
\end{enumerate}

Apart from the MSR which only requires finite additivity of $\gmu(.)$,
the remaining conditions are natural from the point of view of CST
since they are required by covariance and causality. Dropping any of
these very broad requirements would mean violating one of these
principles, at least in this sequential sense. In \cite{bm} it is
suggested that covariant and causal measures on $\fA$  may be
recovered without imposing these conditions at the sequential
level, but we will not consider this approach here.

We have used the finite additivity of $\gmu$ for the disjoint
partition of a cylinder sets into its cylinder sets of its children,
Eqn. \eqref{gmumsr.eq}, but we can also
apply it to the non-disjoint partition $\{ \zeta_k \}$, $k  \in \{ 
1, \ldots, \bmm \}$. For every pair $\zeta_{k_1}, \zeta_{k_2}$
\begin{equation}
  \zeta_{k_1} \bigcup  \zeta_{k_2} =  \bigl(\zeta_{k_1} \setminus
  \zeta_{k_2} \bigr)
\bigsqcup \bigl(\zeta_{k_2} \setminus \zeta_{k_1} \bigr) \bigsqcup \bigl(\zeta_{k_1}
\bigcap \zeta_{k_2}\bigr)
\end{equation} 
Thus
\begin{eqnarray} 
\gmu( \zeta_{k_1} \bigcup  \zeta_{k_2}) & = &  \gmu\bigl(\zeta_{k_1} \setminus
  \zeta_{k_2} \bigr) + \gmu\bigl(\zeta_{k_2} \setminus \zeta_{k_1}
  \bigr) + \gmu\bigl(\zeta_{k_1}
                                              \bigcap \zeta_{k_2}\bigr)  \nonumber \\
  &=&  \gmu\bigl(\zeta_{k_1}\bigr)+
\gmu\bigl(\zeta_{k_2}\bigr) -\gmu\bigl(\zeta_{k_1} 
\bigcap \zeta_{k_2}\bigr)
\end{eqnarray} 
It is easy to see that this generalises to
\begin{equation}
\gmu( \bigcup_{k=1}^{\bmm} \zeta_k) = \sum_{k=1}^{\bmm}
\gmu\bigl (\zeta_k\bigr) - \sum_{k_1<k_2}^{\bmm}\gmu\bigl(\zeta_{k_1}  \bigcap
\zeta_{k_2}  \bigr) + \cdots + (-1)^{\bmm-1}
\gmu\bigl(\bigcap_{k=1}^{\bmm} \zeta_k\bigr) \label{gmuzeta.eq}.
\end{equation}
Next, note that the set $\zeta_{k_1} \cap \ldots \cap \zeta_{k_\ell}$ can
also be expressed as 
the disjoint union of  cylinder sets $\cyl{c_{n+1}^{j(k_1, \ldots k_\ell)}}$
which contain  $\{ e_{m_{k_1}}, \ldots,  e_{m_{k_\ell}}\}$ in their spectator
sets, i.e.,
\begin{equation}
\zeta_{k_1} \cap \ldots \cap \zeta_{k_\ell}=\bigsqcup_j
\cyl{c_{n+1}^{j(k_1, \ldots k_\ell)}}
  \end{equation} 
  and hence
  \begin{equation}
    \gmu(\zeta_{k_1} \cap \ldots \cap \zeta_{k_\ell})=\sum_j 
\gmu(\cyl{c_{n+1}^{j(k_1, \ldots k_\ell)}}). \label{gmuintzeta.eq} 
\end{equation}
Thus, Eqn. \eqref{gmuzeta.eq} can be expressed entirely in terms of the
generalised measure $\gmu$ of cylinder sets. 

Since  $\bigcup_{k=1}^{\bmm} \zeta_k$ is the union of cylinder
sets of  non-timid children, using MSR 
\begin{equation}
\gmu(\cyl{c_{n+1}^T})= \gmu(\cyl{c_{n}}) -\gmu(
\bigcup_{k=1}^{\bmm} \zeta_k) .  \label{gmutimid.eq} 
\end{equation}

\subsection{Atomisation and Decimation} 

In CSG \cite{csgone} a process of ``atomisation'' of an $n$-element   causal set
into an $n$-element antichain is used to prove a crucial Lemma. We will
use both this, as well as the process of ``decimation'' to help us
simplify the transition operator algebras.  Fig \ref{AD.fig} shows
the   non-dynamical process of atomisation  of a
causal set  $c_n^{(0)}$ into the antichain $\achain_n$, and its 
associated ``decimation'' into $\achain_k$, where $k$ is the number of
minimal elements in $c_n^{(0)}$.

\begin{figure}
  \centering{
\includegraphics[scale=0.51]{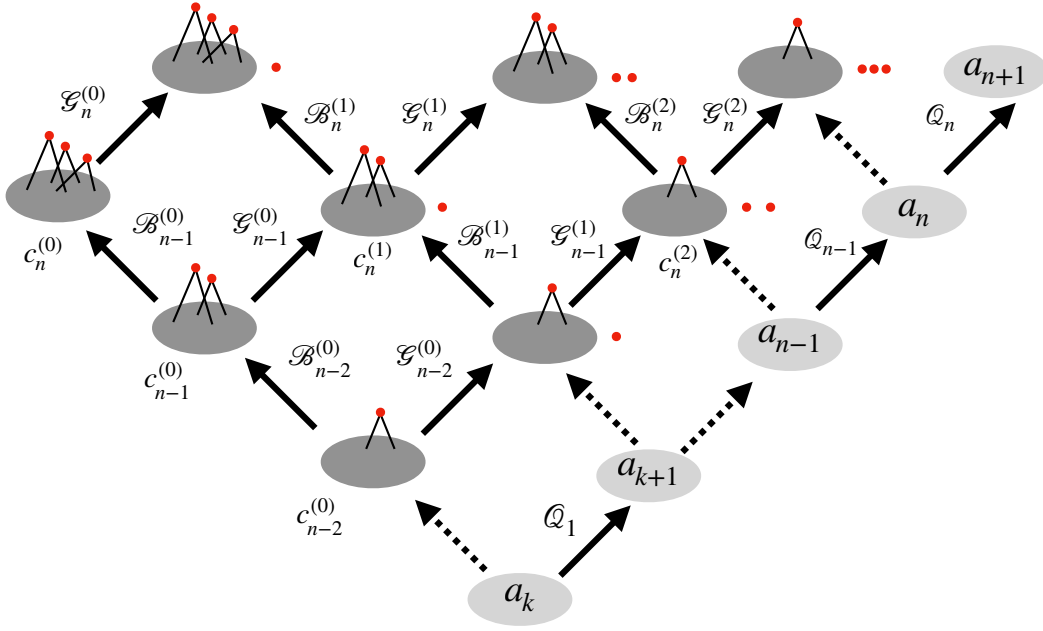}
  \caption{Atomisation takes $c_n$ to $\achain_n$, along an
    atomisation  path  $S(c_n)$ while 
    decimation takes $c_n$ back to $\achain_k$, along the related
    decimation path
    $\Gamma(c_n)$ where $k$ is the
    number of minimal elements in $\achain_n$. While  $S(c_n)$ is not unique for a given $c_n$,
    it determines $\Gamma(c_n)$.} \label{AD.fig} }
  \end{figure}

For any $c_n^{(0)} \nsim \achain_n$, pick a non-gregarious maximal element
$e_{g_1} \in \mm(c_n^{(0)})$, and define $c_n^{(1)} \equiv c_n^{(0)}\backslash{e_{g_1}}
\cup_g e_{g_1}$, where $\cup_g$ denotes a ``gregarious union'', i.e.,
$e_{g_1}$ is a gregarious element in $c_n^{(1)}$. Thus, one takes
out a non-gregarious maximal  element in $c_n^{(0)}$ and adds it in as
a gregarious element to construct $c_n^{(1)}$. Next, if $c_n^{(1)}
\nsim \achain_n$,  we recursively define $c_n^{(2)} \equiv c_n^{(1)}\backslash{e_{g_2}}
\cup_g e_{g_2}$, where now $e_{g_2}$ is a non-gregarious maximal
element in $c_n^{(1)}$. Thus, both $e_{g_1}$ and $e_{g_2}$  are
gregarious elements  in $c_n^{(2)}$. We proceed to construct
$c_n^{(i)}$ recursively, until some finite $k(c_n^{(0)})$ for which  $c_n^{(k)}
\simeq \achain_n$.  Clearly, this {\sl atomisation path}, $S^{(\alpha)}(c_n^{(0)})
\equiv \{ c_n^{(0)}, c_n^{(1)} \ldots  \achain_n\}$ is not
unique. 

Fig \ref{AD.fig} also shows the various transition operators. The gregarious transition $\cG_n^{(0)}: c_n^{(0)} \rightarrow
{c_{n+1}^{(g)}}$,  and the associated non-timid transition operator  $\cB_n^{(1)}: {c_n^{(1)}} \rightarrow
{c_{n+1}^{(0,g)}} $, which adds in the non-gregarious maximal
element $e_{g_1}$ which was removed from  $c_n^{(0)}$ as the  new element in the labelled
version of 
$c_{n+1}^{(0,g)}$.    In general $\cB_n^{(i)}: {c_n^{(i)}} \rightarrow
{c_{n+1}^{(i-1,g)}} $, which adds in the non-gregarious maximal
element $e_{g_i}$ which was removed from  $c_n^{(i-1)}$ as the  new element in the labelled
version of  $c_{n+1}^{(i-1,g)}$. 

We now construct the  {\it decimation} of $c_n^{(0)}$ using the
atomisation path
$S^{(\alpha)}(c_n^{(0)})$, i.e, we construct a related dynamical
{\sl decimation path}  $\Gamma^{(\alpha)}(c_n^{(0)})$ in $\cP'$ from $c_n^{(0}$ back to
the single element causal set $\achain_1$.    Denote  the set of elements that
lead to the atomisation of 
$c_n^{(0)}$ via the path $\Gamma^{(\alpha)}(c_n^{(0)})$ by
${\atom}^{(\alpha)}(c_n^{(0)}) \equiv \{e_{g_1}, \ldots
e_{g_k}\}$. Define $c_{n-1}^{(0)}\equiv c_n^{(0)}\backslash
e_{g_1}$.  It is clear that the gregarious transition from
$c_{n-1}^{(0)}$ results in $c_n^{(1)}$ with $e_{g_1}$ as the
gregarious element. Next, define $c_{n-2}^{(0)}\equiv c_{n-1}^{(0)}\backslash
e_{g_2}$.    Removing the first  $i$  elements in $\atom^{(\alpha)}(c_n^{(0)})$ sequentially from
$c_n^{(0)}$, thus gives us the $(n-i)$-element causal set $c_{n-i}^{(0)} \equiv
c_n^{(i)}\backslash\{e_{g_1}, \ldots,  e_{g_i} \}$.  For $i=k$, $c_{n-k}^{(0)} \equiv
c_n^{(k)}\backslash\{e_{g_1}, \ldots,  e_{g_k}
\}=\achain_{n-k}$. Finally, we can remove the elements one by one from $\achain_{n-k}$ to get
the single element causal set.  Thus we have a path
$\gamma^{(\alpha)}(c_n^{(0)}) \equiv \{ \achain_1, \achain_2, \ldots
\achain_{n-k}, c_{n-k+1}^{(0)}, \ldots c_{n-1}^{(0)}, c_{n}^{(0)}\}$
in $\cP'$. 

Let  $\cB_{n-i}^{(0)}:{c_{n-i}^{(0)}}\rightarrow {c_{n-i+1}^{(0)}}$,
    and let $\cG_{n-i}^{(0)}$ be the gregarious transition from 
    ${c_{n-i}^{(0)}}$.    Comparing with the transition 
    $\cB_n^{(i)}$, for which $\{e_{g_1}, \ldots
e_{g_i}\} \subset \spec(\cB_n^{(i)})$, we find the Bell
related pairs $\{\cB_n^{(i)}, \cG_n^{(i)}\}$ and $\{\cB_{n-i}^{(0)},
\cG_{n-i}^{(0)}\}$.  

To complete this picture, we need another sequence of  causal sets
$c_{n-1}^{(i)}$, constructed similarly to the  $c_{n}^{(i)}$ by
removing maximal elements from $c_{n-1}^{(0)}$ sequentially as
follows. Since  $c_{n}^{(1)} = c_n^{(0)}\backslash e_{g_1} \cup_g
e_{g_1}$ and  $c_{n-1}^{(0)}=c_n^{(0)}\backslash e_{g_1}$, they are
related by the gregarious transition $\cG_{n-1}^{(0)}:
{c_{n-1}^{(0)}} \rightarrow {c_{n}^{(1)}} $. Now define
$c_{n-1}^{(1)}={c_{n-1}^{(1)}} \backslash e_{g_2} \cup_g
e_{g_2}$. The transition $\cB_{n-1}^{(1)}: {c_{n-1}^{(1)}} \rightarrow
{c_{n}^{(1)}}$ thus adds in the element $e_{g_2}$. Generalising,
we can sequentially  define $c_{n-1}^{(i)}={c_{n-1}^{(i-1)}} \backslash e_{g_{i+1}} \cup_g
e_{g_{i+1}i}$ for all $i=1, \ldots k-1$, since $c_{n-1}^{(k-1)}=\achain_{n-1}$.
Similarly, $c_{n-j}^{(i)}={c_{n-j}^{(i-1)}} \backslash e_{g_{i+j}} \cup_g
e_{g_{i+j}}$ for all $i=1, \ldots k-j$, with
$c_{n-j}^{(k-j)}=\achain_{n-j}$.

\subsection{Classical Sequential Growth (CSG) Models}
\label{csg.ssec}

The CSG dynamics of Rideout-Sorkin is based on the measure triple
$(\Omega, \fA, \mu)$ where  $\mu: \fA \rt [0,1]$ is a probability
measure over $\fA$. $\mu(\cyl{c_n})$ is therefore a product of the 
{\sl transition probabilities} $A_k$ associated with the transition
$\cT_k$, with $k$ running over all the transitions from $c_1=a_1=\pcauset{1}$ to $c_n$.

In the CSG models, the dynamics is  required to satisfy
classical/probabilistic versions of the MSR, GC and BC, implemented as
follows, assuming that all the transition probabilities are
non-vanishing\footnote{This condition can be weakened leading to a
  resetting of the dynamics  \cite{rv,dsobs}.}.

\begin{enumerate}
\item {\bf  MSR:}  This follows from finite additivity of $\mu$.   
\item {\bf GC:} $\mu(c_n)=\mu(c_n')$ if  $c_n \sim c_n'$.  
  \item {\bf BC:} For the Bell related pairs
    $\{\{\cT_n,\cT_n'\}, \{ \cT_m,\cT_m'\}\}$,
   \begin{equation} \frac{A_n}{A_n'} \equiv
     \frac{A_m}{A_m'} .  \label{CBC.eq}
   \end{equation}
   \label{csgbc.eq} 
  \end{enumerate} 

As shown in \cite{csgone} these three conditions suffice to rewrite
all transition probabilities at stage $n$ in terms of the transition
probabilities $Q_k$ from $\achain_k$ to $\achain_{k+1}$ for $k \leq
n$, where  $Q_0=1$.  Because we will try to model our arguments for
QSG  around the classical case, we point to two crucial results of
CSG.
\begin{lemma}[Lemma 1 of \cite{csgone}]
Every transition probability $A_n^i:c_n \rightarrow c_{n+1}^i$ at stage $n$ can be expressed in
terms of the gregarious transition probability $G_n:c_n \rightarrow
c_{n+1}^g$ as well as lower stage probabilities.  Here $i \in
\cJ(c_n)$ the index set of children of $c_n$.  \label{csgonelemmaone.lemma} 
\end{lemma}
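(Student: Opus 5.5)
We use Bell causality \eqref{CBC.eq} to pair each transition out of $c_n$ with the gregarious one, and then descend to a smaller causal set.

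\textbf{Step 1 (Bell pairing).} Fix $c_n$ and a non-timid, non-gregarious child $c_{n+1}^i$, with new element $e_{n+1}^i$. As noted earlier, every non-gregarious transition can be Bell paired with the gregarious transition $\cG_n$. The precursor set of $\cG_n$ is empty, so $\spec(c_{n+1}^g)=c_n$. Hence the common spectator set is $\cspec(c_{n+1}^i,c_{n+1}^g)=\spec(c_{n+1}^i)$, which is nonempty because the transition is non-timid.

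\textbf{Step 2 (removing spectators).} Remove from $c_n$ a maximal element $e_{m_k}\in\spec(c_{n+1}^i)\cap\mm(c_n)$. Such an element exists because the spectator set is a nonempty future set. Removing a maximal element leaves a naturally labelled causal set $c_{n-1}$ in $\cP$. In $c_{n-1}$ the transitions $\cT_{n-1}$ (same precursor set $\pre(c_{n+1}^i)$) and $\cG_{n-1}$ (gregarious) form a Bell pair related to $(\cT_n^i,\cG_n)$. Condition \eqref{CBC.eq} then gives
\begin{equation*}
A_n^i = G_n\,\frac{A_{n-1}}{G_{n-1}}.
\end{equation*}
All transition probabilities are assumed non-vanishing, so $G_{n-1}\neq 0$. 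Thus $A_n^i$ is expressed through $G_n$ and stage $n-1$ probabilities. If we want to go further, we can keep removing common spectators down to the Bell pair at $n_0$. This writes $A_n^i = G_n A_{n_0}/G_{n_0}$, where $c_{n_0}=\pre(c_{n+1}^i)$ is the precursor set and $A_{n_0}$ is its timid transition.

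\textbf{Step 3 (the timid child).} The timid transition has no spectators, so Bell causality does not apply to it directly. Here we use the MSR \eqref{gmumsr.eq}, written as $\sum_{j\in\cJ(c_n)} A_n^j=1$, equivalently Eqn.~\eqref{gmutimid.eq} with $\gmu=\mu$. This gives
\begin{equation*}
A_n^T = 1-\sum_{j\neq T} A_n^j .
\end{equation*}
Each term on the right has already been written in terms of $G_n$ and lower-stage probabilities, so the same holds for $A_n^T$.

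The main obstacle is Step 2. We must check that deleting a maximal spectator produces a legitimate node of $\cP$ and a genuinely Bell-related pair, in the precise sense of the definition. This requires care with labelings: the labelled version of $c_{n-1}$ must exist, and the children with the same precursor set must correspond. We handle this by working in the unlabelled poscau, using GC, and choosing natural labelings in which $e_{m_k}$ carries the largest label. The remaining steps are bookkeeping.
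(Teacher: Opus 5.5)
Your proposal is correct and follows the paper's own proof. As there, you Bell-pair each non-timid transition with the gregarious one to obtain $A_n^i = A_m^i G_n G_m^{-1}$ for some $m<n$, and then recover the timid transition from the MSR; your choice of deleting a maximal spectator, and your descent to the $n_0$ pair, only spell out details the paper leaves implicit.
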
 
\begin{proof}
Every  non-timid transition probability $A_n^i$ has a Bell partner $A_m^i$ for
some $m<n$. Pairing these with their respective  gregarious transition
probabilities 
\begin{equation}
A_n^i = A_m^i G_n G_m^{-1}   
\end{equation} 
Additionally, using the MSR, the timid transition probability is
\begin{equation}
T_n= 1-\sum_{i \in \bcJ(c_n)}  A_m^i G_nG_m^{-1}   
\end{equation}
where $\bcJ(c_n)$ denotes the index set of non-timid children of
$c_n$. 
\end{proof}
\begin{figure}[H]
\centering
\includegraphics[scale=0.55]{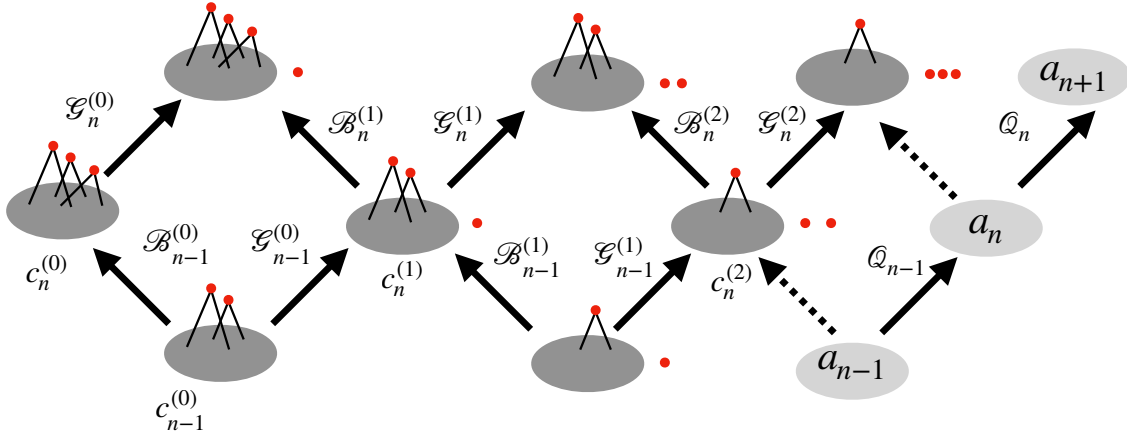} 
  \caption{Atomisation of Gregarious Transitions in
  unlabelled poscau $\cP'$. } 
\label{atomisation.fig} 
\end{figure}
\begin{lemma}[Lemma 2 of \cite{csgone}]
Every gregarious transition probability $G_n$ at stage $n$ is equal to
$Q_n$. \label{csgonelemma.lemma} 
  \end{lemma}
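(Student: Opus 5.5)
Rideout and Sorkin prove this by induction on the number of atomisation steps needed to reach the antichain. We do the same along an atomisation path $S^{(\alpha)}(c_n^{(0)})=\{c_n^{(0)},c_n^{(1)},\ldots,c_n^{(k)}=\achain_n\}$. The claim is immediate for $c_n=\achain_n$, since the gregarious child of $\achain_n$ is $\achain_{n+1}$, so $G_n(\achain_n)=Q_n$. It therefore suffices to show $G_n(c_n^{(i-1)})=G_n(c_n^{(i)})$ for each $i$. Chaining these equalities gives $G_n(c_n^{(0)})=G_n(\achain_n)=Q_n$.

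\textbf{The two-path computation.} Fix a step $c_n^{(i-1)}\to c_n^{(i)}$ and write $c\equiv c_n^{(i-1)}$, $c'\equiv c_n^{(i)}$ and $e\equiv e_{g_i}$. Let $d\equiv c\setminus e=c'\setminus e$, an $(n-1)$-element causal set. Consider the $(n+1)$-element causal set $c_{n+1}^{(i-1,g)}$, obtained by adding a gregarious element to $c$. It can be reached from $d$ in two ways. The first route adds $e$ with its original past to give $c$, then adds a gregarious element. The second route adds a gregarious element to give $c'$, then adds $e$ through the transition $\cB_n^{(i)}$. By general covariance the two products of transition probabilities are equal:
\begin{equation}
A_{n-1}(d\to c)\, G_n(c) \;=\; G_{n-1}(d)\, B_n^{(i)} . \nonumber
\end{equation}
Here $B_n^{(i)}$ is the probability of $\cB_n^{(i)}:c'\to c_{n+1}^{(i-1,g)}$.

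\textbf{Applying Bell causality.} The transition $d\to c$ and the gregarious transition $d\to c'$ form a Bell pair. The transitions $\cB_n^{(i)}$ and the gregarious transition from $c'$ form a related Bell pair, because they have the same precursor sets and differ only by the extra gregarious spectator. Bell causality, Eq. \eqref{CBC.eq}, therefore gives
\begin{equation}
\frac{A_{n-1}(d\to c)}{G_{n-1}(d)}=\frac{B_n^{(i)}}{G_n(c')} . \nonumber
\end{equation}
Substituting this into the covariance identity and cancelling the nonvanishing probabilities $A_{n-1}(d\to c)$ and $G_{n-1}(d)$ yields $G_n(c)=G_n(c')$, which completes the induction.

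\textbf{Main obstacle.} The delicate step is verifying the kinematics: that the two routes genuinely end at the same unlabelled $(n+1)$-element causal set, and that the two pairs satisfy the Bell-pair definition. The common spectator set must be identified correctly. In the pair $(d\to c,\ d\to c')$ the common spectators are the elements of $d$ outside $\Past(e)$. In the pair $(\cB_n^{(i)},\ \text{gregarious from } c')$ the common spectators are the same set together with the new gregarious element of $c'$. This matches the definition, which allows adding common spectators.

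A degenerate case arises when $d\to c$ is itself the gregarious transition, so the common spectator set of the first pair is all of $d$. This cannot happen, because $e$ was chosen non-gregarious. The case in which $d\setminus\Past(e)$ is empty is also harmless: the second pair still has the added gregarious element as a common spectator, so it is a Bell pair and the first pair is its $n_0$ member.

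Finally, the argument uses that all probabilities are nonvanishing, which is the standing assumption of this section.
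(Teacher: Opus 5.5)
Your proposal is correct and follows essentially the same route as the paper. General covariance applied to the two paths from $c_{n-1}^{(0)}$ to $c_{n+1}^{(0,g)}$, combined with Bell causality for the related pairs $\{\cB_n^{(1)},\cG_n^{(1)}\}$ and $\{\cB_{n-1}^{(0)},\cG_{n-1}^{(0)}\}$, gives $G_n^{(0)}=G_n^{(1)}$, and this is iterated along the atomisation path to $Q_n$; your explicit check of the Bell-pair kinematics, including the empty-common-spectator $n_0$ case, spells out what the paper leaves implicit.
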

\begin{proof} Consider the transitions in Fig \ref{atomisation.fig}  in
  $\tcP$.  Let $B_k^{(i)}$ be the transition probability associated with the
  transition $\cB_k^{(i)}$, and $G_k^{(i)}$ the transition probability
  corresponding to the gregarious transition $\cG_k^{(i)}$

  From GC,
  \begin{equation}
G_n^{(0)} B_{n-1}^{(0)} = B_{n}^{(0)}  G_{n-1}^{(0)}
\end{equation} 
 while from  BC  
\begin{equation}
\frac{B_n^{(1)}}{G_n^{(1)}}=\frac{B_{n-1}^{(0)}}{G_{n-1}^{(0)}} 
\end{equation}
From these equations, we see that $G_{n}^{(1)}=G_{n}^{(0)}$.
By ``atomising'' in the manner shown  we can now extend the figure to the right, all the way to the
transition $\achain_n \rightarrow \achain_{n+1}$ and hence
$G_{n}^{(0)}=Q_{n}$.   
\end{proof} 

Using these results it was shown in \cite{csgone} that in CSG, the
transition probability from $c_n$ to $c_{n+1}$ with $\varpi$ precursor
elements and $m$ maximal elements takes on the simple form 
\begin{equation}
A_n=Q_n\sum_{k=0}^{m} (-1)^k \begin{pmatrix}
m\\k 
\end{pmatrix}\frac{1}{Q_{\varpi-m}}
\end{equation}

Before we end this section, we note that CSG dynamics
satisfies general covariance, the full dynamics is apriori {\it not} 
covariant, for the following reason. Since $\fA$ is generated by
finite set operations on naturally labelled causal sets in the set of
past finite naturally labelled causal sets $\Omega$. Thus the
subsets $\alpha \subset \Omega$ are not themselves covariant.  One way
of obtaining  a covariant event algebra \cite{observables} is to
extend the event algebra $\fA$ to its minimal sigma algebra
$\fS$. Since the measure $\mu$ on $\fA$ is classical, by the
Kolmogorov-Caratheodary-Hahn(KCH)  theorem it
extends to $\fS$. This means that the measure extends to covariant events 
$\fS$, which is of physical interest.

\section{Quantum Sequential growth}
\label{qsg.sec} 

We now turn to the quantisation of the classical sequential growth
models \cite{djs,sz}. As discussed in the introduction, by
quantisation we mean that the classical probability measure  $\mu$ in
the  
stochastic  triple  $(\Omega, \fA, \mu)$ must be replaced by its
quantum counterpart, i.e., the {\sl decoherence functional} or
equivalently, the {\sl  quantum measure}
\cite{qmeasureone}\cite{qmeasuretwo} which we now define.

Let $\Omega$ be a sample space of histories and
$\fA$ an associated  event algebra.
\begin{definition} A {\sl decoherence functional} $D :\fA\times \fA \rightarrow \mathbb{C}$ satisfies 
\begin{enumerate}
\item Hermiticity: for $\alpha ,\beta \in \fA$ $D(\alpha ,\beta )=D(\beta ,\alpha )^*$
\item Finite Biadditivity: for $\alpha_i ,\beta_j \in \fA$,  
  $D(\bigsqcup_i\alpha_i , \bigsqcup_j \beta_j)=\sum_{i,j} D(\alpha_i
  ,\beta_j)$, where $i$ and $j$ run over finite number of values
  and $\bigsqcup$ denotes disjoint union.    
\item Normalisation: $D(\Omega ,\Omega)=1$ 
\item Strong Positivity: For any finite collection of events $\a_i \in
  \fA (i = 1,...,N)$ the $N \times N$ matrix
 $D_{ij}\equiv D(\a_i,\a_j)$ is positive semi-definite, i.e.,
 $D_{ij}v^iv^j \geq 0$ for an arbitrary choice of  complex vector $v^i$.
\end{enumerate}
\end{definition}

\begin{definition} 
The {\sl  quantum  measure}
$\mu(\alpha): \fA \rightarrow \re^{\geq 0}$ satisfies $\mu(.) \geq
0$.  While it  need not satisfy additivity, i.e., in general 
\begin{equation}
\mu (\alpha\sqcup \beta) \neq \mu (\alpha)+ \mu
(\beta),  \label{nonadditive.eq} 
\end{equation}
it satisfies the {\sl quantum sum rule} 
\begin{equation}
\mu (\alpha\sqcup \beta \sqcup \gamma) =\mu (\alpha \sqcup \beta)+\mu
(\alpha \sqcup \gamma )+\mu (\beta \sqcup \gamma)-\mu (\alpha)-\mu
(\beta)-\mu (\gamma). 
\end{equation}
\end{definition}
The quantum measure is obtained from the decoherence functional by
identifying 
\begin{equation}
  \mu(\alpha) \equiv  D(\alpha,\alpha) \geq 0. 
 \end{equation}

One of the challenges of working with the quantum measure directly is
its non-additivity \eqref{nonadditive.eq}. In particular, there is no
known analogue of the KCH theorem for the decoherence
functional. However, as shown in \cite{histories,djs}, $D(.,.)$ can be
converted into an additive  {\sl  vector valued measure}
$\ket{.}$ on $\fA$ which is valued in a Histories Hilbert space
$\cH$. $\cH$ is itself 
constructed from $\fA$ using a GNS-type  construction, as follows.

Consider the set $\mathcal F$ of complex functions $f :\mA \rt
\mathbb{C}$ which have support on finitely many elements of
$\mA$ and let $\chi_\alpha$ correspond to the characteristic function
for every $ \alpha \in \mA$. $\mathcal F$ can then be vested with a
vector space structure  
\begin{eqnarray}
(f_1+f_2)(\alpha )&\equiv & f_1(\alpha ) +f_2 (\alpha )\\
(\lambda f_2 )(\alpha )&\equiv & \lambda f_2 (\alpha ), 
\end{eqnarray} 
with an inner product given by the decoherence functional 
\begin{equation}
\langle f_1 ,f_2\rangle \equiv \sum_{\alpha\in
  \mathcal{U}}\sum_{\beta\in \mathcal{U}}f_1(\alpha )^*D(\alpha ,\beta
)f_2(\beta ). 
\end{equation}
In particular, writing $\chi_\alpha=\ket{\alpha}$, we see
that
\begin{equation}
  \braket{\alpha}{\alpha}=D(\alpha,\alpha) 
\end{equation}
which is the quantum measure.  Taking the equivalence class of vectors
in $\mathcal F$ related by those of zero norm, and taking the Cauchy
completion, gives us the  Histories  Hilbert space $\cH$ \cite{djs}. 

\paragraph{Definition} \textit{ The quantum vector measure is a map $\ket\cdot :\mA\rt \cH$ 
such that 
\begin{enumerate}
\item For $\alpha\in \mA$, $\braket{\a}{\a} =\mu (\alpha )$, where $\mu$ is the quantum measure on $(\Omega , \mA)$
\item For any finite collection $\a_i\in\mA$, $i=1, \ldots n < \infty$   such that $\a_i \cap\a_j=\emptyset$ $\forall
  i\neq j$    
\begin{equation}
\ket{\bigsqcup_i^n \a_i}  =\sum_i^n \ket{\a_i}. 
\end{equation}
\end{enumerate}
}

While a classical probability measure on $\mA$ is (almost) guaranteed
to extend to its sigma algebra $\mS$, a  vector measure on $\fA$
extends to one on $\fS$ only if certain convergence conditions, given by the Caratheodary-Hahn-Kluvnek theorem
are satisfied \cite{diestel}. In complex CSG models, for example, the
complex measure does not always extend \cite{djs,sz}.

\subsection{Quantum sequential growth dynamics}
\label{qsgd.ssec}

As in the CSG models, in QSG the vector measure $\ket{.}: \fA \rightarrow \cH$  
 is finitely additive.  Starting with the state $\ket{\cyl{c_1}} = \ket{\Omega}$, every transition $\cT_n: c_n
\rightarrow c_{n+1}^i$ in $\cP$ is 
implemented by acting on $\ket{\cyl{c_n}}$ with a {\sl {transition
  operator}} $\hhT_n$, which generate the transition operator algebra
$\cA$.  The requirements of MSR, GC and QBC then lead to various constraints on
$\cA$.  

In \cite{djs,sz}  the simplest class of of QSG models were
constructed with $\cH \simeq \bC$, and hence $\cA$ is commutative. In this case the     
  decoherence functional simplifies to a product form 
  $D(\alpha,\beta)=A^*(\alpha)A(\beta)$. Implementing the three
  conditions on this dynamics results in a form for the transition
  operator that is very  similar to CSG
  models and thus can be studied more extensively.  It is our aim to
  extend this construction to $\cH$ of arbitrary
  dimensions, or equivalently,  to a transition operator algebra $\cA$
  that is  not
  necessarily commutative.

We will adopt the familial language of CSG wholesale for the transition
operators, and use the following notation: $\hG_n$ denotes a
gregarious operator, $\hTi_n$ the timid transition operator and
$\hhT_n$ a generic transition operator.

As in the case of the CSG, we wish to impose appropriate version of
the  MSR, GC and BC, thus constraining  the transition operators
$\hhT_n$. Any $\ket{c_n} $ can be expressed as a string of transition
operators acting on $\ket{\Omega}$
\begin{equation}
\ket{c_n} = \hhT_{n-1} \ldots \hhT_1 \ket{\Omega} \equiv
\hhT_{\gamma(c_n)} \ket{\Omega}.  
  \end{equation} 
where $\gamma(c_n)$ denotes the path in $\cP$ from the single element
causal set to $c_n$. 

\noindent {\bf Markov Sum Rule (MSR):}\\

As we have already seen, additivity of the measure ensures the
MSR. Explicitly,  
$\mA$ 
\begin{equation}
\cyl{c_n} = \bigsqcup_{i \in \cI(c_n)} {c_{n+1}^i}  \Rightarrow
\ket{c_n} =\sum_{j\in \mathcal{J}(c_n)}\ket{c_{n+1}^j}, 
\end{equation}
where we use the convention $\ket{c_n} \equiv \ket{\cyl{c_n}}$.  
This implies that 
\begin{equation} 
\ket{c_n} =\sum_{j \in \mathcal{J}(c_n)}\hhT_{n}^j \ket{c_n}
 \Rightarrow \sum_{j \in  \mathcal{J}(c_n)}\hhT_{n}^j = \mathds{I}  
\end{equation} 
 where $\mathbb{I}$ denotes the identity operator. 

\noindent {\bf General Covariance:}\\

The simplest implementation of GC on labelled poscau $\cP$, is to
require that $\ket{c_n}=\ket{c_n'}$ whenever $c_n' \sim c_n$ and
$c_{n+1}' \sim c_{n+1}$, or equivalently
$\hhT_{\gamma(c_n)} \ket{\Omega} =\hhT_{\gamma(c_n')} \ket{\Omega}
$. This implies that the inner product or quantum measure
$\braket{c_n}{c_n}=\braket{c_n'}{c_n'}$.   There is no operator
ordering ambiguity,  since $\hhT_{\gamma}$ is strictly time ordered.

\noindent {\bf Note:}  It is also possible to introduce a phase $\ket{c_n}=e^{i
  \Phi(c_n,c_n')}\ket{c_n'}$ since this will leave the quantum
measure invariant.  Because of the ensuing ambiguities, or generation
of new parameters   this will
result in, we do not consider this option henceforth.

In the next section we come to the main focus of our paper, namely on
the  implementation of Bell Causality in QSG models.

\subsection{Implementing Quantum Bell Causality} 
\label{qbc.ssec}

As discussed in the section \ref{sgd.ssec},  causality implies that
the transition operators associated with the related Bell 
pairs of transitions $\{\cT_n,\cT_n'\}, \{\cT_m,\cT_m' \}$ are 
related, i.e., that $\exists \, \, \mathcal{F}$ such that 
\begin{equation}
\cF(\hhT_n,\hhT_n',\hhT_m,\hhT_m')=0 \label{gBCGen.eq} 
\end{equation}
where the $\hhT_n,\hhT_n',
\hhT_m,\hhT_m'$ are the associated transition
operators. 

In CSG models, the BC condition results in a simple rule
\eqref{csgbc.eq} for the transition probabilities. However, 
because the related Bell pairs involve two different stages,
there are operator ordering ambiguities for the transition
operators. Any implementation of its quantum version, quantum Bell
causality (QBC)  requires us to make a specific
choice of operator ordering.  

Before making explicit choices of operator ordering, we prove 
with some assumptions on $\cF$, a fairly strong result in QSG,
analogous to   Lemma \ref{csgonelemma.lemma}  of  \cite{csgone}, which states that every transition
operator from $c_n$ can be obtained from the gregarious transition
operator from $c_n$.

Assume that 
\begin{enumerate}  
\item For all non timid transition operators $\hhT_n$,  with  $n>m$,
  $\cF$ can be re-expressed as 
\begin{equation}
\hhT_n=\tcF(\hhT_n',\hhT_m,\hhT_m') \label{gBC.eq} 
\end{equation}
\item For the gregarious transition operators $\hG_{n,m}$ and $n>m$
  \begin{equation}
\tcF(\hG_n,\sum_{j \in \bcJ(c_m)}\hhT_m^j,\hG_m)=\sum_{j\in \bcJ(c_m)}\tcF(\hG_n,\hhT_m^j,\hG_m) \label{gBCsum.eq} 
\end{equation}
where the index set $\bcJ(c_m)$ is over  non-gregarious transitions
from $c_m$. 
\end{enumerate} 
The requirement Eqn. \eqref{gBC.eq} is already very strong, since 
it means that every non-timid transition operator $\hhT_n$ is 
determined by the  $n^{th}$ stage  gregarious transition operator
$\hG_n$, i.e., 
\begin{equation}
  \hhT_n=\tcF(\hG_n,\hhT_m,\hG_m). 
\end{equation}
The MSR therefore implies that this is also true for the timid
transition operator $\hTi_n$. Moreover, inductively one sees that the earlier stage
transition operator  $\hhT_m$ in turn
depends on $\hG_m$ and so on. This implies the following: 
\begin{lemma} 
Let $\gamma(c_n)= \{c_1, 
c_2(c_n) \ldots c_n \}$ denote the unique labelled path in
$\cP$ to $c_n$. Then using GC,  any  transition operator $\hhT_n$ can
be constructed entirely from the gregarious operators
$\hG_m:\ket{c_m(c_n)}  \rightarrow \ket{c_{m+1}^g(c_n)}$, where $m\leq
n$.       
\end{lemma}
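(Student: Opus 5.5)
The plan is a strong induction on the stage $n$. The inductive statement is: for every $c_n$ and every child $c_{n+1}^i$, the operator $\hhT_n:\ket{c_n}\rightarrow\ket{c_{n+1}^i}$ is a (fixed, $\tcF$-built) expression in the gregarious operators $\hG_m$ attached to the nodes $c_m(c_n)$ on paths in $\cP$ leading to $c_n$, for $m\leq n$. GC lets us pick a convenient representative labelling of $c_n$, since $\ket{c_n}$ is independent of the natural labelling.

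The base case $n=1$ is immediate. From $c_1=\achain_1$ there are only two transitions, the gregarious one $\hG_1$ and the timid one. The MSR gives $\hTi_1=\id-\hG_1$.

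For the inductive step, fix $c_n$ and a child $c_{n+1}^i$ and treat three cases.

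1. If the transition is gregarious, there is nothing to prove.
2. If it is non-timid and non-gregarious, its spectator set is a nonempty future set. It is therefore Bell paired with the gregarious transition $\cG_n$ from $c_n$, and the pair has nonempty common spectators. Removing the common spectators produces a related Bell pair $(\cT_m,\cG_m)$ with $m<n$, where $c_m\subset c_n$ is obtained by deleting maximal spectator elements one at a time. Assumption \eqref{gBC.eq} then gives $\hhT_n=\tcF(\hG_n,\hhT_m,\hG_m)$. The induction hypothesis expresses $\hhT_m$ (and trivially $\hG_m$) through gregarious operators at stages $\leq m$.
3. If it is timid, the MSR gives
\begin{equation}
\hTi_n=\id-\hG_n-\sum_{j}\tcF(\hG_n,\hhT_{m_j}^j,\hG_{m_j}),
\end{equation}
and each term is covered by case 2.

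The key point to check is that the stage $m$ causal sets $c_m$ reached this way are genuinely nodes lying on paths to $c_n$. This is why the lemma is phrased via $\gamma(c_n)$ together with GC. Deleting a maximal element of $c_n$ gives some $c_{n-1}$ with a natural labelling in which that element is labelled last. By GC, $\ket{c_n}$ can therefore be computed along a labelled path passing through $c_{n-1}$, and iterating this reaches $c_m$. The vector $\ket{c_{n+1}^i}=\hhT_n\ket{c_n}$ is unchanged by this choice of path, which is what justifies indexing $\hG_m$ by $c_m(c_n)$.

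The main obstacle is this last identification. The nodes $c_m$ produced by Bell reduction need not lie on the particular labelled path $\gamma(c_n)$ one started with, because they are obtained by deleting spectators, not by truncating the labelling. To handle it, I would choose the representative labelling of $c_n$ adapted to the transition in question, namely spectator elements labelled last. This is possible because the spectator set is a future set, so such a labelling is natural. With that choice, the $c_m$ of the Bell pair is exactly the stage $m$ truncation $c_m(c_n)$ of $\gamma(c_n)$. Each recursive step invokes this choice afresh on the smaller causal set, so the recursion terminates in gregarious operators along the (GC-equivalent) paths.

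Assumption \eqref{gBCsum.eq} is not needed for the existence statement itself. It would be used only if one wants the timid operator written compactly as a single $\tcF$ of summed operators.
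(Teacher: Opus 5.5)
Your proposal is correct and follows essentially the same route as the paper. Each non-timid operator comes from its Bell pair with the gregarious transition via \eqref{gBC.eq}, the timid operator comes from the MSR, and the argument is iterated down the stages. Your extra care in labelling the removed spectators last makes explicit what the paper leaves implicit in the phrase ``using GC'': the reduced causal sets are stems of $c_n$, lying on possibly several GC-equivalent paths rather than on one fixed $\gamma(c_n)$.
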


We now show that the  timid transitions in particular take on a
useful form. 
\begin{lemma}\label{GTA.lemma}
Let $c_n$ be an $n$ element causal set with $\mm(c_n)=\{e_{m_1}, e_{m_2}
,\ldots e_{m_\bmm}\}$ the set
of its maximal elements.  The timid transition
operator $\hTi_n$ takes the form 
\begin{equation}
\hTi_n=\id-\sum_{\ell =1}^\bmm \sum_{k_1<\cdots k_\ell}^\bmm
(-1)^{\ell
  -1} \biggl( \tilde{\mathcal{F}}(\hG_n,\mathbb{I}-\hG_{n-\ell}^{(k_1,
  \ldots, 
    k_\ell )}  ,\hG_{n-\ell}^{(k_1, \ldots, k_\ell )})\biggr)  - \hG_n
, \label{timidop.eq} 
\end{equation}
where $\hG_{n-\ell}^{(k_1, \ldots k_\ell
  )}$ denotes the gregarious transition operator from the $n-\ell$ element
causal set  $c_n\setminus \lbrace m_{k_1},\cdots ,m_{k_\ell}\rbrace$. 
\end{lemma}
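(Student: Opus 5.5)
The plan is to translate the set-theoretic identity \eqref{zetal.eq} into an operator identity, using the vector measure and the inclusion--exclusion formula \eqref{gmutimid.eq} together with \eqref{gmuzeta.eq}. We then express each intersection term via the assumed Bell causality relation \eqref{gBC.eq}, with the sum rule \eqref{gBCsum.eq}. Since $\ket{\cdot}$ is finitely additive, it is a generalised measure, so \eqref{gmutimid.eq} gives
\begin{equation*}
\hTi_n\ket{c_n}=\ket{c_n}-\sum_{\ell=1}^{\bmm}(-1)^{\ell-1}\sum_{k_1<\cdots<k_\ell}\ket{\zeta_{k_1}\cap\cdots\cap\zeta_{k_\ell}} .
\end{equation*}
By \eqref{gmuintzeta.eq}, each term in the sum is $\sum_j \hhT_n^{j(k_1,\ldots,k_\ell)}\ket{c_n}$. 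The index $j$ runs over the children of $c_n$ in which all of $e_{m_{k_1}},\ldots,e_{m_{k_\ell}}$ are spectators. Reading everything at the operator level, as is done for the MSR, reduces the claim to identifying the operator $\sum_j \hhT_n^{j(k_1,\ldots,k_\ell)}$.

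The key kinematic step is a bijection between two sets of transitions. The first set consists of the transitions from $c_n$ that keep the maximal elements $\{e_{m_{k_1}},\ldots,e_{m_{k_\ell}}\}$ as spectators. The second consists of the transitions from $c_{n-\ell}\equiv c_n\setminus\{e_{m_{k_1}},\ldots,e_{m_{k_\ell}}\}$. Since the removed elements are maximal, $c_{n-\ell}$ is a causal set. Each transition from $c_{n-\ell}$ has a precursor set that is a past set of $c_{n-\ell}$, and hence also a past set of $c_n$ excluding the removed elements. Each such pair of transitions is therefore a Bell-partner pair, obtained by removing common spectators. I will need to check one edge case. The gregarious transition from $c_n$ corresponds to the gregarious transition from $c_{n-\ell}$; it is excluded on the $c_n$ side, where it is handled separately as $-\hG_n$, and correspondingly on the $c_{n-\ell}$ side. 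The index sets $\bcJ$ on both sides then match under the bijection.

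Next I would apply \eqref{gBC.eq} to each such pair of partners, pairing with the gregarious transitions $\hG_n$ and $\hG_{n-\ell}^{(k_1,\ldots,k_\ell)}$. This gives $\hhT_n^{j}=\tcF(\hG_n,\hhT^{j}_{n-\ell},\hG^{(k_1,\ldots,k_\ell)}_{n-\ell})$. Summing over the non-gregarious $j$ and using \eqref{gBCsum.eq}, the sum becomes $\tcF(\hG_n,\sum_{j\in\bcJ}\hhT^j_{n-\ell},\hG_{n-\ell}^{(k_1,\ldots,k_\ell)})$. The MSR at $c_{n-\ell}$ then gives $\sum_{j\in\bcJ}\hhT^j_{n-\ell}=\id-\hG^{(k_1,\ldots,k_\ell)}_{n-\ell}$. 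Substituting back gives \eqref{timidop.eq}, with the term $-\hG_n$ accounting for the gregarious child. That child lies in every $\zeta_k$ but was not covered by $\tcF$, so its inclusion--exclusion coefficient $\sum_\ell(-1)^{\ell-1}\binom{\bmm}{\ell}=1$ collapses to a single $\hG_n$.

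I expect the main obstacle to be the bookkeeping of the gregarious child, together with the passage from vector identities on $\ket{c_n}$ to operator identities. For the gregarious child, I will verify carefully that it sits in every intersection and is subtracted exactly once. For the passage to operators, I would adopt the paper's convention from the MSR and treat these as operator identities.
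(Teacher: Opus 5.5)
Your proposal is correct and follows essentially the same route as the paper. Both arguments combine the MSR with inclusion--exclusion over the non-disjoint cover $\{\zeta_k\}$ and identify, via Bell partners, the children of $c_n$ having $\{e_{m_{k_1}},\ldots,e_{m_{k_\ell}}\}$ as spectators with all children of $c_n\setminus\{e_{m_{k_1}},\ldots,e_{m_{k_\ell}}\}$; they then apply \eqref{gBC.eq} and \eqref{gBCsum.eq} together with the MSR at the smaller causal set to obtain $\id-\hG_{n-\ell}^{(k_1,\ldots,k_\ell)}$, and collapse the separated gregarious terms via $\sum_{\ell}(-1)^{\ell-1}\binom{\bmm}{\ell}=1$ into a single $-\hG_n$.
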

\begin{proof}
  
Let $\hTi_n$ denote the timid transition operator from $c_n$ and let 
$\hhT_n^j$ denote a non-timid operator, with $j\in \bcJ(c_n)$, 
the index set $\cJ(c_n)$ without the timid transition.  Using the MSR 
\begin{equation}
  \hTi_n=\mathbb{I}- \sum_{i \in \bcJ(c_n)} \hhT_n^i  \Rightarrow
  \hTi_n\ket{c_n} =\ket{c_n} - \sum_{i \in \bcJ(c_n)} 
  \ket{c_{n+1}^i}  \label{timidopone.eq} 
\end{equation}

From Eqn. \eqref{zetal.eq} the last term can be rewritten as 
\begin{equation} \sum_{j \in \bcJ(c_n)} 
  \ket{c_{n+1}^j} = \ket{\bigsqcup_{j \in \bcJ(c_n)} c_{n+1}^j} =
  \ket{\bigcup_{k=1}^\bmm \zeta_k}. 
\end{equation}

From Eqn \eqref{gmuzeta.eq} 
\begin{eqnarray} 
\ket{\bigcup_{k=1}^{\bmm} \zeta_k}  &=&  \sum_{k=1}^{\bmm}
\ket{\zeta_k}  -\sum_{k_1<k_2}^{\bmm}\ket{\zeta_{k_1}  \bigcap
\zeta_{k_2} } + \cdots + (-1)^{\bmm-1}
                                        \ket{\bigcap_{k=1}^{\bmm} \zeta_k} \nonumber \\
 &=& \sum_{\ell=1}^\bmm  \sum_{k_1<k_2, \ldots k_\ell }^\bmm (-1)^{\ell-1}
                                        \ket{\zeta_{k_1} \bigcap
     \zeta_{k_2} \ldots \bigcap \zeta_{k_\ell} }
\label{unionzeta.eq} 
\end{eqnarray} 
Now the set $\bigl(\zeta_{k_1} \cap \ldots \cap \zeta_{k_\ell}\bigr)$
is also the (disjoint) union  
of all the cylinder sets of the children of $c_n$ for which the subset of maximal elements,   $\{
e_{m_{k_1}}, \ldots, e_{m_{k_\ell}}\}$ is in the spectator set. Using Eqn. \eqref{gmuintzeta.eq} 
\begin{equation}
\ket{\zeta_{k_1} \cap \ldots \cap \zeta_{k_\ell}} =\sum_{j \in
  \cJ^{(k_1,\ldots k_\ell)}}  
\ket{c_{n+1}^{j}} = \biggl( \sum_{j \in
  \cJ^{(k_1,\ldots k_\ell)}} 
\hhT_n^{j}\biggr) \ket{c_n}, \label{intzeta.eq}
\end{equation}
where $\cJ^{(k_1,\ldots k_\ell)}$ labels the set of children of $c_n$
  with   $\{e_{m_{k_1}}, \ldots, e_{m_{k_\ell}}\}$  in the spectator set. 

Consider a  transition  $\cT_n^j:c_n \rightarrow c_{n+1}^j$ with $\{e_{m_{k_1}}, \ldots e_{m_{k_\ell}}\}$ in its 
spectator set 
$j \in \cJ^{(k_1,\ldots, k_\ell)}$, and its Bell partner
$\cT_{n-\ell}^{j(k_1, \ldots k_\ell)}$ which takes
$c_{n-\ell}^{(k_1,\ldots, k_\ell)} \equiv c_n \backslash
\{e_{m_{k_1}}, \ldots, 
e_{m_{k_\ell}}\} $ to $c_{n-\ell +1}^{j,(k_1, \ldots, k_\ell)}$, $j \in
\cJ(k_1, \ldots, k_\ell)$.  The index set $\cJ(k_1, \ldots, k_\ell)$
therefore labels {\it all}
transitions from   $c_{n-\ell}^{(k_1,\ldots, k_\ell)}$. 

For a non-gregarious transition $\cT_n^j$, consider the Bell pair
$(\cT_n^j, \cG_n)$ which lies in the same  Bell
family as $(\cT_{n-\ell}^{j(k_1,\ldots, k_\ell )}, \cG_{n-\ell}
^{(k_1\cdots k_\ell )})$. If $\hhT_n^j$ denotes the transition
operator associated with $\mathcal{T}_n^j$, and $\hhT_{n-\ell}^{j(k_1\cdots
  k_\ell )}$ with $\cT_{n-\ell}^{j(k_1\cdots
  k_\ell )}$, respectively,  the quantum Bell causality (QBC) relation
Eqn. \eqref{gBC.eq} implies that 
\begin{equation}
\hhT_n^j = \tcF(\hG_n, \hhT_{n-\ell} ^{j(k_1\cdots
  k_\ell )}, \hG_{n-\ell}^{(k_1\cdots
  k_\ell )}). \label{nongreggbc.eq} 
 \end{equation} 
Inserting into Eqn. \eqref{intzeta.eq} after separating out the gregarious transition
operator and using the property
Eqn. \eqref{gBCsum.eq}  
\begin{eqnarray} 
 \ket{\zeta_{k_1} \cap \ldots \cap \zeta_{k_\ell}} & =& \biggl(
 \sum_{j\in \bcJ^{(k_1,\ldots k_\ell)}} \hhT_n^j+ \hG_n
                                                     \biggr)\ket{c_n}
                                                     \nonumber \\
  &=& \biggl(
 \sum_{j\in \bcJ^{(k_1,\ldots k_\ell)}} \tcF(\hG_n, \hhT_{n-\ell} ^{j(k_1\cdots
  k_\ell )}, \hG_{n-\ell}^{(k_1\cdots
  k_\ell )}) + \hG_n\biggr)\ket{c_n}
      \nonumber \\
  &=& \biggl( \tcF(\hG_n, \sum_{j\in \bcJ^{(k_1,\ldots k_\ell)}} \hhT_{n-\ell} ^{j(k_1\cdots
  k_\ell )}, \hG_{n-\ell}^{(k_1\cdots
  k_\ell )}) + \hG_n\biggr)\ket{c_n}, 
\end{eqnarray}
where now $\bcJ^{(k_1,\ldots k_\ell)}$ labels all but the gregarious
transition from $c_{n-\ell}^{(k_1,\ldots k_\ell)}$. Thus
\begin{equation}
  \sum_{j\in \bcJ^{(k_1,\ldots k_\ell)}} \hhT_{n-\ell} ^{j(k_1\cdots
  k_\ell )} = \id - G_{n-\ell}^{(k_1,\ldots k_\ell)}, 
\end{equation}
i.e.,
\begin{equation}
\ket{\zeta_{k_1} \cap \ldots \cap \zeta_{k_\ell}} =\biggl( \tcF(\hG_n, \id - G_{n-\ell}^{(k_1,\ldots k_\ell)} , \hG_{n-\ell}^{(k_1\cdots
  k_\ell )}) + \hG_n)\biggr) \ket{c_n}. 
  \end{equation} 
  Thus, Eqn \eqref{unionzeta.eq} can be expressed as
  \begin{equation}
\ket{\bigcup_{k=1}^{\bmm} \zeta_k}  =\sum_{\ell=1}^\bmm
\sum_{k_1<k_2, \ldots i_\ell }^\bmm (-1)^{\ell-1} \biggl( \tcF(\hG_n, \id - G_{n-\ell}^{(k_1,\ldots k_\ell)} , \hG_{n-\ell}^{(k_1\cdots
  k_\ell )}) + \hG_n)\biggr) \ket{c_n}. 
    \end{equation}

Inserting into Eqn. \eqref{timidopone.eq}
  \begin{eqnarray}
\hTi_n&=&\id-\sum_{\ell =1}^\bmm \sum_{k_1<\ldots k_\ell }^\bmm
    (-1)^{\ell
    -1}\biggl(\tilde{\mathcal{F}}(\hG_n,\id-\hG_{n-\ell}^{(k_1\ldots
    k_\ell )}  ,\hG_{n-\ell}^{(k_1\cdots k_\ell )})\biggr)
    -\sum_{\ell=1}^\bmm \binom{\bmm}{\ell} 
    (-1)^{\ell -1}  \hG_n \nonumber\\
&=&\mathbb{I}-\sum_{\ell=1}^\bmm\sum_{k_1<\ldots k_\ell }^\bmm
    (-1)^{\ell
    -1}\biggl(\tilde{\mathcal{F}}(\hG_n,\id-\hG_{n-\ell}^{(k_1,\ldots
    k_\ell )}  ,\hG_{n-\ell}^{(k_1, \ldots k_\ell )})\biggr) -\hG_n \label{timid.eq}
\end{eqnarray}

\end{proof}

Along with Eqn. \eqref{nongreggbc.eq}  this gives  a quantum
generalisation of  Lemma \ref{csgonelemmaone.lemma} of \cite{csgone},
namely that all transition operators  from $c_n$ can be obtained in
from the gregarious operator $\h G_n$ from $c_n$ (and lower
stage operators). However, to make 
further progress we must make  explicit choices for the QBC condition Eqn. \eqref{gBCGen.eq}.

Given the operators $\hhT_n,\hhT_n',\hhT_m,\hhT_m'$ that partake in
the Bell causality condition, we remind ourselves that  the
classical (and hence 
commutative) BC condition can expressed either as  $\frac{P_n}{P_n'}= \frac{P_m}{P_m'}$
for non-vanishing transition probabilities $P_n,P_n', P_m, P_m'$, or more generally as the
product Bell causality (PBC) \cite{dsobs} $  P_n P_m' = P_n' P_m $
which can be implemented even when the transition probabilities vanish
\cite{rv,dsobs}. Implementing the BC  as a quantum condition means making operator 
ordering choices.

We now consider  three linear
implementations of QBC below.  Note that we  limit ourselves to transition
operators that can be inverted, i.e., $\hhT_n^{-1}$ always exists.  
   
\begin{description}
\item[1) Time Ordered (TOBC):] For $n >m$ let 
\begin{equation}
\hhT_n\hhT_m'=\hhT_n'\hhT_m
\end{equation}
\item[2) Non-Time Ordered (NTOBC):] 
\begin{equation}
\hhT_n \hhT_n'^{-1}=\hhT_m\hhT_m'^{-1} \Leftrightarrow \hhT_n' \hhT_n^{-1}=\hhT_m'\hhT_m^{-1} 
\end{equation}
\item[3)Causal Past Ordered (CPOBC):]  
\begin{equation}
\hhT_n\hhT_m'=\hhT_m\hhT_n', 
\end{equation}
when the precursor set $\pre(\cT_n)=\pre(\cT_m)$  is strictly larger than
$\pre(\cT_n')=\pre(\cT_m')$. When $\pre(\cT_n)=\pre(\cT_n')$  there is
no basis for the ordering 
\begin{eqnarray}
  \hhT_n\hhT_m' = \hhT_m\hhT_n' &\quad \& & \hhT_n'\hhT_m= \hhT_m'\hhT_n
                                         \nonumber \\ 
 \Rightarrow     \hhT_n = \hhT_m\hhT_n'  \hhT_m'^{-1}=  \hhT_m'^{-1}
  \hhT_n'\hhT_m                & \Rightarrow &   \hhT_m'\hhT_m \hhT_n'  =  
  \hhT_n'\hhT_m \hhT_m'. 
\end{eqnarray}
The  last equality is an additional constraint on the 
  algebra (note that it cannot be case as a commutation 
  relation).
\end{description}

Note that these implementations are all linear, and reduce to the classical BC
condition. Hence even if there are other ways to implement these
conditions with inclusions of non-linear terms at higher order, to
leading order we should expect these terms to exist. It is easy to
check that all three conditions satisfy Eqns \eqref{gBCGen.eq},
\eqref{gBC.eq} and \eqref{gBCsum.eq}.

The question that we seek an answer to is whether the  algebra $\cA$
generated by the transition operators is non-commutative. 
Note that  requiring the transition operators to be invertible  does not contradict the
Gelfand Mazur theorem \cite{rudin}  which states that if
{\it every}  element of the algebra is  invertible,  then it is 
 isomorphic to $\bC$.  Even though the transition operators may not
 themselves be invertible, the algebra $\cA$ could contain operators
 that are non-invertible elements which are not themselves transition amplitudes.

In the following subsections we consider each of these  three possible
quantum  Bell causality conditions. Our strategy will be to use the 
atomisation   construction of  Lemma \ref{csgonelemma.lemma} of
\cite{csgone}, now using  GC and the QBC to relate the
  $\hG_n$  to $\hQ_n$. This is also supplemented by We will then use  Lemma 
\ref{GTA.lemma}  for the general form of the transition
operators to express them in terms of the $\hQ_n$ to the extent
possible.  Because of the assumption of non-commutativity, the
atomisation   pathways are not unique and lead to new relations
between the operators.

Before proceeding we note that if the  condition of invertibility is relaxed,
it is relatively easy to see that the dynamics generates an ever increasing number of new parameters at every
stage. We deem this as unsatisfactory both from a practical, solution
seeking point of view, but also from a physics point of view since the
proliferation of parameters signals a type of unpredictability.

\subsection{Time Ordered Bell Causality}\label{sec1}
\label{tobc.ssec} 

The first thing is to find the  relation between $\hG_n$ and $\hQ_n$
in the manner of  Lemma \ref{csgonelemma.lemma}   of \cite{csgone}
(see Fig \ref{atomisation.fig}).  Let
$\hB_k^{(i)}$ be the transition operator associated with the
  transition $\cB_k^{(i)}$, and $\hG_k^{(i)}$ the transition operator 
  corresponding to the gregarious transition $\cG_k^{(i)}$.

Using the related Bell pairs of transition
operators 
$\{\hB_n^{(1)}, \hG_n^{(1)} \}$  and $\{\hB_{n-1}^{(0)},
\hG_{n-1}^{(0)} \}$ in Fig \ref{atomisation.fig}, TOBC gives 
\begin{equation}
\hB_{n}^{(1)}\hG_{n-1}^{(0)}=\hG_{n}^{(1)}\hB_{n-1}^{(0)}, 
\end{equation}
while from  GC 
\begin{equation}
\hG_{n}^{(0)}\hB_{n-1}^{(0)}=\hB_{n}^{(1)}\hG_{n-1}^{(0)}. 
\end{equation}
Eliminating $\hB_{n-1}^{(0)}$ we find 
\begin{equation}
  \hG_{n}^{(0)}= \hG_{n}^{(1)}. 
  \end{equation} 
Using atomisation, this means that $\hG_n^{(0)}=\hQ_n$ for all
$n$ and all $c_n$.   Thus we have proved the TOBC version of Lemma \ref{csgonelemma.lemma}
of \cite{csgone}: 
\begin{lemma} 
Every gregarious transition operator $\hG_n =
\hQ_n$. \label{lemmatwotobc.lemma} 
\end{lemma}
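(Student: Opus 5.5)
The argument is an induction along an atomisation path, transcribing Lemma \ref{csgonelemma.lemma} of \cite{csgone} into operator language; the one-step identity derived just above the statement, $\hG_n^{(0)}=\hG_n^{(1)}$, does almost all the work. Fix $c_n=c_n^{(0)}\nsim\achain_n$ and an atomisation path $S^{(\alpha)}(c_n^{(0)})=\{c_n^{(0)},c_n^{(1)},\ldots,c_n^{(k)}=\achain_n\}$ as in Fig \ref{AD.fig}. The aim is to show $\hG_n^{(i)}=\hG_n^{(i+1)}$ for every $i=0,\ldots,k-1$. Chaining these equalities gives $\hG_n^{(0)}=\hG_n^{(k)}$. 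Since $c_n^{(k)}=\achain_n$, its gregarious transition is $\achain_n\rightarrow\achain_{n+1}$, so $\hG_n^{(k)}=\hQ_n$. For $c_n\sim\achain_n$ there is nothing to prove.

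For the step $i\to i+1$, I would redo the two-equation elimination with $c_n^{(i)}$ in the role of $c_n^{(0)}$, which is legitimate because $c_n^{(i+1)}$ is obtained from $c_n^{(i)}$ by the same single-element move.

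\textbf{Covariance square.} Removing the maximal element $e_{g_{i+1}}$ from $c_n^{(i)}$ gives an $(n-1)$-element set $d$. From $d$, the causal set $c_{n+1}^{(i,g)}$ is reached by two labelled paths: add $e_{g_{i+1}}$ in its original position and then a gregarious element, or first add the gregarious element (giving $c_n^{(i+1)}$) and then add $e_{g_{i+1}}$ via $\cB_n^{(i+1)}$. GC on these two paths gives
\begin{equation*}
\hG_n^{(i)}\hB_{n-1}=\hB_n^{(i+1)}\hG_{n-1}
\end{equation*}
acting on $\ket{d}$. Here $\hB_{n-1}$ is the transition $d\rightarrow c_n^{(i)}$ and $\hG_{n-1}$ is the gregarious transition from $d$.

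\textbf{Bell family.} In the pair $(\cB_n^{(i+1)},\cG_n^{(i+1)})$ the gregarious element added at stage $n$ is a common spectator. Removing it produces the related Bell pair $(\hB_{n-1},\hG_{n-1})$ from $d$. TOBC then gives
\begin{equation*}
\hB_n^{(i+1)}\hG_{n-1}=\hG_n^{(i+1)}\hB_{n-1}.
\end{equation*}
Comparing the two displays and right-multiplying by $\hB_{n-1}^{-1}$ (using the standing invertibility assumption) yields $\hG_n^{(i)}=\hG_n^{(i+1)}$.

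The main obstacle is checking that these identities hold as operator identities rather than only on particular vectors. GC as formulated equates $\hhT_{\gamma}\ket{\Omega}$, so a priori the covariance square holds only on $\ket{d}$. I would handle this exactly as the paper does for the MSR, where a relation on $\ket{c_n}$ is read as an operator identity on the transition operators attached to that node. I would also verify that the common spectator is genuinely shared by both members of the Bell pair at stage $n$, which holds because neither new element has it in its past.

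A second, minor point is well-definedness. $\hG_n$ is attached to the unlabelled $c_n$, and atomisation paths are not unique. The chain of equalities, however, ends at $\hQ_n$ regardless of the path chosen, so every path gives the same value and the result is consistent.
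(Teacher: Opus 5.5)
Your proposal is correct and is essentially the paper's proof. The paper combines the GC relation $\hG_n^{(0)}\hB_{n-1}^{(0)}=\hB_n^{(1)}\hG_{n-1}^{(0)}$ with the TOBC relation for the related Bell pairs $\{\hB_n^{(1)},\hG_n^{(1)}\}$ and $\{\hB_{n-1}^{(0)},\hG_{n-1}^{(0)}\}$, cancels the invertible $\hB_{n-1}^{(0)}$ to obtain $\hG_n^{(0)}=\hG_n^{(1)}$, and repeats this along the atomisation path until it reaches $\hQ_n$. Your version only adds two things: you write out the generic step $i\to i+1$ explicitly, and you flag that GC is being used as an operator identity rather than only on $\ket{d}$, which the paper assumes without comment.
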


We now focus on the transitions from the antichain $\achain_n$. In
particular, consider the transition operator $\hA_n^{(1)} :\ket{ a_n}  \rightarrow \ket{\bchain_{n+1}}$
where the new element $e_{n+1} $ in $\bchain_{n+1}$ has a single
element to its past.  For any $1 \leq m < n$,  $\hA_n^{(1)}$ has a  Bell
partner $\hA_m^{(1)}: \ket{\achain_m} \rightarrow \ket{\bchain_{m+1}}$. In
particular, for $m=1$,   
$\hA_1^{(1)}: \ket{a_1} \rightarrow \ket{\chain_2}$, where $\chain_2$ is the
$2$-element chain and $\hA_1^{(1)}=\id - \hQ_1$.  Using the related Bell pairs $(\hA_n^{(1)},\hQ_n)$ and
$(\hA_m^{(1)},\hQ_m)$ for $m < n $ and the TOBC, 
\begin{equation}
\hA_n^{(1)}\hQ_m=\hQ_n\hA_m^{(1)} \Rightarrow \hA_n^{(1)}= \hQ_n\hA_m^{(1)}
\hQ_m^{-1}.  \label{bnqm.eq} 
\end{equation} 
Now for $m=1$, for all $n>1$  
\begin{equation} 
\hA_n^{(1)}=\hQ_n(\hQ_1^{-1} - \id). 
\end{equation}
Reinserting into  Eqn. \eqref{bnqm.eq} both for $n$ and for $m$ this
means that 
\begin{equation}
 \hQ_n(\hQ_1^{-1} - \id) = \hQ_n \hQ_m(\hQ_1^{-1} - \id) \hQ_m^{-1}
 \Rightarrow [\hQ_1, \hQ_m]=0.
\end{equation}
In order to prove that this is the case more generally, we note that
the TOBC satisfies both conditions Eqn. \eqref{gBC.eq} and
Eqn. \eqref{gBCsum.eq}, i.e., for every Bell related pair  $\{
\hA_n,\hQ_n\}, \{\hA_m\hQ_m \}$, 
\begin{equation}
\hA_n=\hQ_n\hA_m\hQ_m^{-1} \equiv
\tcF(\hQ_n,\hA_m,\hQ_m) \label{tobccF.eq} 
\end{equation}
and therefore
\begin{equation} 
\tcF(\hQ_n, \sum_{j\in \bcJ(c_m)} \hA_m^j,\hQ_m) = \hQ_n\biggl(
\sum_{j\in \bcJ(c_m)} \hA_m^j \biggr) \hQ_m^{-1} = 
\sum_{j\in \bcJ(c_m)}  \hQ_n\hA_m^j  \hQ_m^{-1}. 
\end{equation} 

Using Lemmas \ref{GTA.lemma}  and \ref{lemmatwotobc.lemma}  and noting
that  all $G_n^{i_1\cdots i_\ell)}=Q_n$,   the timid
operator at stage $n$ can be expressed as 
\begin{eqnarray}
\hTi_n&=&\id -\sum_{\ell =1}^\bmm\sum_{i_1<\cdots i_\ell}^\bmm (-1)^{\ell
  -1}\hQ_n \bigl( \id-\hQ_{n-\ell} \bigr) \hQ_{n-\ell}^{-1} \, -\,
  \hQ_n \nonumber \\
  &=& \id -\sum_{\ell =1}^\bmm\sum_{i_1<\cdots i_\ell}^\bmm (-1)^{\ell
  -1}\hQ_n \bigl( \id-\hQ_{n-\ell} \bigr) \hQ_{n-\ell}^{-1} \, -\,
      \hQ_n \nonumber \\
  &=& \hQ_n\sum_{\ell=0}^\bmm (-1)^\ell \binom{\bmm}{\ell}
  \hQ_{n-\ell}^{-1} \label{timidtobc.eq} 
\end{eqnarray}

We now find that the $\hQ_n$'s all commute with each other.  To show this it suffices
to look at the set of transitions from  $\achain_n$, which are 
characterised completely by the number of precursor elements. Let
$\hA_n^{(k)}$ be the transition operator from $\achain_n$ to its child
$\achain^{(k)}_{n+1}$ which has $k$ elements in the precursor set of
the new element $e_{n+1}$.  

Wlog, let $k< n < m$.

We have the Bell family of pairs: $\{\hA_m^{(k)}, \hQ_m \}, \{\hA_n^{(k)},
\hQ_n \}, \{\hA_k^{(k)}, \hQ_k \}$.  Note that $\hA_k^{(k)}$ is the
timid transition, and hence $\hQ_\ell$ for $\ell < k$ doesn't belong
to this family. Choosing the first two pairs TOBC gives 
\begin{equation}
\hA_m^{(k)} \hQ_n =  \hQ_m \hA_n^{(k)} \Rightarrow  \hA_m^{(k)} =
\hQ_m \hA_n^{(k)} \hQ_n^{-1}.  \label{Lambdamn.eq} 
  \end{equation} 
  Using
  \begin{equation}
\hA_{m,n}^{(k)} =
\hQ_{m,n} \hA_k^{(k)} \hQ_k^{-1}, 
    \end{equation} 
    Eqn. \eqref{Lambdamn.eq} simplifies to
\begin{eqnarray} 
  \hA_k^{(k)} \hQ_k^{-1} \hQ_n &= &  \hQ_n \hA_k^{(k)} \hQ_k^{-1}
                                    \nonumber \\
  \Rightarrow \sum_{j=0}^k (-1)^j \binom{k}{j} \hQ_k \hQ_{k-j}^{-1} \hQ_k^{-1}
  \hQ_n &=&  \sum_{j=0}^k (-1)^j \binom{k}{j} \hQ_n \hQ_k
            \hQ_{k-j}^{-1} \hQ_k^{-1}, \label{qnqk.eq} 
  \end{eqnarray} 
where we have used the expression Eqn. \ref{timidtobc.eq} for $\hA_k^{(k)}$ and
the fact that $\bmm=k$.    

For $k=1$, the above expression reduces for all $n$ to
\begin{equation}
\hQ_1^{-1} \hQ_n - \hQ_n = \hQ_n \hQ_1^{-1} -\hQ_n  \Rightarrow
[\hQ_1, \hQ_n]=0.  \label{qoneqn.eq} 
\end{equation}
Thus, the subalgebra $\cQ$ has a non-trivial center. In fact, this in
turn 
implies that  for $k=2$ 
\begin{equation}
\hQ_2^{-1} \hQ_n - 2\hQ_1^{-1} \hQ_n + \hQ_n = \hQ_n \hQ_2^{-1} -\hQ_n
- 2 \hQ_n \hQ_1^{-1} + \hQ_n\Rightarrow
[\hQ_2, \hQ_n]=0. 
\end{equation}
This suggests an inductive generalisation.

Assume that $[\hQ_{k'}, \hQ_n]=0$ for all $n$ and  $k'<k$.   Expanding
Eqn. \eqref{qnqk.eq} 
\begin{eqnarray}
 \hQ_k^{-1} \hQ_n - k  \hQ_k \hQ_{k-1} \hQ_{k}^{-1} \hQ_n +
  \frac{k(k-1)}{2} \hQ_k \hQ_{k-2} \hQ_{k}^{-1} \hQ_n \ldots
  (-1)^{k-1} k  \hQ_k \hQ_{1} \hQ_{k}^{-1} \hQ_n + (-1)^k \hQ_n
  \nonumber \\ = \hQ_n \hQ_k^{-1} \hQ_n - k  \hQ_k \hQ_{k-1} \hQ_{k}^{-1} +
  \frac{k(k-1)}{2} \hQ_n\hQ_k \hQ_{k-2} \hQ_{k}^{-1} \ldots
  (-1)^{k-1} k  \hQ_n\hQ_k \hQ_{1} \hQ_{k}^{-1}+ (-1)^k \hQ_n.
\end{eqnarray}
Since for all $j<k$
\begin{equation}
  \hQ_k \hQ_{k-j} \hQ_{k}^{-1} \hQ_n = \hQ_{k-j}  \hQ_n =\hQ_n  \hQ_{k-j}  
\end{equation}  
all but the first terms on both sides of the equation are the same and
hence we are left with 
\begin{equation}
  \hQ_k^{-1} \hQ_n= \hQ_n \hQ_k^{-1} \Rightarrow [\hQ_k, \hQ_n]=0. 
\end{equation} 

Next, using Eqn. \eqref{tobccF.eq} and the expression
Eqn. \eqref{timidtobc.eq},  we see that for $\hA_n$ a non-timid transition
operator and  $\hA_m=\h T_m$ a timid transition operator,  
\begin{equation}
  \hA_n=\hQ_n\h T_m\hQ_m^{-1} = \hQ_n \sum_{\ell=0}^\bmm (-1)^\ell \binom{\bmm}{\ell}
  \hQ_{n-\ell}^{-1}. 
\end{equation} 
Since every   transition  can be expressed in terms of the $\hQ_n$'s
which themselves commute we have shown that 

\begin{lemma}
The algebra of finitely generated transition operators $\mathcal{A}$
using the TOBC is commutative. 
\end{lemma}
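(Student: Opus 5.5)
The plan is to reduce every generator of $\cA$ to a word in the antichain operators $\hQ_k$ and their inverses, and then show that the $\hQ_k$ pairwise commute. The proof then has three ingredients, taken in order.

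\emph{Gregarious operators.} First, I invoke Lemma \ref{lemmatwotobc.lemma}. Combining GC and TOBC along the atomisation path shows $\hG_n=\hQ_n$ for every $c_n$.

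\emph{Non-timid operators.} Next, a non-timid transition $\hhT_n$ from $c_n$ has a Bell partner $\hhT_m$ with $m<n$. I take $m$ minimal, i.e.\ I remove all common spectators so that $\hhT_m$ is a timid transition from some $c_m$. TOBC (Eqn.~\eqref{tobccF.eq}) then gives
\begin{equation*}
\hhT_n=\hQ_n\,\hTi_m\,\hQ_m^{-1}.
\end{equation*}

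\emph{Timid operators.} Lemma \ref{GTA.lemma} with $\hG=\hQ$ yields the closed form \eqref{timidtobc.eq}:
\begin{equation*}
\hTi_m=\hQ_m\sum_{\ell}(-1)^\ell\binom{\bmm}{\ell}\hQ_{m-\ell}^{-1},
\end{equation*}
where $\bmm$ is the number of maximal elements of $c_m$.

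Putting these together, every generator of $\cA$ becomes a finite linear combination of products of the $\hQ_k^{\pm1}$. Strictly, this step uses induction on the stage $n$: $\hTi_m$ at a lower stage has already been reduced, and the timid operator at stage $n$ is reduced via MSR/Lemma \ref{GTA.lemma}.

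The main step, which I expect to be the real obstacle, is commutativity of the $\hQ_k$. Here I restrict to transitions from antichains. For $\hA_n^{(k)}:\achain_n\to\achain^{(k)}_{n+1}$, the Bell family consists of the pairs $(\hA_m^{(k)},\hQ_m)$, $(\hA_n^{(k)},\hQ_n)$ and $(\hA_k^{(k)},\hQ_k)$, where $\hA_k^{(k)}$ is timid. Applying TOBC to two members directly, and also routing both through the timid member, gives two expressions for $\hA_m^{(k)}$. Equating them yields \eqref{qnqk.eq}: the operator $\hA_k^{(k)}\hQ_k^{-1}$ commutes with every $\hQ_n$.

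I then induct on $k$.
\begin{itemize}
\item For $k=1$, this reads $[\hQ_1^{-1},\hQ_n]=0$, hence $[\hQ_1,\hQ_n]=0$.
\item Assuming $[\hQ_{k'},\hQ_n]=0$ for all $k'<k$ and all $n$, I expand $\hA_k^{(k)}\hQ_k^{-1}$ using \eqref{timidtobc.eq} with $\bmm=k$. The terms involving $\hQ_{k-j}^{-1}$ with $j\ge1$ commute with $\hQ_n$ by hypothesis, once one checks the conjugation by $\hQ_k$ does not spoil this.
\item This leaves only $\hQ_k^{-1}\hQ_n=\hQ_n\hQ_k^{-1}$, so $[\hQ_k,\hQ_n]=0$.
\end{itemize}

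The delicacy in the induction step is the conjugated terms $\hQ_k\hQ_{k-j}^{-1}\hQ_k^{-1}$. By the inductive hypothesis applied with $n=k$, $\hQ_{k-j}$ commutes with $\hQ_k$, so these terms collapse to $\hQ_{k-j}^{-1}$. Invertibility of all transition operators is used here, and throughout.

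Finally, since all generators are words in the mutually commuting $\hQ_k^{\pm1}$, $\cA$ is commutative.
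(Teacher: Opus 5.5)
Your proposal is correct and follows essentially the same route as the paper. It uses $\hG_n=\hQ_n$ from Lemma \ref{lemmatwotobc.lemma}, reduces each non-timid operator to its timid Bell partner via TOBC and the closed form \eqref{timidtobc.eq}, and obtains commutativity of the $\hQ_k$ from the antichain Bell-family relation \eqref{qnqk.eq} by induction on $k$, with the same collapse $\hQ_k\hQ_{k-j}^{-1}\hQ_k^{-1}=\hQ_{k-j}^{-1}$ via the inductive hypothesis; the only precision to add is that \eqref{qnqk.eq} yields $[\hA_k^{(k)}\hQ_k^{-1},\hQ_n]=0$ only for $n>k$, with $n<k$ supplied by the inductive hypothesis and $n=k$ trivial.
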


 \subsection{Non-Time Ordered Bell Causality (NTOBC)}  
 \label{ntobc.ssec}
 
For Bell related pairs $\{\hA_n, \hB_n \}, \{\hA_m, \hB_m \}$ the
NTOBC condition is 
\begin{equation}
\hA_n \hB_n^{-1}=\hA_m\hB_m^{-1}. 
\end{equation}

Thus, 
\begin{equation}
\hA_n=\tcF(\hB_n,\hA_m,\hB_m) =   \hA_m\hB_m^{-1} \hB_n. 
\end{equation} 
Inserting into Eqn. \eqref{timid.eq} (Lemma \ref{GTA.lemma}) the timid
transition simplifies to
\begin{eqnarray} 
\hTi_n & = &  \id - \sum_{\ell =1}^m \sum_{k_1 < k_2 \ldots k_\ell}^m
(-1)^{\ell -1} \bigl(\id - \hG_{n-\ell}^{(k_1,k_2, \ldots k_\ell)} \bigr)
\bigl(\hG_{n-\ell}^{(k_1,k_2, \ldots k_\ell)}\bigr)^{-1} \hG_n -\hG_n
             \nonumber \\
  &=&  \sum_{\ell =0}^m (-1)^{\ell -1} \bigl(\hG_{n-\ell}^{(k_1,k_2,
      \ldots k_\ell)}\bigr)^{-1} \hG_n. 
 \end{eqnarray}

As before, we begin by examining the gregarious transition operators
at stage $n$ using atomisation (Figure \ref{atomisation.fig}).  From
GC and NTOBC, we get 
\begin{equation}
\hG_{n}^{(0)} \hB_{n-1}^{(0)}=\hB_{n}^{(1)} \hG_{n-1}^{(0)},  \quad 
\hB_{n}^{(1)}(\hG_{n}^{(1)})^ {-1}=\hB_{n-1}^{(0)}
(\hG_{n-1}^{(0)})^{-1}, \label{GCNTOBC.eq}
\end{equation}
which together imply that
\begin{equation}
\hG_{n}^{(0)} = \hB_{n}^{(1)} \hG_{n}^{(1)} (\hB_{n}^{(1)})^{-1}
\end{equation}
i.e.,  these  gregarious transition operators are related by
conjugation.  Proceeding this way, we see
that if the operators $\hB_{n}^{(1)}, \ldots  \hB_{n}^{(k)}$  lead to
atomisation  (see  Fig \ref{atomisation.fig}), then $\hG_{n}^{(0)}$ is related to
$\hQ_{n}$ by the conjugation 
\begin{equation}
\hG_{n}^{(0)} = \hB_{n}^{(1)} \ldots
\hB_{n}^{(k)} \hQ_{n} (\hB_{n}^{(k)}) ^{-1} \ldots
(\hB_{n}^{(1)})^{-1} \equiv  \hS_{n}^{(\alpha)} \hQ_{n}
(\hS_{n}^{(\alpha)})^{-1}. \label{conjugate.eq}  
\end{equation}
The operator $\hS_{n+1}^{(\alpha)}$ depends  on the
atomisation path  $\alpha$, which 
need not be unique. If   $\hS_{n}^{(\alpha)}$ and $\hS_{n}^{(\beta)}$
correspond to taking two paths $\alpha$ and $\beta$  that are distinct, this implies that for all $n$
\begin{equation}
\bigl[(\hS_{n}^{(\alpha)})^{-1}\hS_{n}^{(\beta)},
\hQ_{n}\bigr]=0. 
\end{equation} 

We can further use the decimation process (see Fig. \ref{AD.fig}) to express the conjugation
in terms of lower stage operators, i.e.  with $m<n$.  To begin with we
can replace the $\h B_n^{(1)}$ operators in Eqn. \eqref{GCNTOBC.eq} to
get 
\begin{equation}
  \hG_n^{(0)}=\hB_{n-1}^{(0)}  (\hG_{n-1}^{(0)})^{-1}\hG_{n}^{(1)}
  (\hG_{n-1}^{(0)}) (\hB_{n-1}^{(0)})^{-1}. 
\end{equation} 
Proceeding this way, along the atomisation path, we therefore obtain 
\begin{eqnarray} 
  \hG_n^{(0)} &= &  \h \tS^{(\alpha)} \hQ_{n} \bigl(\h \tS^{(\alpha)}\bigr)^{-1},
  \nonumber \\
\h\tS^{(\alpha)} &\equiv& 
 \hB_{n-1}^{(0)}  (\hG_{n-1}^{(0)})^{-1}\hB_{n-1}^{(1)} (\hG_{n-1}^{(1)})^{-1} \ldots \hB_{n-1}^{(k)} (\hG_{n-1}^{(k)})^{-1}. \label{decimation.eq}
\end{eqnarray} 
In turn, using NTOBC, 
\begin{eqnarray} 
  \hB_{n-1}^{(i)}  (\hG_{n-1}^{(i)})^{-1} &=&   \hB_{n-i-1}^{(0)}
                                              (\hG_{n-i-1}^{(0)})^{-1} \nonumber \\
 \h \tS^{(\alpha)} &\equiv& 
 \hB_{n-1}^{(0)}  (\hG_{n-1}^{(0)})^{-1}\hB_{n-2}^{(0)} (\hG_{n-2}^{(0)})^{-1} \ldots \hB_{n-k-1}^{(0)} (\hG_{n-k-1}^{(0)})^{-1}. 
\end{eqnarray} 
Thus,
\begin{equation}
\ket{c_n} = \hB_{n-1}^{(0)} \hB_{n-2}^{(0)} \ldots \hB_{n-k-1}^{(0)}
\hQ_{n-k} \ldots \hQ_1 \ket{\Omega} =\h \Gamma^{\alpha}   \ket{\Omega}
  \end{equation} 
The path $\alpha$ therefore also determines the operator  $\h\Gamma^{\alpha}$.

As in the case of TOBC, we consider the transition from the
$n$-antichain $\achain_n$ to its child  $\achain_{n+1}^{(k,i)}$, where $k$
denotes the number of elements in the precursor set and $i$  runs over
$\binom{n}{k}$ values (for the labelled case). Since the number of maximal elements $\bmm=n$,
from Eqn. \eqref{timid.eq}  the timid transition operator
$\hA_n^{(n)}$ for $\achain_n$ simplifies to
\begin{equation}
\hA_n^{(n)}=\sum_{\ell=0}^n (-1)^\ell \binom{n}{\ell} (\hQ_{n-\ell})^{-1}
\hQ_n. \label{ntobctimid.eq} 
\end{equation} 
Let   $\hA_n^{(k)}:\achain_n \rightarrow
\achain_{n+1}^{(k,i)}$. We then have the related Bell pairs
$\{\hA_n^{(k)}, \hQ_n \}$ and  $\{\hA_{k}^{(k)}, \hQ_{k} \}$
which, using Eqn. \eqref{ntobctimid.eq} implies that
\begin{equation}
\hA_n^{(k)}=\sum_{r=0}^{k}(-1)^r \binom{k}{r} \hQ_{k-r}^{-1}
\hQ_n.  \label{ntobcnontimid.eq} 
\end{equation}

Consider the two  transition operators  $\hA_n^{(k)}:\achain_n \rightarrow
\achain_{n+1}^{(k,i)}$ and $\hA_n^{(m)}:\achain_n \rightarrow
\achain_{n+1}^{(m,j)}$, where $k \neq m$. For ease of notation, let us fix the indices $i=i_1,
j=j_1$ for now and suppress these labels.  Next, let $\hB_{n+1}^{(k,m)}:
\achain_{n+1}^{(k)} \rightarrow c_{n+2}^{(k,m)}$, where the precursor set 
$\pre(c_{n+2}^{(k,m)})=\pre(c_{n+1}^{(m)})$ and hence does not
contain the element $e_{n+1}^{(k)} \in \achain_{n+1}^{(k)}$. We similarly
define $\hB_{n+1}^{(m,k)}:
\achain_{n+1}^{(m)} \rightarrow c_{n+2}^{(m,k)}$. Clearly,
$c_{n+2}^{(m,k)} \sim c_{n+2}^{(k,m)}$,  which by  GC
gives 
\begin{equation}
\hB_{n+1}^{(k,m)} \hA_{n}^{(k)}=\hB_{n+1}^{(m,k)}
\hA_{n}^{(m)}. \label{ntobcone.eq} 
  \end{equation} 

In addition, consider the gregarious transition operators  $\hG_{n+1}^{(k)}:
\achain_{n+1}^{(k)} \rightarrow c_{n+2}^{(k,g)}$, $\hG_{n+1}^{(m)}:
\achain_{n+1}^{(m)} \rightarrow c_{n+2}^{(m,g)}$, where $g$ denotes the
gregarious transition. This gives the related Bell pairs of
operators $\{\hB_{n+1}^{(k,m)}, \hG_{n+1}^{(k)}\}$ and
$\{\hA_{n}^{(m)}, \hQ_{n}\}$ which from NTOBC gives 
\begin{equation}
\hB_{n+1}^{(k,m)} = \hA_{n}^{(m)} \hQ_{n}^{-1} \hG_{n+1}^{(k)} \label{ntobctwo.eq} 
  \end{equation} 
and similarly, the related Bell pairs of
operators $\{\hB_{n+1}^{(m,k)}, \hG_{n+1}^{(m)}\}$ and
$\{\hA_{n}^{(k)}, \hQ_{n}\}$ gives 
\begin{equation}
\hB_{n+1}^{(m,k)} = \hA_{n}^{(k)}\hQ_{n}^{-1} \hG_{n+1}^{(m)}.  \label{ntobcthree.eq} 
  \end{equation} 
Inserting Eqns \eqref{ntobctwo.eq} and \eqref{ntobcthree.eq} into
Eqn. \eqref{ntobcone.eq}, we find that
\begin{equation}
\hA_{n}^{(m)}  \hQ_{n}^{-1} \hG_{n+1}^{(k)}
 \hA_{n}^{(k)}=\hA_{n}^{(k)} \hQ_{n}^{-1}\hG_{n+1}^{(m)}\hA_{n}^{(m)}
  . \label{ntobcfour.eq} 
\end{equation} 

Next, from Eqn. \eqref{conjugate.eq} the gregarious transition
$\hG_{n+1}^{(k)}$ is conjugate to $\hQ_{n+1}$ via
the transition operator $\hA_{n+1}^{(k)}: \achain_{n+1} \rightarrow \achain_{n+1}^{(k)}$, i.e., 
\begin{equation}
\hG_{n+1}^{(k)}=\hA_{n+1}^{(k)}\hQ_{n+1}(\hA_{n+1}^{(k)})^{-1}.  \label{ntobcfive.eq} 
  \end{equation} 
Using the related Bell pairs $\{\hA_{n+1}^{(k)},\hQ_{n+1}\}$ and
$\{\hA_n^{(k)},\hQ_{n}\}$ moreover,
\begin{equation}
  \hA_{n+1}^{(k)} = \hA_n^{(k)}\hQ_{n}^{-1} \hQ_{n+1}
\end{equation}
Inserting into Eqn. \eqref{ntobcfive.eq} we find   
\begin{equation} 
  \hG_{n+1}^{(k)}= \hA_n^{(k)}\hQ_{n}^{-1} \hQ_{n+1} \hQ_{n}
  (\hA_n^{(k)})^{-1}. 
\end{equation}
Reinserting this expression for $k$ and $m$ into
Eqn. \eqref{ntobcfour.eq} we find the commutation relation 
\begin{equation}
\bigl[ \hA_n^{(k)} \hQ_n^{-1}, \hA_n^{(m)}
\hQ_n^{-1}\bigr]. \label{ntobccommrel.eq} 
\end{equation}

At stage $n=1$, as always,  we have the two transitions operators $\hQ_1$  and
$\id - \hQ_1$  as shown in Fig \ref{cone.fig}. 
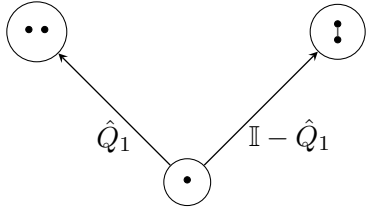
\begin{figure}[H]
 \centering
 \begin{tikzpicture}[-stealth]
  \def\ystep{2cm}
  \def\xstep{0.5cm}
  \begin{scope}[nodes={draw, thin, circle, minimum size=0.5cm}]
 \node (C0) at (0, 0) {\pcauset{1}};
 \node (C21) at (-4*\xstep, 1*\ystep) {\pcauset{2,1}};
 \node (C12) at ( 4*\xstep, 1*\ystep) {\pcauset{1,2}};
 \end{scope}
 \begin{scope}[prob arrow]
  \drawprobleft{C0}{\hQ_1}{C21} 
 \drawprobright{C0}{\qquad \id -\hQ_1}{C12}
 \end{scope}\label{stageone.fig} 
\end{tikzpicture}
\caption{The transitions from the single element causal set
  $c_1=\achain_1$ } \label{cone.fig} 
 \end{figure}
 Next, at  stage $n=2$ the transitions  from $\achain_2$ are shown in
 Fig. \ref{antichaintwo.fig} 
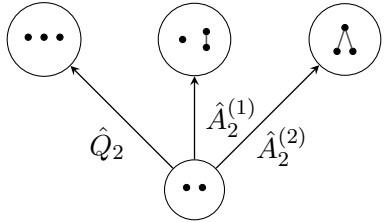
\begin{figure}[H]
 \centering
 \begin{tikzpicture}[-stealth]
  \def\ystep{2cm}
  \def\xstep{0.5cm}
  \begin{scope}[nodes={draw, thin, circle, minimum size=0.5cm}]
 \node (C0) at ( 0, 0) {\pcauset{2,1}};
 \node (C321) at (-4*\xstep, 1*\ystep) {\pcauset{3,2,1}};
 \node (C213) at ( 4*\xstep, 1*\ystep) {\pcauset{2,1,3}};
 \node (C231) at (0,\ystep) {\pcauset{2,3,1}};
 \end{scope}
 \begin{scope}[prob arrow]
  \drawprobleft{C0}{\hQ_2\quad }{C321}
 \drawprobright{C0}{\quad \h A_2^{(2)}}{C213}
  \drawprobrighttop{C0}{\h A_2^{(1)}}{C231}
 \end{scope}
\end{tikzpicture}
\caption{The transitions from $\achain_2$ } \label{antichaintwo.fig} 
 \end{figure}
The transition operators are  obtained using NTOBC and MSR: 
\begin{eqnarray} 
\h A_2^{(1)} = \hQ_1^{-1} \hQ_2 - \hQ_2\nonumber \\
\h A_2^{(2)}= \id - 2 \hQ_1^{-1} \hQ_2 + \hQ_2 \nonumber \\
\end{eqnarray}

Using the expressions for $\hA_2^{(2)}$ and $\hA_2^{(1)}$ in 
Eqn. \eqref{ntobccommrel.eq}, we see that 
\begin{equation}
\bigl[ \hQ_1,\hQ_2\bigr]=0.  \label{onetwoQ.eq} 
\end{equation}

The remaining transitions at   stage $n=2$ from the $2$-element chain,
shown in Fig \ref{chaintwo.fig}.  
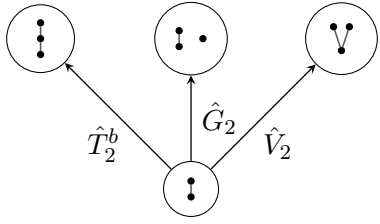
\begin{figure}[H]
 \centering
 \begin{tikzpicture}[-stealth]
  \def\ystep{2cm}
  \def\xstep{0.5cm}
  \begin{scope}[nodes={draw, thin, circle, minimum size=0.5cm}]
 \node (C0) at ( 0, 0) {\pcauset{1,2}};
 \node (C123) at (-4*\xstep, 1*\ystep) {\pcauset{1,2,3}};
 \node (C213) at ( 4*\xstep, 1*\ystep) {\pcauset{1,3,2}};
 \node (C312) at (0,\ystep) {\pcauset{3,1,2}};
 \end{scope}
 \begin{scope}[prob arrow]
  \drawprobleft{C0}{\h T_2^b \quad }{C123}
 \drawprobleft{C0}{\quad \h V_2}{C213}
  \drawprobrighttop{C0}{\h G_2}{C312}
 \end{scope}
\end{tikzpicture}
\caption{The transitions from the $n=2$ chain, $\chain_2$ } 
\label{chaintwo.fig} 
 \end{figure}
The transition operators  are obtained using NTOBC, GC and MSR,
and the relation Eqn. \eqref{onetwoQ.eq} 
\begin{eqnarray}
   \h G_2=Q_2 \nonumber \\ 
   \h V_2 = \hQ_1^{-1} \hQ_2 - \hQ_2 \nonumber \\
   \h T_2^b =\id - \hQ_1^{-1} \hQ_2.
\end{eqnarray}

We now prove inductively  that $\bigl[ \hQ_1,\hQ_n \bigr]=0$ for all
$n$, and therefore that  $\cQ$ has a non-trivial center. Assume that  $\bigl[ \hQ_1,\hQ_{m} \bigr]=0$ for all
$m<n$. Using $ \hA_n^{(1)} \hQ_n^{-1}= \hQ_1^{-1}-\id$ and the form
Eqn. \eqref{ntobctimid.eq} 
for $\hA_n^{(n)}=\h A_n^{(n)} $, 
\begin{eqnarray} 
  [\hA_n^{(n)} \hQ_n^{-1}, \hA_n^{(1)}\hQ_n^{-1}] &= & 0 \nonumber \\
  \Rightarrow \sum_{\ell=0}^n (-1)^\ell \binom{n}{\ell} [(\hQ_{n-\ell})^{-1}
  ,  \hQ_1^{-1}] &= & 0 \nonumber \\
   \Rightarrow [\hQ_n,\hQ_1] &=&0. 
\end{eqnarray} 
Next, using Eqn. \eqref{ntobcnontimid.eq} we note  that 
\begin{equation} 
  [\hA_n^{(3)} \hQ_n^{-1}, \hA_n^{(2)}\hQ_n^{-1}]=0 \Rightarrow
  [\hQ_3,\hQ_2] =0. 
\end{equation}
Assuming  that  $[\hQ_m,\hQ_2] =0$ for all $m<n$, 
\begin{eqnarray} 
  [\hA_n^{(n)} \hQ_n^{-1}, \hA_n^{(2)}\hQ_n^{-1}] &=& 0 \nonumber \\
  \Rightarrow \sum_{\ell=0}^{n} (-1)^\ell \binom{n}{\ell}[\hQ_{n-\ell}^{-1}
  ,  \hQ_{2}^{-1}] &=&0 \nonumber \\
  \Rightarrow [\hQ_n,\hQ_2] &=&0. 
\end{eqnarray}
Assuming that $[\hQ_m,\hQ_k] =0$ for all $k<m<n$, a similar argument
shows that 
\begin{eqnarray} 
  [\hA_n^{(n)} \hQ_n^{-1}, \hA_n^{(k)}\hQ_n^{-1}] &=& 0 \nonumber \\
  \Rightarrow [\hQ_n,\hQ_k] &=&0. 
\end{eqnarray}
Hence $[\hQ_n,\hQ_m]=0$ for all $n,m$.  We have therefore proved that 
\begin{lemma}
The subalgebra $\cQ$ is commutative.  
\end{lemma}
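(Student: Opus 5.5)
The plan is a double induction on stage indices. It rests on the single commutation relation \eqref{ntobccommrel.eq}, $[\hA_n^{(k)}\hQ_n^{-1},\hA_n^{(m)}\hQ_n^{-1}]=0$ for $k\neq m$, together with the explicit form \eqref{ntobcnontimid.eq}. That form gives
\begin{equation}
\hA_n^{(k)}\hQ_n^{-1}=\sum_{r=0}^{k}(-1)^r\binom{k}{r}\hQ_{k-r}^{-1}, \qquad \hQ_0\equiv\id .
\end{equation}
A key feature is that the right-hand side depends only on $\hQ_1,\ldots,\hQ_k$ and not on $n$. Writing $X_k\equiv \hA_n^{(k)}\hQ_n^{-1}$, we have $X_k=\hQ_k^{-1}+P_k$, where $P_k$ is a linear combination of $\id,\hQ_1^{-1},\ldots,\hQ_{k-1}^{-1}$. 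The relation \eqref{ntobccommrel.eq} therefore says $[X_k,X_m]=0$ for every pair $k\neq m$ with $k,m\leq n$, provided $n$ is large enough that both transitions exist; since $n$ is arbitrary, this holds for all $k\neq m$.

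The induction is on $N$, with the claim that $\hQ_1,\ldots,\hQ_N$ pairwise commute. The base case $N=2$ is \eqref{onetwoQ.eq}, obtained from $[X_1,X_2]=0$ with $X_1=\hQ_1^{-1}-\id$. For the inductive step, assume $\hQ_1,\ldots,\hQ_{N-1}$ pairwise commute and fix $j<N$. Expand $[X_N,X_j]=0$: $X_j$ is a polynomial in $\hQ_1^{-1},\ldots,\hQ_j^{-1}$, and $P_N$ involves only $\hQ_i^{-1}$ with $i<N$. By the inductive hypothesis $[P_N,X_j]=0$, so the relation reduces to $[\hQ_N^{-1},X_j]=0$. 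Doing this for $j=1,2,\ldots,N-1$ in order, the same triangularity gives $[\hQ_N^{-1},\hQ_j^{-1}]=[\hQ_N^{-1},X_j]-[\hQ_N^{-1},P_j]$. Since $P_j$ involves only $\hQ_i^{-1}$ with $i<j$, the earlier steps show $[\hQ_N^{-1},P_j]=0$. Hence $[\hQ_N^{-1},\hQ_j^{-1}]=0$, and by invertibility $[\hQ_N,\hQ_j]=0$. This completes the induction, so all generators of $\cQ$ commute and $\cQ$ is commutative.

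The step I expect to need the most care is justifying that \eqref{ntobccommrel.eq} holds for all $k\neq m$ and that the coefficient of $\hQ_k^{-1}$ in $X_k$ is exactly $1$ (the $r=0$ term), so the triangular structure is genuine. I would therefore first recheck the derivation of \eqref{ntobccommrel.eq}: the GC identity \eqref{ntobcone.eq} and the conjugation form of $\hG_{n+1}^{(k)}$ must be valid for all $1\leq k,m\leq n$, including the timid case $k=n$, where \eqref{ntobctimid.eq} supplies the form. With that secured, the rest is bookkeeping. It replaces the paper's separate treatment of $\hQ_1$, then $\hQ_2$, and so on, by a single lexicographic induction on the pair $(N,j)$.
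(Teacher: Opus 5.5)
Your proposal is correct and follows essentially the paper's argument. Both use the relation $[\hA_n^{(k)}\hQ_n^{-1},\hA_n^{(m)}\hQ_n^{-1}]=0$ together with the $n$-independent triangular form $\sum_{r=0}^{k}(-1)^r\binom{k}{r}\hQ_{k-r}^{-1}$ (the timid case $k=n$ coming from \eqref{ntobctimid.eq}), and peel off the leading $\hQ_k^{-1}$ term inductively. The only difference is bookkeeping: the paper first shows $[\hQ_1,\hQ_n]=0$ for all $n$, then $[\hQ_2,\hQ_n]=0$, and so on, whereas you induct on the largest index $N$ with an inner induction on $j<N$, which is a slightly cleaner organisation of the same computation.
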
 

Next, we show  using induction that not only $\cQ$  but also $\cA$ is commutative.  The transition operators
upto stage  $n=2$, as well as the $\hA_n^{(k)}$ and the $\h A_k^{(k)}$ operators satisfy the 
general form 
\begin{equation}
\hA_n= \sum_{\ell=0}^m (-1)^\ell \binom{m}{l} \hQ_{\varpi - \ell}
^{-1} \hQ_n \label{ansatzntobc.eq}
\end{equation} 
where $\varpi$ is the cardinality of the precursor set, and $m$ is
the number of maximal elements in the precursor set. This is the
ansatz that we will now prove.  

Assume that Eqn. \eqref{ansatzntobc.eq} is true upto stage
 $n-1$. Thus, for all $k<n$,  since $\varpi=0$ for a gregarious
 transition 
 \begin{equation}
\hG_{k}= \h Q_k. 
\end{equation}
In Eqn. \ref{decimation.eq}  since $\tS^{(\alpha)}$ is constructed
purely from the $\h Q_{s}$'s for $s<n$, they commute with  $\hQ_n$ as
well, and hence 
\begin{equation}
\hG_n^{(0)}= \hQ_n. 
\end{equation} 
   Next, for any non-timid transition operator $\hA_n$, let $\hA_s$ be
   its Bell related transition for some $s<n$.  By NOTBC, 
   \begin{equation}
\hA_n=\hA_s \hQ_s^{-1}  \hQ_n= \biggl(\sum_{\ell=0}^m (-1)^\ell \binom{m}{l} \hQ_{\varpi - \ell}
^{-1} \hQ_s \biggr) \hQ_s^{-1}  \hQ_n =  \sum_{\ell=0}^m (-1)^\ell \binom{m}{l} \hQ_{\varpi - \ell}
^{-1} \hQ_n,  
\end{equation} 
which completes our proof.

We have therefore proved that

\begin{lemma} 
The transition operators for NTOBC generate a commutative algebra.  
\end{lemma}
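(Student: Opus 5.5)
The plan is an induction on the stage $n$. The hypothesis is the ansatz Eqn.~\eqref{ansatzntobc.eq}: every transition operator $\hA_s$ at stage $s<n$ equals $\sum_{\ell=0}^m (-1)^\ell \binom{m}{\ell}\hQ_{\varpi-\ell}^{-1}\hQ_s$, where $\varpi$ and $m$ are the size and number of maximal elements of its precursor set. Since the preceding Lemma shows that $\cQ$ is commutative, any operator of this form lies in $\cQ$. So once the ansatz is established at every stage, all generators of $\cA$ are mutually commuting elements of $\cQ$, and $\cA$ is commutative.

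The base of the induction is the explicit computation at stages $1$ and $2$ already carried out. From $\achain_1$ the transitions are $\hQ_1$ and $\id-\hQ_1$. From $\achain_2$ we have $\hA_2^{(1)}$ and $\hA_2^{(2)}$. From $\chain_2$ we have $\hG_2=\hQ_2$, $\h V_2$ and $\h T_2^b$. One checks that each of these matches the ansatz, using $[\hQ_1,\hQ_2]=0$. The antichain transitions $\hA_n^{(k)}$ of Eqn.~\eqref{ntobcnontimid.eq} also fit the ansatz at every stage.

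For the inductive step, first treat the gregarious operators. At stage $n$, Eqn.~\eqref{decimation.eq} gives $\hG_n^{(0)}=\tS^{(\alpha)}\hQ_n(\tS^{(\alpha)})^{-1}$. The conjugating operator $\tS^{(\alpha)}$ is a product of factors $\hB_s^{(0)}(\hG_s^{(0)})^{-1}$ with $s<n$. By the inductive hypothesis each such factor is a polynomial in the $\hQ_r^{-1}$, $r<n$, so it commutes with $\hQ_n$ by commutativity of $\cQ$. Hence $\hG_n=\hQ_n$ for every $c_n$; the apparent path dependence of the atomisation disappears.

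Next take a non-timid, non-gregarious $\hA_n$. Pair it with $\hG_n=\hQ_n$ and with the Bell-related pair $\{\hA_s,\hQ_s\}$ for some $s<n$, obtained by removing common spectators. NTOBC gives $\hA_n=\hA_s\hQ_s^{-1}\hQ_n$. Substituting the ansatz for $\hA_s$, the factors $\hQ_s\hQ_s^{-1}$ cancel, and since removing spectators leaves $\varpi$ and $m$ unchanged, $\hA_n$ has the ansatz form.

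Finally, the timid operator is handled by the MSR, or equivalently by the timid formula derived from Lemma~\ref{GTA.lemma} with $\hG_{n-\ell}^{(k_1,\dots,k_\ell)}=\hQ_{n-\ell}$. This gives $\hTi_n=\sum_\ell(-1)^\ell\binom{\bmm}{\ell}\hQ_{n-\ell}^{-1}\hQ_n$, which is the ansatz with $\varpi=n$ and $m=\bmm$. This closes the induction.

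The main obstacle is the gregarious step. One must check carefully that every operator entering $\tS^{(\alpha)}$ lies at a strictly lower stage, and is therefore already known to lie in $\cQ$. The paper's index bookkeeping along the decimation path is somewhat loose, so I would verify this explicitly using the relation $\hB_{n-1}^{(i)}(\hG_{n-1}^{(i)})^{-1}=\hB_{n-i-1}^{(0)}(\hG_{n-i-1}^{(0)})^{-1}$, which pushes every factor down to stages below $n$.

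A secondary check concerns labelled children. Distinct natural labelings give operators with the same $(\varpi,m)$, and the induction shows these operators are equal, consistent with GC.
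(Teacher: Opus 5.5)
Your proposal is correct and follows the paper's own argument. It is an induction on the stage using the ansatz \eqref{ansatzntobc.eq}: the decimation conjugation \eqref{decimation.eq}, together with commutativity of $\cQ$, gives $\hG_n=\hQ_n$, and NTOBC with a lower-stage Bell partner then carries the ansatz to the non-timid operators. Your explicit check that the timid operator also has the ansatz form (via Lemma~\ref{GTA.lemma} with $\hG_{n-\ell}^{(k_1,\dots,k_\ell)}=\hQ_{n-\ell}$) fills a step the paper leaves implicit, and the induction genuinely needs it, because the lower-stage Bell partners and the factors in $\tS^{(\alpha)}$ can themselves be timid.
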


\subsection{Causal Past Ordered (CPOBC)}
\label{cpobc.ssec}

For the related {Bell pairs} $\{\hA_n,\hA_n'\} $ and $\{\hA_m\hA_m' \}
$,  if $\pre(\hA_n)=\pre(\hA_m)$   is strictly larger than
$\pre(\hA_n')=\pre(\hA_m')$, then 
\begin{equation}
\hA_n\hA_m'=\hA_m\hA_n', 
\end{equation}
and when they are equal, i.e.,  $\pre(\hA_n)=\pre(\hA_n')$ 
\begin{eqnarray}
  \hA_n\hA_m' = \hA_m\hA_n' &\quad \& & \hA_n'\hA_m= \hA_m'\hA_n
                                         \nonumber \\ 
 \Rightarrow     \hA_n = \hA_m\hA_n'  \hA_m'^{-1}=  \hA_m'^{-1}
  \hA_n'\hA_m                & \Rightarrow &   \hA_m'\hA_m \hA_n'  =  
  \hA_n'\hA_m \hA_m', 
\end{eqnarray}
which is an additional constraint on the algebra.

Thus, for $\pre(\hA_n)> \pre(\hA_n')$ 
\begin{equation} 
\hhT_n =\tcF(\hhT_n',\hhT_m,\hhT_m') = \hhT_m\hhT_n'  \hhT_m'^{-1},
\end{equation}
while for $\pre(\hA_n)= \pre(\hA_n')$ 
\begin{equation} 
\hhT_n =\tcF(\hhT_n',\hhT_m,\hhT_m') = \hhT_m\hhT_n'  \hhT_m'^{-1}=
\hhT_m'^{-1}\hhT_n'  \hhT_m.
\end{equation}

In particular, for  any non-timid transition,
\begin{equation}
\hA_n=\hA_m\hG_n  \hG_m^{-1}. \label{nontimidcpobc.eq}
  \end{equation} 
The timid transition operator is given by Eqn. \eqref{timid.eq} (Lemma \ref{GTA.lemma}) 
\begin{eqnarray} 
  \hTi_n&=&\id-\hG_n-\sum_{\ell=1}^\bmm \sum_{k_1<\ldots k_\ell}^\bmm
            (-1)^{\ell -1}
            \biggl(\id -\hG_{n-\ell}^{k_1\ldots
            k_\ell}\biggr)\hG_n\biggl(\hG_{n-\ell}^{k_1\ldots
            k_\ell}\biggr)^{-1}  \nonumber \\
  &=& \sum_{\ell=0}^\bmm \sum_{k_1<\ldots
      k_\ell}^\bmm  (-1)^{\ell} \biggl(\id - \hG_{n-\ell}^{k_1\ldots
            k_\ell}\biggr) \hG_n\biggl(\hG_{n-\ell}^{k_1\ldots
            k_\ell}\biggr)^{-1}.  \label{timidcpobc.eq} 
\end{eqnarray} 

As before we can use the atomisation process  Fig \ref{atomisation.fig} to relate
gregarious transitions at stage $n$.       
Using GC,  
\begin{eqnarray}
\h G_{n}^{(0)} \h B_{n-1}^{(0)}= \h B_{n}^{(1)} \h G_{n-1}^{(0)} 
\end{eqnarray}
while from CPOBC
\begin{equation}
\h B_{n}^{(1)} \h G_{n-1}^{(0)} = \h B_{n-1}^{(0)} \h G_{n}^{(1)}  
\end{equation}
From these, we get that, 
\begin{equation}\label{CPOBCconj.eq}
\h G_{n}^{(0)}=\h B_{n-1}^{(0)}\h G_{n}^{(1)}\h
(B_{n-1}^{(0)})^{-1}. 
\end{equation}
Using an  atomisation path $\alpha$, therefore 
\begin{equation}
  \h G_{n}^{(0)}=\h B_{n-1}^{(0)}\h B_{n-1}^{(1)} \ldots \h
  B_{n-1}^{(k)}\h Q_{n} (\h B_{n-1}^{(k)})^{-1} \ldots (\h B_{n-1}^{(1)})^{-1} \h
  (B_{n-1}^{(0)})^{-1} \equiv \h S^{\alpha} \h Q_{n} (\h
  S^{\alpha})^{-1}. \label{gregcpobc.eq} 
\end{equation}
Since the atomisation path is non-unique, for distinct paths $\alpha$,
$\beta$ we have the commutation relation 
\begin{equation} 
\Rightarrow [\h S_\alpha^{-1}\h S_\beta,\h Q_{n+1}]=
0. \label{salphasbeta.eq} 
\end{equation}

The difference between this case and the previous cases is that we haven't proven that the algebra is commutative. In the following section, we will derive some of the constraints on the dynamics in CPOBC which will help us verify if some choice of $Q_n$s is consistent or not. 

We begin by proving the following theorem that relates gregarious transitions at different stages. 
\begin{theorem}
Suppose that $\h G_n$, $\h G_k$, $\h G_m$ are gregarious transition operators corresponding to gregarious transition from causal sets $c_n$, $c_k$ and $c_m$ respectively, where $m\neq n$ and $k<\min \{m,n\}$. Then they satisfy the following relation
\begin{equation}\label{GTC3.eq}
\h G_n\h G_k^{-1}\h G_m=\h G_m\h G_k^{-1}\h G_n
\end{equation}
\end{theorem}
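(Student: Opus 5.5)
The plan is to obtain the relation by comparing three members of a single Bell family, all anchored at the smallest causal set $c_k$, and then cancel the common invertible factor. I read the hypothesis as the situation in which the statement is used: $c_n$ and $c_m$ are obtained from $c_k$ by adding common spectators. Then the gregarious pairs at $c_n$, $c_m$ and $c_k$ belong to the same Bell family as some non-gregarious pair. The natural choice of that pair is the timid transition $\cT_k^T$ from $c_k$ together with its Bell partners $\cT_n$ and $\cT_m$ out of $c_n$ and $c_m$. These partners have precursor set $c_k\neq\emptyset$. The gregarious partner in each pair has empty precursor set, so we are always in the strict case of CPOBC. Write $\hA_n,\hA_m,\hA_k$ for the three non-gregarious operators; all operators are invertible by our standing assumption.

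First, I apply Eqn. \eqref{nontimidcpobc.eq} to each of the three ways of pairing two stages of the family. In the strict case CPOBC reads $\hA_a\hG_b=\hA_b\hG_a$ for any two members $a,b$, independently of which stage is larger, since the ordering is fixed by the precursor sets and not by the stage. This gives
\begin{equation*}
\hA_n=\hA_k\hG_n\hG_k^{-1},\qquad \hA_m=\hA_k\hG_m\hG_k^{-1},\qquad \hA_n=\hA_m\hG_n\hG_m^{-1}.
\end{equation*}
Second, I substitute the middle identity into the third, which yields $\hA_n=\hA_k\hG_m\hG_k^{-1}\hG_n\hG_m^{-1}$. Equating this with the first identity and left-multiplying by $\hA_k^{-1}$ gives $\hG_n\hG_k^{-1}=\hG_m\hG_k^{-1}\hG_n\hG_m^{-1}$. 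Right-multiplying by $\hG_m$ then gives Eqn. \eqref{GTC3.eq}. The condition $m\neq n$ only ensures that the third pairing is a genuine relation between two distinct Bell pairs; for $m=n$ the identity is trivial.

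The algebra is routine, so I expect the main obstacle to be the combinatorial step of justifying that such a Bell family exists. Concretely, I must check that $c_n$ and $c_m$ can be reached from $c_k$ by adjoining spectators, so that the three gregarious transitions really are Bell partners of one non-gregarious transition. I would first verify this for the timid transition from $c_k$, whose precursor set is all of $c_k$, so that any added elements are automatically spectators. If the theorem is meant for more general $c_n,c_m$, I would fall back on the atomisation conjugation Eqn. \eqref{gregcpobc.eq} to move the gregarious operators to such a configuration. However, that route introduces the path-dependent operators $\hS^{\alpha}$, and controlling them would be the hard part.
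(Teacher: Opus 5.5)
Your algebra is exactly the paper's. You use three pairwise strict-CPOBC relations $\hA_a\hG_b=\hA_b\hG_a$ inside one Bell family, then cancel the invertible operator at stage $k$.

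The gap is in your choice of Bell family. Anchoring on the timid transition from $c_k$ only gives the relation when a copy of $c_k$ sits inside both $c_n$ and $c_m$ as a subset closed under taking pasts. That is what makes $c_k$ a legitimate precursor set with every added element a spectator. The statement, however, is for arbitrary causal sets $c_n,c_k,c_m$ of those sizes. The paper also relies on that generality right afterwards, in the corollary $[\hG_m^{-1}\hG_n,\hG_k^{-1}\hG_\ell]=0$, where the roles of the four causal sets are permuted freely. Your proposed fallback through the atomisation conjugation \eqref{gregcpobc.eq} is not needed, and as you say, you have no way to control it.

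The missing idea is to use a Bell family that every causal set belongs to: transitions whose precursor set is a single minimal element. Every nonempty $c_j$ has such a transition $\hB_j$. The pair $\{\hB_j,\hG_j\}$ has as common spectators all of $c_j$ except that minimal element. Removing them lands on the root pair $\{\id-\hQ_1,\hQ_1\}$ at $c_1$.

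Hence $\{\hB_n,\hG_n\}$, $\{\hB_m,\hG_m\}$ and $\{\hB_k,\hG_k\}$ lie in a single Bell family for any $c_n,c_m,c_k$. The precursor sets have sizes $1>0$, so the strict form of CPOBC applies. Your computation then goes through verbatim with $\hB$ in place of $\hA$, and this is precisely the paper's proof.

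Equivalently, it is your construction with the anchor $c_k$ replaced by $c_1$, which embeds as a past-closed subset of every causal set. With this choice the hypothesis $k<\min\{m,n\}$ plays no role. That fits your own remark that strict CPOBC is symmetric in the stages.
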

\begin{proof}
Consider transitions from $c_n$ with a single 
element in the precursor set, which is therefore a minimal element in
$c_n$.  Such a transitions  are therefore all  Bell related to each
other, and in particular to the transition from the
single element causal set to the $2$-element chain.  Let 
the corresponding transition operators from $c_n$ and $c_m$ be $\h
B_n$ and $\h B_m$ respectively. Then by CPOBC 
 \begin{equation}
 \h B_n=\h B_m\h G_n\h G_m^{-1}
 \end{equation}
 Similarly, $\h B_k,\h B_m, \h B_n$ are related as 
 \begin{eqnarray}
\h B_m&=&\h B_k\h G_m\h G_k^{-1}\\
 \h B_n&=&\h B_k\h G_n\h G_k^{-1}
 \end{eqnarray}
 From these equations, we get,
\begin{eqnarray}
\h B_k\h G_n\h G_k^{-1}&=&\h B_k\h G_m\h G_k^{-1}\h G_n\h G_m^{-1}\\
\h G_n\h G_k^{-1}\h G_m&=&\h G_m\h G_k^{-1}\h G_n \label{gregrels.eq}
\end{eqnarray}
\end{proof}
In particular, we have
\begin{equation}
\h Q_n\h Q_k^{-1}\h Q_m=\h Q_m\h Q_k^{-1}\h Q_n.
\end{equation}

We also have the following relation between gregarious transitions which can be derived from this theorem.

\begin{corollary}
If we have $n,m,k,\ell\in \mathbb{N}$ such that $m\neq n \neq k \neq \ell $, then the gregarious transition operators from $c_n, c_m,c_k$ and $c_\ell$ are related by the following equation
\begin{equation}\label{4Greg3.eq}
[\h G_m^{-1}\h G_n,\h G_k^{-1}\h G_\ell]=0 
\end{equation}
\end{corollary}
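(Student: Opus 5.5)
The plan is to reduce the corollary to the Theorem by choosing a single ``pivot'' gregarious operator. For any fixed gregarious operator $\h G_p$, define $\h X_a \equiv \h G_p^{-1}\h G_a$ for every gregarious operator $\h G_a$ with $a>p$. Multiplying Eqn.~\eqref{GTC3.eq}, taken with $k=p$, on the left by $\h G_p^{-1}$ gives
\begin{equation*}
\h X_a \h X_b = \h G_p^{-1}\h G_a\h G_p^{-1}\h G_b = \h G_p^{-1}\h G_b\h G_p^{-1}\h G_a = \h X_b\h X_a .
\end{equation*}
This holds for all $a\neq b$ with $p<\min\{a,b\}$. It also holds trivially when $a=b$, and when $a=p$, since then $\h X_p=\id$. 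So the whole family $\{\h X_a\}$ is a pairwise commuting set of invertible operators. Its inverses also commute with each other and with every $\h X_b$.

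Next I rewrite both entries of the commutator in terms of the $\h X$'s, inserting $\h G_p\h G_p^{-1}$:
\begin{equation*}
\h G_m^{-1}\h G_n = (\h G_p^{-1}\h G_m)^{-1}(\h G_p^{-1}\h G_n) = \h X_m^{-1}\h X_n, \qquad \h G_k^{-1}\h G_\ell = \h X_k^{-1}\h X_\ell .
\end{equation*}
Both are products of elements of a commutative family and their inverses, so they commute. That is exactly Eqn.~\eqref{4Greg3.eq}.

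The only real issue is choosing $p$ so that the Theorem's hypothesis $p<\min\{a,b\}$ holds for every pair we use. I would first take $p$ to be the smallest of the four stages $m,n,k,\ell$, say $p=k$, using $\h G_k$ itself as the pivot. Then $\h X_k=\id$, and the Theorem is applied only to pairs drawn from $\{m,n,\ell\}$, all of which are strictly larger than $k$ provided the minimum is attained uniquely.

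The case I expect to be the main obstacle is a tied minimum, for example $k=m$ as stages but with different causal sets $c_k\neq c_m$. There the Theorem cannot be invoked with either of them as pivot. I would instead pivot on a gregarious operator at a strictly lower stage, for instance $\h Q_1$ (the gregarious operator from $\achain_1$), which is available whenever the minimum stage is at least $2$. The argument above then goes through verbatim.

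The residual degenerate situation, where a tie occurs at stage $1$, does not arise: there is only one causal set at stage $1$, so a tie there forces the two operators to be equal. In that case the corresponding $\h X$ is $\id$, or the commutator is trivially zero.

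Finally, I would check that the Theorem indeed allows $a$ and $b$ to label gregarious operators from distinct causal sets at possibly equal stages, provided they are distinct operators. Its proof only uses Bell relations to the single-precursor transition, so no further restriction on $c_a$ and $c_b$ is needed.
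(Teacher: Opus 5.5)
Your proof is correct and follows essentially the paper's route: both apply Eqn.~\eqref{GTC3.eq} with the smallest of the four indices as the pivot. The paper derives $[\h G_k^{-1}\h G_m,\h G_\ell^{-1}\h G_n]=0$ and then uses inversion and relabelling to move $k$ into any slot, whereas your commuting family $\h X_a=\h G_p^{-1}\h G_a$ makes that step unnecessary; your treatment of tied stages by pivoting on $\h Q_1$ covers a case the paper does not consider.
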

\begin{proof}
Lets first assume that $k<\ell ,m,n$ and use equation \eqref{GTC3.eq} to relate the gregarious transition operators as 
\begin{eqnarray}
\h G_m\h G_k^{-1}\h G_\ell &=&\h G_\ell\h G_k^{-1}\h G_m\nonumber\\
\h G_m\h G_k^{-1}\h G_n&=&\h G_n\h G_k^{-1}\h G_m. 
\end{eqnarray}
Combining, we find that 
\begin{eqnarray}
\h G_k^{-1}\h G_m\h G_\ell^{-1}\h G_n&=& \h G_\ell^{-1}\h G_n\h G_k^{-1}\h G_m\\
\Rightarrow [\h G_k^{-1}\h G_m,\h G_\ell^{-1}\h G_n]&=&0
\end{eqnarray}
We can take the inverse of above equation to get
\begin{equation}
[\h G_n^{-1}\h G_\ell ,\h G_m^{-1}\h G_k]=0
\end{equation}
We also note that for operators $\h A ,\h B$, $[\h A,\h B]=0\Rightarrow [\h A^{-1},\h B]=0$. We thus have
\begin{eqnarray}
\left[\h G_\ell^{-1}\h G_n ,\h G_m^{-1}\h G_k\right]&=&0\\
\left[\h G_n^{-1}\h G_\ell ,\h G_k^{-1}\h G_m\right]&=&0
\end{eqnarray}
With that, we see that $k$ can be any of the four index in equation \eqref{4Greg3.eq} and so is true for $n,m,\ell$. Thus, we showed that for any $n,m,k,\ell\in \mathbb{N}$ such that $m\neq n \neq k \neq \ell $
\begin{equation}
[\h G_m^{-1}\h G_n,\h G_k^{-1}\h G_\ell]=0 
\end{equation}
\end{proof}

In particular, this relation implies 
\begin{equation}\label{eq: constraint}
[\h Q_m\h Q_n^{-1},\h Q_\ell \h Q_k^{-1}]=0
\end{equation}
The equation \eqref{GTC3.eq} gives us one more relation between $\h Q_1$ and $\h Q_2$ which further constraints the dynamics.

\begin{corollary}
$\h Q_1$ and $\h Q_2$ are constrained by the equation
\begin{equation}
[\h Q_1\h Q_2^{-1},\h Q_1^{-1}\h Q_2]=0
\end{equation}
\end{corollary}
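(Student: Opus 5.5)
The plan is to apply the Theorem (Eqn. \eqref{GTC3.eq}) with two \emph{distinct} gregarious operators at the same stage, rather than at different stages. The hypothesis $m\neq n$ there refers to the parent causal sets $c_n,c_m$, not to their cardinalities. The argument in the proof of the Theorem only uses the Bell family of transitions with a single minimal precursor element, anchored at $\achain_1\rightarrow\chain_2$. It therefore goes through verbatim for $c_n=\achain_2$ and $c_m=\chain_2$ with $k=1$. This gives
\begin{equation}
\hG_2^{\chain}\,\hQ_1^{-1}\,\hQ_2=\hQ_2\,\hQ_1^{-1}\,\hG_2^{\chain},
\end{equation}
where $\hG_2^{\chain}$ is the gregarious operator from the $2$-chain and $\hQ_2$ is the one from $\achain_2$. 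No other causal sets appear, which is why I expect a relation involving only $\hQ_1,\hQ_2$.

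Next I express $\hG_2^{\chain}$ through $\hQ_1,\hQ_2$ using the conjugation relation Eqn. \eqref{CPOBCconj.eq}, i.e. Eqn. \eqref{gregcpobc.eq}. The atomisation of $\chain_2$ is a single step: remove the maximal element $e_2$, leaving $c_1^{(0)}=\achain_1$. The operator $\hB_1^{(0)}$ is then the transition $\achain_1\rightarrow\chain_2$, which equals $\id-\hQ_1$ by the MSR. Hence
\begin{equation}
\hG_2^{\chain}=(\id-\hQ_1)\,\hQ_2\,(\id-\hQ_1)^{-1}.
\end{equation}
Here I assume invertibility of $\id-\hQ_1$, as for all transition operators. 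Substituting this into the displayed relation gives a single identity in $\hQ_1^{\pm1}$, $\hQ_2^{\pm1}$ and $(\id-\hQ_1)^{-1}$.

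The last step, which I expect to be the main obstacle, is the algebraic reduction to $[\hQ_1\hQ_2^{-1},\hQ_1^{-1}\hQ_2]=0$. I would write $\id-\hQ_1=\hQ_1 Y$ with $Y\equiv\hQ_1^{-1}-\id$, which commutes with $\hQ_1$. Then $\hG_2^{\chain}=\hQ_1 Y\hQ_2Y^{-1}\hQ_1^{-1}$, and the relation becomes
\begin{equation}
\hQ_1Y\hQ_2Y^{-1}\hQ_1^{-1}\hQ_1^{-1}\hQ_2=\hQ_2\,Y\hQ_2Y^{-1}\hQ_1^{-1}.
\end{equation}
I would multiply through by $Y$ on the appropriate sides to clear $Y^{-1}$. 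Since $Y$ is affine in $\hQ_1^{-1}$, this yields a polynomial identity in $\hQ_1^{-1}$ and $\hQ_2^{\pm1}$. I would then separate it into terms of different degree in $\hQ_1^{-1}$ and check that the surviving combination is exactly $\hQ_1\hQ_2^{-1}\hQ_1^{-1}\hQ_2=\hQ_1^{-1}\hQ_2\hQ_1\hQ_2^{-1}$. If the clean cancellation fails, my fallback is to also use the $\ell$-reversed version of the Theorem, i.e. the analogous relation from the chain-side Bell family. I would combine it with the above using the trick from the proof of the first Corollary: inverting both sides and using $[\hA,\hB]=0\Rightarrow[\hA^{-1},\hB]=0$.
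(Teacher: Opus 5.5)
Your first two steps are correct. Applying the Theorem to the same-stage pairs from $\achain_2$ and $\chain_2$ requires CPOBC to be imposed between same-stage members of a Bell family. The paper's statement, with $m\neq n$ read as stage labels, does not explicitly do this, but it is consistent because the CPOBC equation is symmetric in the two pairs. That is not where the argument breaks.

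\textbf{The gap is the final reduction.} The relation you obtain does not imply the corollary, so no manipulation of it can succeed. Put $W=\hQ_1^{-1}\hQ_2$ and $P=\id-\hQ_1$. Left-multiplying your relation by $\hQ_1^{-1}$ and using $[P,\hQ_1]=0$ turns it into $[PWP^{-1},W]=0$. The corollary, because $\hQ_1\hQ_2^{-1}=\hQ_1W^{-1}\hQ_1^{-1}$, is equivalent to $[\hQ_1W\hQ_1^{-1},W]=0$. So the corollary needs $W$ to commute with its conjugate by $\hQ_1$, whereas you have it commuting with its conjugate by $\id-\hQ_1$. These conditions are genuinely different. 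Take
\[
\hQ_1=\begin{pmatrix}3&0\\0&-1\end{pmatrix},\qquad \hQ_2=\begin{pmatrix}6&3\\-1&-2\end{pmatrix},
\]
so that
\[
W=\begin{pmatrix}2&1\\1&2\end{pmatrix},\qquad PWP^{-1}=\begin{pmatrix}2&-1\\-1&2\end{pmatrix},\qquad \hQ_1W\hQ_1^{-1}=\begin{pmatrix}2&-3\\-1/3&2\end{pmatrix}.
\]
Here $PWP^{-1}$ commutes with $W$, so your relation holds. Yet $[\hQ_1\hQ_2^{-1},\hQ_1^{-1}\hQ_2]=\mathrm{diag}(8/9,-8/9)\neq 0$.

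In this example $\hQ_1$, $\id-\hQ_1$, $\hQ_2$ and the MSR-determined timid operators $\h A_2^{(2)}$, $\h T_2^b$ are all invertible. Every MSR, GC and CPOBC relation among operators of stage at most $2$ holds, including your same-stage one. So the corollary is not a consequence of stage-$\leq 2$ data at all.

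Two further points in your plan also fail:
\begin{itemize}
\item ``Separating terms by degree in $\hQ_1^{-1}$'' is not a legitimate move, since $\hQ_1$ is a fixed operator, not a formal variable.
\item The fallback adds nothing. The Theorem's relation is already symmetric under $n\leftrightarrow m$, and inverting it or using $[\hA,\hB]=0\Rightarrow[\hA^{-1},\hB]=0$ only gives equivalent forms of $[PWP^{-1},W]=0$.
\end{itemize}

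\textbf{The missing ingredient is stage-$3$ input.} The paper applies the Theorem with $k=1$, $n=2$ (the gregarious $\hG_2$ from $\chain_2$) and $m=3$ (the gregarious $\hG_3$ from $\chain_2\sqcup\achain_1$). The conjugation formula gives $\hG_3=\h A_2^{(1)}\hQ_3(\h A_2^{(1)})^{-1}$ with $\h A_2^{(1)}=(\id-\hQ_1)\hQ_2\hQ_1^{-1}$. Now $\hG_2$ and $\hG_3$ carry the \emph{same} outer conjugation by $\id-\hQ_1$. That conjugation commutes through the middle factor $\hQ_1^{-1}$ and cancels, leaving $\hQ_3\hQ_1\hQ_2^{-1}\hQ_1^{-1}\hQ_2=\hQ_2\hQ_1^{-1}\hQ_3\hQ_1\hQ_2^{-1}$. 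The antichain instance $\hQ_2\hQ_1^{-1}\hQ_3=\hQ_3\hQ_1^{-1}\hQ_2$ of the Theorem then lets you cancel $\hQ_3$ on the left. This gives exactly $\hQ_1\hQ_2^{-1}\hQ_1^{-1}\hQ_2=\hQ_1^{-1}\hQ_2\hQ_1\hQ_2^{-1}$.

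In your pairing only $\hG_2^{\chain}$ carries the conjugation by $\id-\hQ_1$, while $\hQ_2$ does not. That is why $\id-\hQ_1$ survives and you land on a different commutation relation.
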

\begin{proof}
Using $k=1,n=2,m=3$  in Eqn \eqref{GTC3.eq} 
\begin{equation}
\h G_2\h Q_1^{-1}\h G_3=\h G_3\h Q_1^{-1}\h G_2
\end{equation}
Consider  the following transitions: 
\begin{center}
\begin{tikzpicture}[-stealth]
  \def\ystep{1.5 cm}
  \def\xstep{0.5 cm}
  \begin{scope}[nodes={draw, thin, circle, minimum size=0.5cm}]
 \node (C1) at ( 0, 0) {\pcauset{1}};
 \node (C12) at (-4*\xstep, 1*\ystep) {\pcauset{1,2}};
 \node (C21) at ( 4*\xstep, 1*\ystep) {\pcauset{2,1}};
 \node (C312) at ( -4*\xstep, 2*\ystep) {\pcauset{3,1,2}};

 \node (C321) at (4*\xstep , 2*\ystep) {\pcauset{3,2,1}};
 
 \node (C4312) at (-4*\xstep , 3*\ystep) {\pcauset{4,3,1,2}};
 \node (C4321) at (4*\xstep , 3*\ystep) {\pcauset{4,3,2,1}};
  \end{scope} 
\begin{scope}[prob arrow,black]
  \drawprobarrow{C1}{\h A_1^{(1)}}{C12}
  \drawprobarrow{C1}{\h Q_1}{C21}
  \drawproblefttop{C12}{\h G_2}{C312}
  \drawprobrighttop{C21}{\h Q_2}{C321}
   \drawprobarrow{C21}{\h A_2^{(1)}}{C312}
    \drawprobarrow{C321}{\h A_3^{(1)}}{C4312}
        \drawprobrighttop{C321}{\h Q_3}{C4321}
            \drawproblefttop{C312}{\h G_3}{C4312}
  \end{scope} 
 \end{tikzpicture}
\end{center} 
From Eqn \eqref{CPOBCconj.eq} 
\begin{eqnarray}
\h G_2&=&\h B_1 \h Q_2\h B_1^{-1}     \nonumber \\
\h G_3&=& \h B_2\h Q_3\h B_2^{-1}     \nonumber \\
&=& \h B_1\h Q_2\h Q_1^{-1}\h Q_3\h Q_1\h Q_2^{-1}\h B_1^{-1}
\end{eqnarray}
Putting these together, we get
\begin{eqnarray}
 \h B_1\h Q_2\h Q_1^{-1}\h Q_3\h Q_1\h Q_2^{-1}\h B_1^{-1}\h Q_1^{-1}\h B_1 \h Q_2\h B_1^{-1}&=&\h B_1 \h Q_2\h B_1^{-1}\h Q_1^{-1}\h B_1\h Q_2\h Q_1^{-1}\h Q_3\h Q_1\h Q_2^{-1}\h B_1^{-1}\nonumber \\
 \Rightarrow \h Q_3\h Q_1\h Q_2^{-1}\h Q_1^{-1}\h Q_2&=&\h Q_2\h
                                                         Q_1^{-1}\h
                                                         Q_3\hQ_1\hQ_2^{-1}. 
\end{eqnarray}
Next, using $\h Q_3\h Q_1^{-1}\h Q_2=\h Q_2\h Q_1^{-1}\hQ_3$
(Eqn. \eqref{gregrels.eq} ), we get  
\begin{eqnarray} 
\hQ_1\hQ_2^{-1}\hQ_1^{-1}\hQ_2 &=& \hQ_1^{-1}\hQ_2\hQ_1\hQ_2^{-1}
                                   \nonumber \\ 
\Rightarrow [\h Q_1\h Q_2^{-1},\h Q_1^{-1}\h Q_2]&=&0 \label{qoneqtwo.eq}
\end{eqnarray} 
\end{proof}

While there are several of these identities,  none of these imply that
the $\hQ_i$ themselves commute.  

As before let us begin with the transition operators from the
$n$-antichain $\achain_n$.  For the timid transition, 
\begin{equation}
\h A_n^{(n)}= \sum_{\ell=0}^n (-1)^{\ell} \binom{n}{\ell}
\biggl(\id - \hQ_{n-\ell}\biggr) \hQ_n \hQ_{n-\ell}^{-1}.   \label{timidachaincbc.eq}
\end{equation} 
Using CPOBC, moreover, the non-timid transition operator 
$\hA_n^{(k)}$  from $\achain_n$ with $k$ precursor elements is Bell
related to $\h A_k^{(k)}$. Using 
\begin{equation}
\hA_n^{(k)}\hQ_k=\h A_k^{(k)}\hQ_n 
\end{equation} 
and Eqn. \eqref{timidachaincbc.eq}
\begin{eqnarray} 
\hA_n^{(k)} & = &  \sum_{\ell=0}^{k} (-1)^{\ell} \binom{k}{\ell}
\biggl(\id - \hQ_{k-\ell}\biggr) \hQ_k \hQ_{k-\ell}^{-1} \hQ_n
                  \hQ_k^{-1}  \nonumber \\
&=& \sum_{\ell=0}^{k} (-1)^{\ell} \binom{k}{\ell}
\biggl(\id - \hQ_{k-\ell}\biggr) \hQ_n
    \hQ_{k-\ell}^{-1} \label{ankcpobc.eq} 
\end{eqnarray} 
where we have used
\begin{equation}
\hQ_k \hQ_{k-\ell}^{-1} \hQ_n = \hQ_n\hQ_{k-\ell}^{-1} \hQ_k. 
  \end{equation} 

Following a similar set of arguments from
Eqn. \eqref{ntobccommrel.eq}-\eqref{ntobcone.eq}, however, we get the
rather obtuse expression for $m,k<n$ 
\begin{equation}
\hA_n^{(m)}\hA_n^{(k)} \hQ_{n+1} \biggl(\hA_n^{(k)}\biggr)^{-1}
\hQ_n^{-1} \hA_n^{(k)} =\hA_n^{(k)}\hA_n^{(m)} \hQ_{n+1} \biggl(\hA_n^{(m)}\biggr)^{-1}
\hQ_n^{-1} \hA_n^{(m)}.  \label{cpobc.eq}
\end{equation}

Next, we calculate the transition operators  upto
$n=2$ in order to try to find a general expression for the 
transition operators. 
For the transitions from the antichain $\achain_2$ shown in
Fig. \ref{antichaintwo.fig}, we find  using CPOBC for the   Bell related pairs $\{\hA_2^{(1)},
\hQ_2 \}$ and $\{ (\id -\hQ_1), \hQ_1\}$ and subsequently the MSR,
that 
\begin{eqnarray}
\h A_2^{(1)}&=&(\mathbb{I}-\h Q_1) \h Q_2\h Q_1^{-1}=\h Q_2\h Q_1^{-1}-\h Q_1\h Q_2\h Q_1^{-1}\\
\h A_2^{(2)}&=&\mathbb{I}-2\h B_2-\h Q_2 =\mathbb{I}-2 (\mathbb{I}-\h Q_1) \h Q_2\h Q_1^{-1}-\h Q_2. 
\end{eqnarray}
The transition operators from the $2$-element chain  
are shown in Fig. \ref{chaintwo.fig}.  Using GC, to relate $\hA_2^{(1)}$ with $\hG_2$,  and the Bell related
 pairs $\{ \h V_2, \hG_2 \}, \{ \id - \hQ_1, \hQ_1\}$, as well as the
 MSR, we find that   
 \begin{eqnarray}
 \h G_2&=&(\mathbb{I}-\h Q_1)\h Q_2(\mathbb{I}-\h Q_1)^{-1}\\
 \h V_2&=&(\mathbb{I}-\h Q_1)\h G_2\h Q_1^{-1}=(\mathbb{I}-\h Q_1)^2\h Q_2 (\mathbb{I}-\h Q_1)^{-1}\h Q_1^{-1}\\
 \h T_2^b&=&\mathbb{I}-\h X_2-\h G_2 = \mathbb{I}- (\mathbb{I}-\h
             Q_1)\h Q_2(\mathbb{I}-\h Q_1)^{-1}-(\mathbb{I}-\h
             Q_1)^2\h Q_2 (\mathbb{I}-\h Q_1)^{-1}\h Q_1^{-1} 
 \end{eqnarray}
 Note that if  $[ \hQ_1, \hQ_2] \neq 0$ then $\h G_2 \neq \h
 Q_2$. 

 The expressions for $\hA_2^{(1)}$ and $\h A_2^{(2)}$ can be inserted into Eqn. \eqref{cpobc.eq} for
$n=2$ to give
\begin{eqnarray} 
&& (\mathbb{I}-\h Q_1) \h Q_2\h Q_1^{-1} \biggl(\mathbb{I}-2
  (\mathbb{I}-\h Q_1) \h Q_2\h Q_1^{-1}-\h Q_2\biggr) \h Q_3  \biggl(\mathbb{I}-2
  (\mathbb{I}-\h Q_1) \h Q_2\h Q_1^{-1}-\h Q_2\biggr)^{-1} \nonumber
  \\ &\times & \hQ_2^{-1} \biggl(\mathbb{I}-2
  (\mathbb{I}-\h Q_1) \h Q_2\h Q_1^{-1}-\h Q_2\biggr)   \nonumber \\
  &=&  \biggl(\mathbb{I}-2
  (\mathbb{I}-\h Q_1) \h Q_2\h Q_1^{-1}-\h Q_2\biggr) (\mathbb{I}-\h
      Q_1) \h Q_2\h Q_1^{-1} \h Q_3 \h Q_1 \h Q_2^{-1} \nonumber \\ &\times &  (\mathbb{I}-\h Q_1)^{-1} \h Q_2^{-1}
       (\mathbb{I}-\h Q_1) \h Q_2\h Q_1^{-1}, 
  \end{eqnarray} 
which is a fairly elaborate relation between $\hQ_1,\hQ_2$ and
$\hQ_3$, and does not simplify significantly. Thus, despite the many
new relations, we do not find a general form for the transition
operators. 

We note that 
\begin{lemma}
Assume that  $\exists \, \, k$ such that $\forall \, \,  n$, $[\h Q_n,\h
Q_k]=0$, namely that $\hQ_k$ belongs to the center of $\cQ$. Then 
$\cQ$ is commutative, as is  $\cA$. 
\end{lemma}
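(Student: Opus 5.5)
The plan is to use the three-index relation of the Theorem, specialised to the antichain operators, $\hQ_n\hQ_k^{-1}\hQ_m=\hQ_m\hQ_k^{-1}\hQ_n$, valid whenever $k<\min\{m,n\}$. The Corollary, $[\hQ_m\hQ_n^{-1},\hQ_\ell\hQ_k^{-1}]=0$ for distinct indices, removes the ordering restriction. Suppose $\hQ_k$ is central in $\cQ$. Then $\hQ_k^{-1}$ is also central, and it can be pulled out of both sides of the three-index relation. This gives $\hQ_k^{-1}\hQ_n\hQ_m=\hQ_k^{-1}\hQ_m\hQ_n$, and hence $[\hQ_n,\hQ_m]=0$ for all $m,n>k$.

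The remaining cases are those where one index is at most $k$. For these I would use the four-index form rather than the three-index one. Take $\ell=k$ in the Corollary and an arbitrary fourth index $j$ distinct from the others. Centrality of $\hQ_k$ turns $[\hQ_m\hQ_n^{-1},\hQ_j\hQ_k^{-1}]=0$ into $[\hQ_m\hQ_n^{-1},\hQ_j]=0$ for all distinct $m,n,j,k$.

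Choose $j>k$ and use the already-established fact that all $\hQ$'s with index above $k$ commute. Fixing two large indices $m,n>k$ then gives $\hQ_n^{-1}\hQ_m$ commuting with every $\hQ_j$. Next, let $p\le k$ and apply the same identity with $(m,n,j)\to(m,p,n)$, for $m,n>k$ and $p\neq k$. This yields $[\hQ_m\hQ_p^{-1},\hQ_n]=0$. Since $\hQ_m$ and $\hQ_n$ commute, this gives $[\hQ_p^{-1},\hQ_n]=0$, that is, $[\hQ_p,\hQ_n]=0$ for every $p$ and every $n>k$.

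It remains to handle two small indices $p,q<k$. Here I apply the identity with $m=p$, $n=q$ and $j$ large. Together with what is already known, this gives only commutation with large $\hQ$'s, so it does not settle $[\hQ_p,\hQ_q]$ directly. I would therefore return to the original three-index relation with a small index playing the role of the bottom index: $\hQ_m\hQ_p^{-1}\hQ_q=\hQ_q\hQ_p^{-1}\hQ_m$ with $p<q<m$, and $m$ large. Since $\hQ_m$ commutes with $\hQ_p$ and $\hQ_q$, cancelling the invertible $\hQ_m$ leaves $\hQ_p^{-1}\hQ_q=\hQ_q\hQ_p^{-1}$, so $[\hQ_p,\hQ_q]=0$. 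This bookkeeping over index ranges, making sure every needed instance satisfies the distinctness and ordering hypotheses, is the main obstacle. Each case must be routed through an auxiliary large index, as above. Once this is done, $\cQ$ is commutative.

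For $\cA$, I would run the induction used in the NTOBC section. The gregarious operators are $\hG_n=\hS^{\alpha}\hQ_n(\hS^{\alpha})^{-1}$ by Eqn. \eqref{gregcpobc.eq}. The non-timid operators are $\hA_n=\hA_m\hG_n\hG_m^{-1}$ by Eqn. \eqref{nontimidcpobc.eq}, and the timid operators are given by Eqn. \eqref{timidcpobc.eq}. Assume inductively that every transition operator up to stage $n-1$ is a function of $\hQ_1,\dots,\hQ_{n-1}$. Then $\hS^{\alpha}$ is built from the $\hB$'s of lower stage, so it commutes with $\hQ_n$, and hence $\hG_n=\hQ_n$. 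The Bell relation and the MSR then express every stage-$n$ operator through commuting $\hQ$'s. This closes the induction and shows that $\cA$ is commutative.
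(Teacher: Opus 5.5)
Your proof is correct, but for $\cQ$ it takes a longer route than the paper.

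The paper uses only the three-index relation \eqref{GTC3.eq}, in two placements.
\begin{itemize}
\item With the central index $k$ at the bottom, $k<\min\{n,m\}$. This is exactly your first step.
\item With $k$ as one of the \emph{top} indices: $\hQ_k\hQ_n^{-1}\hQ_m=\hQ_m\hQ_n^{-1}\hQ_k$ for $n<\min\{k,m\}$. Centrality lets $\hQ_k$ cancel, giving $[\hQ_n,\hQ_m]=0$ for every pair whose smaller index lies below $k$.
\end{itemize}
These two cases already exhaust all pairs.

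You handle the pairs with an index below $k$ in two stages instead. First you use the four-index Corollary to get the mixed small/large pairs. Then you use the three-index relation with a small bottom index and an auxiliary large index $m$ that you already know commutes with $\hQ_p$ and $\hQ_q$. But the central $\hQ_k$ is itself such an element. Putting it in the top slot in place of $m$ in your last step makes the Corollary detour, and the careful index bookkeeping you flag as the main obstacle, unnecessary.

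Your route buys nothing extra and costs a dependence on the Corollary. That dependence is harmless: the statement you use does hold. Setting $k=1$ in \eqref{GTC3.eq} shows that the $\hQ_i\hQ_1^{-1}$ pairwise commute, and every $\hQ_a\hQ_b^{-1}=(\hQ_a\hQ_1^{-1})(\hQ_b\hQ_1^{-1})^{-1}$ lies in the abelian group they generate. Still, the paper's argument needs only the Theorem.

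Your treatment of $\cA$ is the same as the paper's. You induct using $\hG_n=\hS^{\alpha}\hQ_n(\hS^{\alpha})^{-1}$, with $\hS^{\alpha}$ built from commuting stage-$(n-1)$ operators, so that $\hG_n=\hQ_n$. You then apply Eqn.~\eqref{nontimidcpobc.eq} and the MSR. Your version is, if anything, spelled out slightly more explicitly than the paper's.
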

\begin{proof}
Assume that there exists a $k\in \mathbb{N}$ exists such that $[Q_n,Q_k]=0$
$\forall n\in \mathbb{N}$. If $n<\min \{k,m\}$, from Eqn \ref{GTC3.eq} 
\begin{eqnarray}
\h Q_k\h Q_n^{-1}\h Q_m&=&\hQ_m\hQ_n^{-1}\hQ_k               \nonumber \\
\Rightarrow \hQ_n^{-1}\hQ_m&=&\hQ_m\hQ_n^{-1}  \nonumber \\
\Rightarrow [\hQ_n,\hQ_m]&=&0
\end{eqnarray}
Also, if $k<\min{n,m}$, 
\begin{eqnarray}
\hQ_n\hQ_k^{-1}\hQ_m&=&\hQ_m\hQ_k^{-1}\hQ_n   \nonumber\\
\Rightarrow \hQ_n\hQ_m&=&\hQ_m\hQ_n  \nonumber  \\
\Rightarrow [\hQ_n,\hQ_m]&=&0. 
\end{eqnarray}
This covers all possibilities and hence we see that  $[\hQ_n,\hQ_m]=0$
for all $n,m$.

From Eqn. \eqref{nontimidcpobc.eq} every non-timid transition operator is given
in terms of the gregarious transition at stage $n$ as well as a 
corresponding timid transition at stage $m$,
\begin{equation}
  \hA_n=\hT_m\hG_n \hG_m^{-1}
\end{equation} 
In turn,  $\hT_m$ is given in terms of $\hG_k$'s, $k \leq m$  (Eqn. \eqref{timidcpobc.eq})
and the $\hG_n$'s themselves are conjugate to $\hQ_n$ via a string of
$(n-1)$ stage operators, Eqn. \eqref{gregcpobc.eq}.  
 
At stage $n=2$ we see that for commuting $\hQ_i$,
$\hG_2=\hQ_2$ and moreover, that all the transition operators simplify
to that of NTOBC.  Hence we can again assume the general form of the
transition operator as we did for NTOBC and show, using induction,
that the algebra is commutative.

\end{proof}
In  Appendix \ref{appendix.app} explore this algebra further, going
upto stage $n=3$, but again, there is no general form that we
find. Instead, we use a $d=2$
representation of the $\h Q_i$'s  using Pauli matrices $\sigma_i$, to
check if they generate the algebra. We find that given the assumption
of invertibility for each of the transition operators, this choice of
$\h Q_i$'s does not generate the CPOBC algebra.

\section{Discussion}
\label{discussion.sec}

The decoherence functional for two cylinder sets is given by 
\begin{equation} 
  D(\cyl{c_n},\cyl{c_m})=\bra{\Omega}\h A_1^\ast \h A_2^\ast \ldots \h
  A^\ast_{n-1} \h B_{m-1} \ldots \h B_2 \h B_1
  \ket{\Omega} \equiv \bra{\Omega} \h D(\gamma(c_n),\gamma(c_m))\ket{\Omega}
\end{equation}
where $\gamma(c_n)$ denotes the path in $\cP$ from $c_1$ to $c_n$. In unitary quantum mechanics, the operators $\h A_k$ are 
projection operators $\hA_k=\ket{x_k}\bra{x_k}$, and hence for two
paths $\gamma_1, \gamma_2$,  at the final time step, $T=n-1$,
${A^\ast}_{T}B_{T}= \braket{\gamma_1(T)}{\gamma_2(T)} =\delta(\gamma_1(T)-\gamma_2(T))$. If we are
to mimic such a behaviour in QSG, then it is clear that the commutative $\CSG$ models
will not suffice, since the decoherence functional reduces to the
product form  $\h D(\gamma(c_n),\gamma(c_m)) = \h
A(\gamma(c_n))^\ast \h A(\gamma(c_m))$. Hence one is led to the
larger class of  non-commutative QSG models. 

However, workable non-abelian QSG models seem elusive. 
In this work we have explored various implementations of the QBC and
find that the two most natural choices lead to a commutative
algebra. In the third choice, we find new commutation
relations, but because of the complexity of the algebra, are unable to
find a closed form expression for the general transition
operator. Nevertheless, we show by example 
that the constraints on this algebra are stringent enough to rule out
a Pauli matrix algebra representation. 

Despite these difficulties, we view this work as a first step towards
finding calculable,
non-abelian QSG
models. If such a class of models could be found, then  the next
step would be  to determine if the quantum measure on $\fA$ extends to
its sigma
algebra completion $\fS$, as  required by the
Caratheodary-Hahn-Kluvnek theorem \cite{diestel,sz}. This would give
us a class of fully covariant non-abelian QSG models, from which one
would then need to tease out realistic models.

Apart from operator ordering ambiguities in implementing QBC, one could stipulate that it
isn't the quantum vector measure that should be the same when
implementing GC, but rather only its norm. Thus, one could introduce a
path dependent phase into the definition. Such phases can similarly be
introduced into the QBC relations.  While we have explored these
options, apart from trivial choices, the ensuing dynamics becomes
rapidly more complex. Moreover, if the transition operators are allowed to be singular, then only two
of the three types  of QBC  can be used. In these cases, however, the  resulting dynamics
generates newer unconstrained  parameters at every stage, signalling a
form of unpredictability. We leave further explorations of these
generalisations to the future.

We end by speculating on the possibility that 
there may be  other ways to obtain QSG models,  without resorting to
local rules on the algebra as we have done in this work, but rather
invoking more global algebraic structures as in AQFT.

\section{Acknowledgements} 
SS was supported in part by the ANRF MATRICS grant,
MTR/2023/000831 as well as the WISER Indo-German grant,  IGSTC/WISER
25/P124-SS5876-AE5909/64/2025-26. 

\appendix
\section*{Appendix}
\label{appendix.app}

In this appendix we explore the CPOBC algebra further. We find $n=3$ and $n=4$ transitions which provide a non-trivial
example of the relation Eqn. \eqref{salphasbeta.eq}. This gives us 
two different paths $S_1, S_2$ corresponding to different
atomisations  of the same causal
set. By themselves these do not lead to useful simplifications of the
algebra. Rather than look at the full abstract algebra for more
relations, we attempt to find a simple non-trivial representation
of $\cA$ starting from a representation of $\cQ$. We find that while
the $\cQ$ can be represented as a Pauli matrix algebra, it is no
longer a subalgebra of $\cA$.

We begin by calculating the transition operators from (a)
$\achain_3$: 
\begin{figure}[H]
\centering
 \begin{tikzpicture}[-stealth]
  \def\ystep{2.6cm}
  \def\xstep{0.5cm}
  \begin{scope}[nodes={draw, thin, circle, minimum size=0.5cm}]
 \node (C0) at ( 0, 0) {\pcauset{3,2,1}};
 \node (C3421) at (-4*\xstep, 1*\ystep) {\pcauset{3,4,2,1}};
 \node (C3241) at ( 4*\xstep, 1*\ystep) {\pcauset{3,2,4,1}};
 \node (C3214) at (-8*\xstep,\ystep) {\pcauset{3,2,1,4}};
 \node (C4321) at (8*\xstep,\ystep) {\pcauset{4,3,2,1}};

 \end{scope}
\begin{scope}[prob arrow]
  \drawprobrighttop{C0}{\,\h A_3^{(1)}}{C3421}
 \drawproblefttop{C0}{\h A_3^{(2)}}{C3241}
  \drawprobright{C0}{\quad\hQ_3}{C4321}
  \drawprobleft{C0}{\h C_3\quad }{C3214}
 
  \end{scope}
 \end{tikzpicture}
\end{figure}
which are (Eqn. \eqref{ankcpobc.eq}) 
\begin{eqnarray}
\h A_3^{(1)}&=&\h A_2^{(1)}\hQ_3\hQ_2^{-1}\\
\h A_3^{(2)}&=&\h A_2^{(2)}\hQ_3\hQ_2^{-1}\\
\h A_3^{(3)}&=& \mathbb{I}-3\h A_3^{(1)}-3\h A_3^{(2)}-\hQ_3. 
\end{eqnarray}
and from  (b) $\pcauset{3,1,2}$:  
\begin{figure}[H]
\centering
 \begin{tikzpicture}[-stealth]
  \def\ystep{2.6cm}
  \def\xstep{0.75cm}
  \begin{scope}[nodes={draw, thin, circle, minimum size=0.5cm}]
 \node (C0) at ( 0, 0) {\pcauset{2,3,1}};
 \node (C2314) at (-7.5*\xstep, 1*\ystep) {\pcauset{2,3,1,4}};
 \node (C2413) at (-4.5 *\xstep, 1*\ystep) {\pcauset{2,4,1,3}};
 \node (C2341) at (-1.5*\xstep,\ystep) {\pcauset{2,3,4,1}};
 \node (C3412) at (1.5*\xstep,\ystep) {\pcauset{3,4,1,2}};
 \node (C2431) at (4.5*\xstep,\ystep) {\pcauset{2,4,3,1}};
 \node (C3421) at (7.5*\xstep,\ystep) {\pcauset{3,4,2,1}};

 \end{scope}
\begin{scope}[prob arrow]
  \drawprobleft{C0}{\h H_3}{C2314}
 \drawproblefttop{C0}{\h\Lambda_3^b\,}{C2413}
 \drawproblefttop{C0}{\h F_3}{C2341}
 \drawprobrighttop{C0}{\h E_3}{C3412}
  \drawprobrighttop{C0}{\h E_3'}{C2431}
   \drawprobright{C0}{\h D_3}{C3421}
    \end{scope} 
 \end{tikzpicture}
\end{figure}
\begin{eqnarray}
\h D_3&=&\h A_2^{(1)}\hQ_3(\h A_2^{(1)})^{-1}\\
\h E_3&=&\h E_3'=(\mathbb{I}-\hQ_1)\h D_3\hQ_1^{-1}\\
\h F_3&=&\h T_2^b\h D_3(\mathbb{I}-\hQ_1)\hQ_2^{-1}(\mathbb{I}-\hQ_1)^{-1}\\
\h\Lambda_3^b&=& \h A_2^{(2)} \h D_3\hQ_2^{-1}\\
\h H_3&=&\mathbb{I}-\h\Lambda_3^b-\h F_3-2\h E_3-\h D_3, 
\end{eqnarray}
where we have used MSR, GC and CPOBC.

Next, consider the causal set  \pcauset{2,4,1,3} and the gregarious
transition $\h G_4:  \pcauset{2,4,1,3} \rightarrow \pcauset{5,2,4,1,3}
$. There  two possible atomisation-decimation diagrams associated with
\pcauset{2,4,1,3}. The first is:  
\begin{figure}[H]
\centering
 \begin{tikzpicture}[-stealth]
  \def\ystep{1.5cm}
  \def\xstep{0.5cm}
  \begin{scope}[nodes={draw, thin, circle, minimum size=0.5cm}]
\node (C321) at (2,2) {\pcauset{3,2,1}};
    \node (C21) at (0,0) {\pcauset{2,1}};
\node (C213) at (-2,2) {\pcauset{2,1,3}};
 \node (C2413) at (-4,4) {\pcauset{2,4,1,3}};
 \node (C4213) at ( 0,4) {\pcauset{4,2,1,3}};
 \node (C52413) at (-2,6) {\pcauset{5,2,4,1,3}};
 \node (C54213) at (2,6) {\pcauset{5,4,2,1,3}};
 \node (C4321) at (4,4){\pcauset{4,3,2,1}};
 \node (C54321) at (6,6){\pcauset{5,4,3,2,1}};
 \end{scope}
\begin{scope}[prob arrow]
\drawprobarrow{C21}{\h Q_2}{C321}

\drawprobarrow{C21}{\h A_2^{(2)}}{C213}
\drawprobarrow{C321}{\h A_3^{(2)}}{C4213}
  \drawprobarrow{C4213}{\h B_4}{C52413}
 \drawprobarrow{C213}{\h G_3}{C4213}
  \drawprobarrow{C2413}{\h G_4}{C52413}
  \drawprobarrow{C213}{\hat{B}_3}{C2413}
 \drawprobarrow{C321}{\hQ_3}{C4321}
 \drawprobarrow{C4321}{\h Q_4}{C54321}
 \drawprobarrow{C4213}{\h G_4'}{C54213}
 \drawprobarrow{C4321}{\h A_4^{(2)}}{C54213}
  \end{scope}
 \end{tikzpicture}
\end{figure}
From this, we see that for this atomisation, $\h G_4=\h S_1\hQ_4\h S_1^{-1}$ with 
\begin{equation} 
\h S_1=\h B_3\h A_3^{(2)}.
\end{equation}
 Using $\h G_3= \h A_2^{(2)}\hQ_3(\h A_2^{(2)})^{-1}$ and the Bell
 related pairs $\{\h B_3, \h G_3\}$ and  $\{\id -\hQ_1, \h Q_1\}$, 
\begin{equation}
  \h S_1= (\id -\h Q_1 ) \h A_2^{(2)}\hQ_3(\h
  A_2^{(2)})^{-1}\hQ_1^{-1}\h A_3^{(2)}.
\end{equation} 
The second atomisation-decimation diagram of \pcauset{2,4,1,3} is: 
\begin{figure}[H]
\centering
 \begin{tikzpicture}[-stealth]
  \def\ystep{1.5cm}
  \def\xstep{0.5cm}
  \begin{scope}[nodes={draw, thin, circle, minimum size=0.5cm}]
\node (C321) at (2,2) {\pcauset{3,2,1}};
    \node (C21) at (0,0) {\pcauset{2,1}};
\node (C213) at (-2,2) {\pcauset{2,3,1}};
 \node (C2413) at (-4,4) {\pcauset{2,4,1,3}};
 \node (C4213) at ( 0,4) {\pcauset{4,3,1,2}};
 \node (C52413) at (-2,6) {\pcauset{5,2,4,1,3}};
 \node (C54213) at (2,6) {\pcauset{5,4,3,1,2}};
 \node (C4321) at (4,4){\pcauset{4,3,2,1}};
 \node (C54321) at (6,6){\pcauset{5,4,3,2,1}};
 \end{scope}
\begin{scope}[prob arrow]
\drawprobarrow{C21}{\h Q_2}{C321}

\drawprobarrow{C21}{\h A_2^{(1)}}{C213}
\drawprobarrow{C321}{\h A_3^{(1)}}{C4213}
  \drawprobarrow{C4213}{\h \Lambda_4^b}{C52413}
 \drawprobarrow{C213}{\h G_3'}{C4213}
  \drawprobarrow{C2413}{\h{ G_4}}{C52413}
  \drawprobarrow{C213}{\h \Lambda_3^b}{C2413}
 \drawprobarrow{C321}{\hQ_3}{C4321}
 \drawprobarrow{C4321}{\h Q_4}{C54321}
 \drawprobarrow{C4213}{\h {\tilde{ G_4'}}}{C54213}
 \drawprobarrow{C4321}{\h A_4^{(1)}}{C54213}
  \end{scope}
 \end{tikzpicture}
\end{figure}
From this we see that  $\h{G_4}=\h S_2 \h Q_4\h S_2^{-1}$ where
\begin{equation}
  \h S_2=\h  \Lambda_3^b\h A_3^{(1)}.
\end{equation}
Using $\h G'_3=\hA_2^{(1)}\hQ_3 (\hA_2^{(1)})^{-1}$,
and the Bell related pairs $\{ \h  \Lambda_3^b, \h G'_3\}$ and $\{ \h
A_2^{(2)}, \h Q_2\}$ 
\begin{equation} 
\h \Lambda_3^b= \h A_2^{(2)}\h G_3'\h Q_2^{-1} = \h
A_2^{(2)}\h A_2^{(1)}\h Q_3 (\hA_2^{(1)})^{-1}\h Q_2^{-1}. 
\end{equation}
Hence
\begin{equation} 
\h S_2=  \hA_2^{(2)}\hA_2^{(1)}\h Q_3 (\hA_2^{(1)})^{-1}\h Q_2^{-1} \h
A_2^{(1)}\hQ_3\hQ_2^{-1}.
\end{equation} 
Eqn. \eqref{salphasbeta.eq} then implies that
\begin{equation} [\h S_2^{-1}\h
  S_1,\hQ_4]=0. \label{sonestwo.eq} 
  \end{equation} 
While this is an explicit  realisation of Eqn. \eqref{salphasbeta.eq},
however, it doesn't by itself provide more information about the
algebra.

Instead we will attempt to construct a representation of $\cA$
starting from a simple representation of the subalgebra $\cQ$ in terms
of Pauli matrices, $\hQ_n\propto \sigma_i$.
We first note that it is not possible for $\hQ_n=a_n\sigma_1$,
$\hQ_m=a_m\sigma_2$ and $\h Q_k=a_k \sigma_3$ for any choice of
$n,m,k\in\mathbb{N}$, since 
\begin{eqnarray}
\hQ_n\hQ_k^{-1}\hQ_m&=&\frac{a_na_m}{a_k}\sigma_1\sigma_3\sigma_2\nonumber\\
&=&-\frac{a_na_m}{a_k}\sigma_2\sigma_3\sigma_1\nonumber\\
\hQ_n\hQ_k^{-1}\hQ_m&=&-\hQ_m\hQ_k^{-1}\hQ_n
\end{eqnarray}
which violates Eqn \ref{gregrels.eq}. Thus, we must  choose only two
of the $\sigma$ matrices for the $\hQ_n$'s. Without loss of
generality, we choose those to be $\sigma_1$ and $\sigma_2$.

In light of the relation Eqn. \eqref{sonestwo.eq} we  begin with the choice 
\begin{equation}
  \hQ_1=a_1\sigma_1,  \, \hQ_2=a_2\sigma_2, \, \hQ_3=a_3\sigma_1 \,
  \hQ_4=a_4\sigma_2. 
\end{equation}
This gives us \footnote{The following
  computations were aided by the use of the Pauli Algebra package in
  Mathematica \cite{PAM} (\url{https://github.com/EverettYou/PauliAlgebra.})}
\begin{equation}
\h S_1 =\frac{\left(a_1^2+3\right) a_3^2 \sigma_2}{(a_1-1) (a_1+1) a_2}-\frac{i \left(3 a_1^2+1\right) a_3^2 \sigma_3}{(a_1-1) a_1 (a_1+1) a_2}-\frac{\left(7 a_1^2-3\right) a_3^2 \sigma_1}{(a_1-1) a_1 (a_1+1)}-\frac{\left(3 a_1^4+3 a_1^2-2\right) a_3^2 \mathbb{I}}{(a_1-1) a_1^2 (a_1+1)}
\end{equation}
and 
\begin{eqnarray}
\h S_2&=& \frac{a_3^2 \sigma_1\left(45 a_1^2 a_2^2+3 a_1^2-20 a_2^2\right)}{a_1 \left(9 a_1^2 a_2^2-a_1^2-4 a_2^2\right)}+\frac{a_3^2 \sigma_2\left(9 a_1^2 a_2^2-a_1^2+36 a_2^2\right)}{a_2\left(9 a_1^2 a_2^2-a_1^2-4 a_2^2\right)}\nonumber\\
&&-\frac{i a_3^2 \sigma_3 \left(33 a_1^2 a_2^2-a_1^2+12 a_2^2\right)}{a_1 a_2 \left(9 a_1^2 a_2^2-a_1^2-4 a_2^2\right)}-\frac{a_3^2 \mathbb{I} \left(27 a_1^4 a_2^2-3 a_1^4+6 a_1^2 a_2^2+6 a_1^2-8 a_2^2\right)}{a_1^2 \left(9 a_1^2 a_2^2-a_1^2-4 a_2^2\right)}
\end{eqnarray}
Since these operators are required to be invertible, 
\begin{eqnarray}
\det \h S_1&=&\frac{(a_1-1)^2 (a_1+1)^2 a_3^8 \left(9 a_1^2
               a_2^2-a_1^2-4 a_2^2\right)^2}{a_1^8 a_2^4} \neq 0 \nonumber\\
\det \h S_2&=&\frac{(a_1-1)^2 (a_1)^2 a_3^8 \left(9 a_1^2
               a_2^2-a_1^2-4 a_2^2\right)^2}{a_1^8 a_2^4} \neq 0. 
\end{eqnarray}
Next, using 
\begin{eqnarray}
\h  S_1^{-1}\h S_2&=&-\frac{2 a_1^2 a_1 \sigma_2\left(207 a_1^4 a_2^2+a_1^4+25 a_1^2 a_2^2-5 a_1^2-52 a_2^2\right)}{(a_1-1)^2 (a_1+1)^2 \left(9 a_1^2 a_2^2-a_1^2-4 a_2^2\right)^2}\nonumber\\ 
&&+\frac{2 i a_1a_2\sigma_3 \left(63 a_1^6 a_2^2+a_1^6+215 a_1^4 a_2^2-3 a_1^4-90 a_1^2 a_2^2-2 a_1^2-8 a_2^2\right)}{(a_1-1)^2 (a_1+1)^2 \left(9 a_1^2 a_2^2-a_1^2-4 a_2^2\right)^2}\nonumber\\
&&-\frac{2 a_1 \sigma_1 \left(162 a_1^6 a_2^4-3 a_1^6 a_2^2+a_1^6+18 a_1^4 a_2^4+27 a_1^4 a_2^2-a_1^4-112 a_1^2 a_2^4-12 a_1^2 a_2^2+32 a_2^4\right)}{(a_1-1)^2 (a_1+1)^2 \left(9 a_1^2 a_2^2-a_1^2-4 a_2^2\right)^2}\nonumber\\
&&+\frac{\mathbb{I} \left(81 a_1^8 a_2^4-18 a_1^8 a_2^2+a_1^8+414 a_1^6 a_2^4+66 a_1^6 a_2^2-335 a_1^4 a_2^4-24 a_1^4 a_2^2-a_1^4+24 a_1^2 a_2^4+16 a_2^4\right)}{(a_1-1)^2 (a_1+1)^2 \left(9 a_1^2 a_2^2-a_1^2-4 a_2^2\right)^2}\nonumber\\
&&
\end{eqnarray}
we find that Eqn. \eqref{sonestwo.eq} is satisfied only  if
$a_1\in\{0,1,-1\}$. If $a_1=\pm 1$, then $\mathbb{I}-\hQ_1$ is not
invertible and if $a_1=0$, $\hQ_1$ is not invertible. Thus there are
no solutions. 

Similarly, if instead we pick $\hQ_4=a_4\sigma_2$ we again find no
solutions.  The following is the complete list of cases that we have
explicitly checked and found that none of them 
satisfy Eqn. \eqref{sonestwo.eq}. 
\begin{enumerate}
\item $\hQ_1=a_1\sigma_1 $, $\hQ_2=a_2\sigma_2$, $\hQ_3=a_3\sigma_1$ and $\hQ_4=a_4\sigma_2$
\item $\hQ_1=a_1\sigma_1 $, $\hQ_2=a_2\sigma_1$, $\hQ_3=a_3\sigma_2$ and $\hQ_4=a_4\sigma_1$
\item $\hQ_1=a_1\sigma_1 $, $\hQ_2=a_2\sigma_2$, $\hQ_3=a_3\sigma_2$ and $\hQ_4=a_4\sigma_2$
\item  $\hQ_1=a_1\sigma_1 $, $\hQ_2=a_2\sigma_1$, $\hQ_3=a_3\sigma_2$ and $\hQ_4=a_4\sigma_2$
\end{enumerate} 
Thus, we have found that a Pauli matrix representation of the
generators of the $\cQ$ subalgebra does
 not  extend to a representation of the full algebra $\cA$.

\bibliography{SSReferences} 
 \bibliographystyle{ieeetr} 
\end{document}